\documentclass[11 pt, journal, draftcls, onecolumn]{IEEEtran}
\IEEEoverridecommandlockouts

\usepackage{amsmath, mathrsfs, graphicx,epsfig,amssymb, amsfonts, color, subfigure}
\usepackage[center]{caption}

\newtheorem{lemma}{\bf Lemma}

\newtheorem{defn}{\bf Definition}
\newtheorem{thm}{\bf Theorem}

\newtheorem{proof}{\bf Proof}

\newtheorem{rem}{\bf Remark}
\newtheorem{cor}{\bf Corollary}
\newtheorem{ex}{\bf Example}

\begin{document}

\title{\huge{The Generalized Degrees of Freedom Region of the MIMO Interference Channel}}

\author{{Sanjay Karmakar ~~~~ Mahesh K.~Varanasi}
\thanks{This work was supported in part by NSF Grant CCF-0728955.
The authors are with the Department of Electrical, Computer, and
Energy Engineering, University of Colorado, Boulder, CO 80309-0425
USA (e-mail: {sanjay.karmakar, varanasi}@colorado.edu).} }

\maketitle

\begin{abstract}
The generalized degrees of freedom (GDoF) region of the MIMO Gaussian interference channel (IC) is obtained for the general case of an arbitrary number of antennas at each node and where the signal-to-noise ratios (SNR) and interference-to-noise ratios (INR) vary with arbitrary exponents to a nominal SNR. The GDoF region reveals various insights through the joint dependence of optimal interference management techniques (at high SNR) on the SNR exponents that determine the relative strengths of direct-link SNRs and cross-link INRs and the numbers of antennas at the four terminals. For instance, it permits an in-depth look at the issue of rate-splitting and partial decoding and it reveals that, unlike in the scalar IC, treating interference as noise is not always GDoF-optimal even in the very weak interference regime. Moreover, while the DoF-optimal strategy that relies just on transmit/receive zero-forcing beamforming and time-sharing is not GDoF optimal (and thus has an unbounded gap to capacity), the precise characterization of the very strong interference regime -- where single-user DoF performance can be achieved simultaneously for both users-- depends on the relative numbers of antennas at the four terminals and thus deviates from what it is in the SISO case. For asymmetric numbers of antennas at the four nodes the shape of the symmetric GDoF curve can be a ``distorted W" curve to the extent that for certain MIMO ICs it is a ``V" curve.
\end{abstract}

\IEEEpeerreviewmaketitle

\newpage

\section{Introduction}

In spite of research spanning over three decades, the capacity of the interference channel (IC) has been characterized only for some special cases (cf. \cite{Carleial75,Benzel,Sato,Gamal_Costa,Motahari_Khandani,Shang2009,Annapureddy_Veeravalli,SCKP}) and remains an open problem to date for the general case. However, the understanding of the Gaussian interference channel has been significantly enriched through those works and through capacity approximations in recent years. In particular, the capacity region was characterized to within one bit for the single-antenna (SISO) IC \cite{ETW1} and to within a constant number of bits for the more general case of MIMO Gaussian ICs with an arbitrary number of antennas at each node in \cite{TT,Sanjay_Varanasi_Cap_MIMO_IC_const_gap}. Moreover, the degrees of freedom (DoF) region of the general MIMO IC was obtained in \cite{JFak}.

The constant gap capacity approximations of \cite{ETW1} and \cite{Sanjay_Varanasi_Cap_MIMO_IC_const_gap} provide performance guarantees to within a constant number of bits on SISO and MIMO ICs, respectively, by showing that simple Han-Kobayashi (HK) \cite{Han_Kobayashi} coding schemes with Gaussian inputs and no time-sharing can achieve the capacity region to within a constant number of bits, with the constant being independent of the SNRs, INRs and the channel matrices. The capacity approximation through the DoF characterization of \cite{JFak} shows that simple transmit beamforming and receive zero-forcing is sufficient to preserve the DoF optimality of the MIMO IC.  While in the DoF characterization, it is assumed that the SNRs and interference-to-noise ratios (INRs) at each receiver scale in a similar way, {\em asymmetric} scaling of the SNRs and INRs on the {\em dB scale} has been shown to provide more insight about optimal interference management at high SNR as a function of that asymmetry. This phenomenon was captured succinctly through the so-called {\em generalized degrees of freedom} metric in the context of the SISO IC in \cite{ETW1}. This work obtains the GDoF region for the general MIMO IC, thereby generalizing the same result for the SISO IC obtained in \cite{ETW1} and the usual DoF region for general MIMO IC obtained in \cite{JFak}.

The GDoF region metric, as its name suggests, generalizes the notion of the conventional degrees of freedom (DoF) region metric by additionally emphasizing the signal level as a signaling dimension. It therefore characterizes the simultaneously accessible fractions of spatial and signal-level dimensions (per channel use) by the two users in the limit of high signal-to-noise ratio (SNR) while the ratios of the SNRs and INRs relative to a reference SNR, each expressed in the dB scale, are held constant, with each constant taken, in the most general case, to be arbitrary. The GDoF region was obtained for the SISO IC in \cite{ETW1} based on the constant gap to capacity result found therein. The symmetric GDoF, $d_{\rm sym} (\alpha) $, which is the maximum common GDoF achievable by each of the two users, for the symmetric SISO IC with equal SNRs and equal INRs for the two users, i.e, with ${\rm INR} = {\rm SNR}^\alpha $ was evaluated in \cite{ETW1} to be the well-known ``W" curve. The W-curve clearly delineates the very weak, weak, moderate, strong and very strong interference regimes, depending on the value of $\alpha$, pointing to the optimal (upto GDoF accuracy) interference management techniques as a function of the severity or mildness of the interference.

There have been several other recent works on characterizing the GDoF of various channels. For example, in \cite{PBT}, the symmetric GDoF of a class of symmetric MIMO ICs -- for which the SNRs at each receiver are the same and the INRs at each receiver are also the same, with ${\rm INR} = {\rm SNR}^\alpha $-- and where both transmitters have $M$ antennas and both receivers have $N$ antennas, with
the restriction $N \geq M$, was obtained and found to be a ``W" curve also. In \cite{Jafar_Vishwanath_GDOF}, the symmetric GDoF in the perfectly symmetric (with all direct links having identical gains and all cross links having identical gains) scalar $K$-user interference network was found (see also \cite{Bandemer-Cyclic}). In \cite{Gou_Jafar}, the symmetric GDoF was obtained for the $(N+1)$-user symmetric SIMO IC with $N$ antennas at each receiver and with equal direct link SNRs and equal cross link INRs. The symmetric GDoF of a symmetric model of the scalar X-channel with real-valued channel coefficients was found in \cite{Huang_Cadambe_Jafar}.

In this work, we obtain the GDoF region of the general MIMO IC with an arbitrary number of antennas at each node and in the most general case where the signal-to-noise ratios (SNR) and interference-to-noise ratios (INRs) vary with arbitrary exponents to a nominal SNR. This result is made possible by the recent constant gap to capacity characterization for the general MIMO IC in \cite{TT,Sanjay_Varanasi_Cap_MIMO_IC_const_gap}. The GDoF result of this paper thus generalizes the GDoF region of the SISO IC found in \cite{ETW1} to the MIMO IC. It also recovers the symmetric GDoF result of \cite{PBT} for the class of symmetric MIMO ICs considered therein. Moreover, the single and unified constant-gap-to-capacity achievability scheme of \cite{Sanjay_Varanasi_Cap_MIMO_IC_const_gap} considered here, unlike that in \cite{PBT}, does not require the restriction on the numbers of antennas at the different nodes or on the values of SNR exponents and is GDoF optimal in the most general case. The main result of this work also recovers the conventional DoF region result obtained in \cite{JFak} for the MIMO IC by setting all SNR exponents to unity. In addition to providing several insights that include whatever is common between certain symmetric (in numbers of antennas) MIMO ICs and SISO ICs and what is not, the GDoF result of this paper gives rise to new insights into optimal signaling strategies that make jointly optimal use of the available spatial and signal level dimensions.

The single, unified achievable scheme studied in depth here that is GDoF optimal (and indeed constant-gap-to-capacity optimal \cite{Sanjay_Varanasi_Cap_MIMO_IC_const_gap}) is a simple Han-Kobayashi coding scheme with mutually independent Gaussian input for the private and public messages of each user without time-sharing. The private and public message can be thought of consisting of several information streams. The private information streams are either directed along the null space of the corresponding cross-link channel matrix or transmitted at power levels that ensure that they reach the unintended receiver below the noise floor. Such a scheme therefore jointly and optimally employs both {\em signal-level interference alignment} \cite{Bresler-parekh-tse} as well as {\em transmit beamforming or signal-space interference alignment} \cite{Cadambe-Jafar-IA} techniques. For a given DoF tuple in the GDoF region of the channel, we also explicitly specify the DoFs carried by the private and public messages of each user.

The rest of the paper is organized as follows. In Section~\ref{sec_channel_model_and_preliminaries} we describe the channel model and the GDoF optimal coding scheme. Section \ref{sec_main_result} contains the main result of this paper, namely, the GDoF region of the general MIMO IC. Specializations of this result to the SISO IC and to the DoF region of the MIMO IC are also given in Section \ref{sec_main_result} which recover the results of \cite{ETW1} and \cite{JFak}, respectively. Explicit specifications of the DoF-splitting between private and public sub-messages are obtained. The reciprocity property of the GDoF region (which denotes the invariability of GDoF with respect to direction of information flow) is described in \ref{sec:reciprocity-of-GDoF-region} as are specializations of the GDoF region to obtain the symmetric GDoF of the symmetric $(M,N,M,N) $ MIMO IC, thereby recovering as a special case the result of \cite{PBT} for $M \leq N$. Section~\ref{sec:explicit-examples} gives in-depth descriptions,  through 2 specific examples of weak and mixed interference MIMO ICs, of how different DoF tuples in the GDoF region are achieved through the joint specification of the strategies at the transmitters and the receivers. In Section~\ref{sec_insights} several novel insights revealed by the GDoF analysis are given. Section~\ref{sec_conclusion} concludes the paper.

\begin{proof}[Notations]
Let $\mathbb{C}$ and $\mathbb{R}^+$ represent the field of complex numbers and the set of non-negative real numbers, respectively. An $n\times m$ matrix with entries coming from $\mathbb{C}$ will be denoted by $A\in \mathbb{C}^{n\times m}$ and its entry in the $i^{th}$ row and $j^{th}$ column will be denoted by $[A]_{ij}$. We shall denote the transpose and the conjugate transpose of the matrix $A$ by $A^T$ and $A^{\dagger}$ respectively. $I_n$ represents the $n\times n$ identity matrix, $0_{m\times n}$ represents an all zero $m\times n$ matrix and $\mathbb{U}^{n\times n}$ represents the set of $n\times n$ unitary matrices. The $k^{th}$ column (row) of the matrix $A$ will be denoted by $A^{[k]}$ ($A^{(k)}$) whereas $A^{[k_1:k_2]}$ ($A^{(k_1:k_2)}$) will represent a matrix whose columns (rows) are same as the $k_1^{th}$ to $k_2^{th}$ columns (rows) of matrix $A$. If $x^{(k)}\in \mathbb{C}, \forall~ 1\leq k\leq n$, then $\mathbf{x}\triangleq [x^{(1)}, x^{(2)},\cdots , x^{(n)}]^T$. $\{A,B,C,D\}$ will represent an ordered set of matrices. $I(x;y)$ and $I(x;y|z)$ will represent the mutual information and conditional mutual information of the arguments, respectively. $(x\land y)$, $(x\vee y)$ and $(x)^+$ represents the minimum and maximum between $x$ and $y$ and maximum between $x$ and $0$, respectively. We also use Landau notations for error terms in approximations. $o(1)$ denotes a term which goes to zero asymptotically and $\mathcal{O}(1)$ denotes a term which is bounded above by some constant. We say $x$ is of the order of $y$ if $\lim_{y\to \infty} \frac{x}{y}=0$. All the logarithms in this paper are with base $2$. We denote the distribution of a complex circularly symmetric Gaussian random vector with zero mean and covariance matrix $Q$, by $\mathcal{CN}(0,Q)$. Finally, the indicator function $1(\textrm{S})$ is defined as follows
\begin{equation*}
1(\textrm{S})=\left\{\begin{array}{c}
1, ~~\textrm{if S is true};\\
0, ~~\textrm{if S is false}.
\end{array}\right.
\end{equation*}
\end{proof}

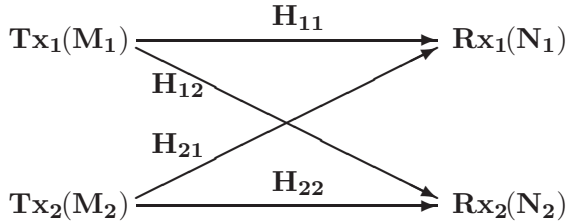
\begin{figure}[!thb]
\setlength{\unitlength}{1mm}
\begin{picture}(80,40)
\thicklines
\put(60,10){\vector(1,0){40}}
\put(60,32){\vector(1,0){40}}
\put(60,11){\vector(2,1){40}}
\put(60,31){\vector(2,-1){40}}
\put(50,9){$\mathbf{(M_2)}$}
\put(109,9){$\mathbf{(N_2)}$}
\put(50,31){$\mathbf{(M_1)}$}
\put(109,31){$\mathbf{(N_1)}$}

\put(43,9){$\mathbf{Tx_2}$}
\put(102,9){$\mathbf{Rx_2}$}
\put(43,31){$\mathbf{Tx_1}$}
\put(102,31){$\mathbf{Rx_1}$}

\put(78,12){$\mathbf{H_{22}}$}
\put(78,34){$\mathbf{H_{11}}$}
\put(62,17){$\mathbf{H_{21}}$}
\put(62,25){$\mathbf{H_{12}}$}
\end{picture}
\caption{The $(M_1,N_1,M_2,N_2)$ MIMO IC.}
\label{channel_model_two_user_IC}
\end{figure}

\section{Channel Model and preliminaries }
\label{sec_channel_model_and_preliminaries}

In this section, we define the two-user MIMO IC, its capacity region and the generalized degrees of freedom region and state the upper and lower bounds on the capacity region that are within a constant gap of each other as obtained in \cite{Sanjay_Varanasi_Cap_MIMO_IC_const_gap} for the sake of completeness. We describe the achievable scheme of \cite{Sanjay_Varanasi_Cap_MIMO_IC_const_gap} and then give asymptotic (high SNR) approximations upto $O(1)$ of key quantities that arise in the bounds on capacity region. These approximations are used later to derive the main result of this paper on the GDoF region of the MIMO IC.

\subsection{The MIMO IC}

The 2-user MIMO IC, with $M_i$ and $N_i$ antennas at transmitter $i$ ($Tx_i$) and receiver $i$ ($Rx_i$), respectively, for $i=1,~2$ as shown in Fig. \ref{channel_model_two_user_IC} (hereafter referred to as $(M_1,N_1,M_2,N_2)$ IC) is considered. We consider a time-invariant or fixed channel where the channel matrices, $H_{ij}$'s remain fixed for the entire duration of communication. We also assume that the entries of these matrices are drawn i.i.d. from a continuous and unitarily invariant~\cite{Tulino_Verdu} distribution, i.e., $UH_{ij}V$ is identically distributed to $H_{ij}$ for any $U\in \mathbb{U}^{N_j\times N_j}$ and $V\in \mathbb{U}^{M_i\times M_i}$ which ensures that the channel matrices are full rank with probability one (w.p.1). This class of distributions will be denoted by $\mathcal{P}$ in the rest of the paper. In the channel model, we also incorporate a real-valued path-loss or attenuation factor, denoted as $\eta_{ij}$, for the signal transmitted from $Tx_i$ to receiver $Rx_j$. At time $t$, $Tx_i$ chooses a vector ${X}_{it}\in \mathbb{C}^{M_i\times 1}$ and sends $\sqrt{P_i}{X}_{it}$ over the channel, where we assume the following average input power constraint at $Tx_i$,
\begin{equation}
\label{power_constraint}
\frac{1}{n}\sum_{t=1}^{n}\textrm{Tr} (Q_{it}) ~\leq ~1,
\end{equation}
for $ i \in \{ 1,2 \}$, where $ Q_{it}=\mathbb{E}({X}_{it}{X}_{it}^{\dagger})$ and $Q_{it}$'s can depend on the channel matrices. The received signals at time $t$ can be written as
\begin{eqnarray}
\label{system_eq_two_user_IC2}
Y_{1t}=\sqrt{\rho_{11}}H_{11}X_{1t}+ \sqrt{\rho_{21}}H_{21}X_{2t}+Z_{1t},\\
Y_{2t}= \sqrt{\rho_{22}}H_{22}{X}_{2t}+\sqrt{\rho_{12}} H_{12}X_{1t}+Z_{2t},
\end{eqnarray}
where $Z_{it}\in\mathbb{C}^{N_i\times 1}$ are i.i.d $\mathcal{CN}(\mathbf{0}, I_{N_i})$ across $i$ and $t$, $\rho_{ii}=\eta_{ii}\sqrt{P_i}$ represents the signal-to-noise ratio (SNR) at receiver $i$ and $\rho_{ij}=\eta_{ij}\sqrt{P_i}$ represents the interference-to-noise ratio (INR) at receiver $j$ for $i \neq j\in \{1,2\}$.

The performance on the MIMO IC should depend on the strength of the interference relative to the desired signal level on the dB scale with, for example, a better DoF performance expected when interference strength is much less or much higher than the signal strength as in the SISO IC \cite{ETW1}. This variation of performance due to relative difference in strengths of SNRs and INRs can not be captured (at high SNR) by a DoF analysis alone, i.e., if they differ say by only a constant. To characterize the DoF region under such a scenario we thus let the SNRs and INRs vary exponentially with respect to a nominal SNR, $\rho$, with different scaling factors as follows:
\begin{IEEEeqnarray}{l}
\label{eq_scaling_of_snrinr}
\lim_{\log(\rho)\to \infty}\frac{\log (\rho_{ij})}{\log(\rho)}=\alpha_{ij}, ~\textrm{where}~\alpha_{ij}\in \mathbb{R}^+~\textrm{and}~ i,j\in \{1,2\}.
\end{IEEEeqnarray}
Without loss of generality, we assume $\rho_{11}=\rho$ or $\alpha_{11}=1$ throughout the rest of the paper. As mentioned earlier, this technique of varying different SNRs and INRs was first introduced in~\cite{ETW1} to characterize the DoF region of the SISO $2$-user IC and the corresponding DoF region was called the {\em generalized} DoF (GDoF) region.


\subsection{Capacity region bounds to within a constant gap}

In what follows, the MIMO IC with the channel matrices, SNRs and INRs as described above will be denoted by $\mathcal{IC}\left(\mathcal{H},\bar{\rho}\right)$, where $\mathcal{H}=\{H_{11},H_{12},H_{21},H_{22}\}$ and $\bar{\rho}=[\rho_{11},\rho_{12},\rho_{21},\rho_{22}]$ or equivalently as $\mathcal{IC}\left(\mathcal{H},\bar{\alpha}\right)$ where $\bar{\alpha}=[\alpha_{11},\alpha_{12},\alpha_{21},\alpha_{22}]$. In what follows, $\bar{\rho}$ and $\bar{\alpha}$ will be used interchangeably to indicate the power levels of different links of the channel. The capacity region of $\mathcal{IC}\left(\mathcal{H},\bar{\alpha}\right)$ is defined in the usual way (cf. \cite{Sanjay_Varanasi_Cap_MIMO_IC_const_gap}) and will be denoted by $\mathcal{C}\left(\mathcal{H},\bar{\alpha}\right)$. Inner and outer bounds from \cite{Sanjay_Varanasi_Cap_MIMO_IC_const_gap} are stated next.

\subsubsection{An outer bound to the capacity region}
\label{sub_upper_bound}
\begin{lemma}[Lemma~1 of \cite{Sanjay_Varanasi_Cap_MIMO_IC_const_gap}]
\label{lem_upper_bound}
For a given $\mathcal{H}$ and $\bar{\alpha}$ the capacity region, $\mathcal{C}(\mathcal{H},\bar{\alpha})$ of a $2$-user MIMO Gaussian IC, with input power constraint \eqref{power_constraint}, is contained within the set of rate tuples $\mathcal{R}^u(\mathcal{H},\bar{\alpha})$, i.e.,
\[\mathcal{C}(\mathcal{H},\bar{\alpha})~\subseteq~\mathcal{R}^u(\mathcal{H},\bar{\alpha}),\]
where $\mathcal{R}^u(\mathcal{H},\bar{\alpha})$ represents the set of rate pairs $(R_1,R_2)$, satisfying the following constraints:
\begin{IEEEeqnarray}{rl}
\label{eq_bound1}
R_1\leq \log \det &\left(I_{N_1}+\rho_{11} H_{11}H_{11}^{\dagger}\right);\\
\label{eq_bound2}
R_2\leq \log \det &\left(I_{N_2}+\rho_{22} H_{22}H_{22}^{\dagger}\right); \\
\label{eq_bound3}
R_1+R_2\leq \log \det &\left(I_{N_2}+\rho_{12} H_{12}H_{12}^{\dagger}+\rho_{22} H_{22}H_{22}^{\dagger}\right)
 +\log \det \left(I_{N_1}+\rho_{11} H_{11}K_1H_{11}^{\dagger}\right);\\
\label{eq_bound4}
R_1+R_2\leq \log \det &\left(I_{N_1}+\rho_{21} H_{21}H_{21}^{\dagger}+\rho_{11} H_{11}H_{11}^{\dagger}\right)
 +\log \det \left(I_{N_2}+\rho_{22} H_{22}K_2H_{22}^{\dagger}\right);\\
R_1+R_2\leq \log \det &\left(I_{N_1}+ \rho_{21} H_{21}H_{21}^{\dagger}+  \rho_{11} H_{11} K_1H_{11}^{\dagger}\right)\nonumber \\
\label{eq_bound5}
 & +\log \det \left(I_{N_2}+\rho_{12} H_{12}H_{12}^{\dagger}+\rho_{22} H_{22} K_2H_{22}^{\dagger}\right);\\
2R_1+R_2\leq \log \det &\left(I_{N_1}+\rho_{21}  H_{21}H_{21}^{\dagger}+ \rho_{11} H_{11}H_{11}^{\dagger}\right)+\log\det\left(I_{N_1}+\rho_{11} H_{11} K_1H_{11}^{\dagger}\right)+ \nonumber \\
\label{eq_bound6}
& \log \det \left(I_{N_2}+ \rho_{12}  H_{12}H_{12}^{\dagger}+ \rho_{22} H_{22} K_2H_{22}^{\dagger}\right);
\end{IEEEeqnarray}
\begin{IEEEeqnarray}{rl}
R_1+2R_2\leq \log \det &\left(I_{N_2}+\rho_{12}  H_{12}H_{12}^{\dagger}+ \rho_{22} H_{22}H_{22}^{\dagger}\right)+\log\det\left(I_{N_2}+\rho_{22} H_{22} K_2H_{22}^{\dagger}\right)+ \nonumber \\
\label{eq_bound7}
&\log \det \left(I_{N_1}+ \rho_{21} H_{21}H_{21}^{\dagger}+\rho_{11} H_{11} K_1H_{11}^{\dagger}\right),
\end{IEEEeqnarray}
where $K_i=M_iK_{iu}$ and $K_{iu}$ is as specified in equation \eqref{eq_power_split} and $\rho_{ij}=\rho^{\alpha_{ij}}$ for $i, j\in\{1,2\}$.
\end{lemma}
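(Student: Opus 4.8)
\emph{Proof sketch.} The plan is to establish all seven inequalities by genie-aided converse arguments in the spirit of Etkin--Tse--Wang, adapted to MIMO. The two single-user bounds \eqref{eq_bound1}--\eqref{eq_bound2} are immediate: give receiver $i$ the message $W_{\bar i}$ of the other user, so it can form $X_{\bar i}^n$ and subtract $\sqrt{\rho_{\bar i i}}H_{\bar i i}X_{\bar i}^n$ from $Y_i^n$. Fano's inequality plus single-letterization give $nR_i\le\sum_{t=1}^{n}\log\det\!\big(I_{N_i}+\rho_{ii}H_{ii}Q_{it}H_{ii}^{\dagger}\big)+n\varepsilon_n$, and concavity of $Q\mapsto\log\det(I+\rho HQH^{\dagger})$ over the positive semidefinite cone (Jensen with $\bar Q_i\triangleq\tfrac1n\sum_t Q_{it}$), together with $\operatorname{Tr}(\bar Q_i)\le1\Rightarrow\bar Q_i\preceq I_{M_i}$ and monotonicity of $\det$ under the semidefinite order, produce \eqref{eq_bound1}--\eqref{eq_bound2} as $n\to\infty$.

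For the sum-rate bound \eqref{eq_bound3} I would give receiver $1$ the pair $(W_2,S_1^n)$ with $S_1\triangleq\sqrt{\rho_{12}}\,H_{12}X_1+Z_2$ (the actual receiver-$2$ observation of transmitter $1$). Knowing $W_2$, receiver $1$ cancels its interference, so with $\tilde Y_1^n\triangleq\sqrt{\rho_{11}}H_{11}X_1^n+Z_1^n$ one gets $nR_1\le I(X_1^n;\tilde Y_1^n,S_1^n)+n\varepsilon_n$ and $nR_2\le I(X_2^n;Y_2^n)+n\varepsilon_n$. Expanding into differential entropies and using $X_1^n\perp X_2^n$, the term $h(S_1^n)$ appears with a plus sign in $h(\tilde Y_1^n,S_1^n)=h(S_1^n)+h(\tilde Y_1^n\mid S_1^n)$ and with a minus sign in $h(Y_2^n\mid X_2^n)=h(\sqrt{\rho_{12}}H_{12}X_1^n+Z_2^n)=h(S_1^n)$, so it cancels exactly --- which is precisely why the stated bound has no SNR-dependent additive constant. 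What is left, $n(R_1+R_2)\le\big[h(\tilde Y_1^n\mid S_1^n)-h(Z_1^n)\big]+\big[h(Y_2^n)-h(Z_2^n)\big]+o(n)$, is bounded term by term. The second bracket is handled by the Gaussian maximum-entropy inequality, single-letterization, Jensen and $\bar Q_i\preceq I$, yielding $\log\det\!\big(I_{N_2}+\rho_{12}H_{12}H_{12}^{\dagger}+\rho_{22}H_{22}H_{22}^{\dagger}\big)$. The first bracket is the residual uncertainty in $X_1$ through its direct link after $S_1$ has leaked part of it --- essentially only the private component of $X_1$ --- and I would bound it per symbol by a Gaussian with the average conditional (MMSE) covariance of $X_{1t}$ given $S_{1t}$, use that this never exceeds the linear-MMSE covariance $\big(Q_{1t}^{-1}+\rho_{12}H_{12}^{\dagger}H_{12}\big)^{-1}$, combine the single-letter terms by Jensen for the concave map $Q\mapsto\log\det\!\big(I_{N_1}+\rho_{11}H_{11}(Q^{-1}+\rho_{12}H_{12}^{\dagger}H_{12})^{-1}H_{11}^{\dagger}\big)$ with $\bar Q_1\preceq I_{M_1}$, and finally invoke $\big(I_{M_1}+\rho_{12}H_{12}^{\dagger}H_{12}\big)^{-1}\preceq M_1K_{1u}=K_1$, the last being exactly the choice of $K_{1u}$ in \eqref{eq_power_split}; this gives the second term of \eqref{eq_bound3}. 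Bound \eqref{eq_bound4} is the same with the two users exchanged, and \eqref{eq_bound5} follows by giving a genie signal to \emph{both} receivers, so that only the private part of each codeword survives at each receiver and both $H_{11}K_1H_{11}^{\dagger}$ and $H_{22}K_2H_{22}^{\dagger}$ appear.

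The weighted bounds \eqref{eq_bound6}--\eqref{eq_bound7} use the same genies but count one user twice: for \eqref{eq_bound6} I would add a plain bound on $nR_1$ via $I(X_1^n;Y_1^n)$ (which after subtracting $h(Z_1^n)$ contributes $\rho_{21}H_{21}H_{21}^{\dagger}+\rho_{11}H_{11}H_{11}^{\dagger}$), a genie-aided bound on $nR_1$ via $I(X_1^n;\tilde Y_1^n,S_1^n)$ (contributing $\rho_{11}H_{11}K_1H_{11}^{\dagger}$), and a genie-aided bound on $nR_2$ via $I(X_2^n;Y_2^n,S_2^n)$ with $S_2\triangleq\sqrt{\rho_{21}}H_{21}X_2+Z_1$ (contributing $\rho_{12}H_{12}H_{12}^{\dagger}+\rho_{22}H_{22}K_2H_{22}^{\dagger}$); here $h(S_2^n)$ cancels $h(\sqrt{\rho_{21}}H_{21}X_2^n+Z_1^n)$ from the first piece and $h(S_1^n)$ cancels $h(\sqrt{\rho_{12}}H_{12}X_1^n+Z_2^n)$ from the third, and each surviving entropy is bounded by the same maximum-entropy / Jensen / $\bar Q\preceq I$ / MMSE-covariance estimates. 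Inequality \eqref{eq_bound7} is its mirror image.

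The routine parts are Fano's inequality, the chain-rule bookkeeping, and the single-letterization with time-varying $Q_{it}$ (absorbed by Jensen). The crux --- and the step I expect to be the main obstacle --- is the conditional-covariance estimate producing $K_i=M_iK_{iu}$: one must verify that, for \emph{every} admissible input, the MMSE covariance $\mathbb{E}[\operatorname{Cov}(X_{it}\mid S_{it})]$ is dominated in the semidefinite order by the specific matrix $M_iK_{iu}$ attached to the achievable power split, uniformly in the SNR, even when $M_i$ and $N_{\bar i}$ are arbitrary so that $H_{i\bar i}^{\dagger}H_{i\bar i}$ may be rank-deficient (the ``beamforming into the cross-link null space'' regime, where the genie $S_i$ reveals nothing about part of $X_i$) and $H_{ii}$, $H_{i\bar i}$ have different row dimensions. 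Making this precise calls for the MIMO form of the extremal-/worst-case-covariance argument, and the unitary invariance of the channel distribution class $\mathcal{P}$ is what permits putting the relevant channel matrices into a common (e.g.\ generalized singular value) form to carry it out.
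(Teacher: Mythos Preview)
The paper itself does not prove this lemma; as its label indicates, it is quoted verbatim from the companion paper \cite{Sanjay_Varanasi_Cap_MIMO_IC_const_gap} and used here only as an input to the GDoF derivation. There is therefore no in-paper proof to compare against. That said, your genie-aided Etkin--Tse--Wang style argument---give receiver $i$ the cross-link observation $S_i=\sqrt{\rho_{ij}}H_{ij}X_i+Z_j$ so that $h(S_i^n)$ cancels $h(Y_j^n\mid X_j^n)$, then bound the residual conditional entropy via the LMMSE covariance---is precisely the method the cited companion paper uses (it is, essentially, the only known way to obtain these bounds), so your sketch is on the right track and would reconstruct the result.

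Two small corrections to the part you flagged as the main obstacle. First, the inequality you write as $(I_{M_1}+\rho_{12}H_{12}^{\dagger}H_{12})^{-1}\preceq M_1K_{1u}=K_1$ is in fact an \emph{equality}, directly from the definition of $K_{1u}$ in \eqref{eq_power_split}; there is nothing to prove there. Second, and more importantly, the step does \emph{not} require the unitary invariance of the channel class $\mathcal{P}$ or any GSVD machinery: Lemma~\ref{lem_upper_bound} is a deterministic statement valid for arbitrary fixed $\mathcal{H}$, and the bound on the MMSE covariance follows from the purely algebraic facts that (i) the MMSE error covariance never exceeds the LMMSE one, (ii) the LMMSE error covariance $Q-QH_{ij}^{\dagger}(I+\rho_{ij}H_{ij}QH_{ij}^{\dagger})^{-1}\rho_{ij}H_{ij}Q$ is operator-monotone and matrix-concave in $Q$ (so Jensen over $t$ is legitimate and singular $Q_{1t}$ cause no trouble), and (iii) $\operatorname{Tr}(\bar Q_i)\le1\Rightarrow\bar Q_i\preceq I_{M_i}$. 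The distributional assumption on $\mathcal{H}$ enters only later, in Lemmas~\ref{lem:sum-dof-of-2user-MAC}--\ref{lem:sum-dof-of-3user-MAC}, when the bounds are approximated at high SNR.
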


\subsubsection{An inner bound to the capacity region}
\label{subsection_HK_I}
A simple HK coding scheme, denoted as $\mathcal{HK}\left(K_{1u},K_{1w},K_{2u},K_{2w}\right)$ is described in detail in the next subsection. Here we state an inner bound to the rate region achievable using this scheme found in \cite{Sanjay_Varanasi_Cap_MIMO_IC_const_gap} which is hence also an inner bound to the capacity region.
\begin{lemma}
\label{lem_achievable_region}
Let us denote the achievable rate region of $\mathcal{HK}\left(K_{1u},K_{1w},K_{2u},K_{2w}\right)$ on $\mathcal{IC}(\mathcal{H},\bar{\alpha})$ by $\mathcal{R}_a(\mathcal{H},\bar{\alpha})$, then $\mathcal{R}_a(\mathcal{H},\bar{\alpha})$ is the set of rate tuples $(R_1,R_2)\in \mathbb{R}^{+2}$ satisfying the following conditions:
\begin{IEEEeqnarray*}{rl}
R_1\leq \log \det &\left(I_{N_1}+\rho_{11} H_{11}H_{11}^{\dagger}\right)-n_1;\\
R_2\leq \log \det &\left(I_{N_2}+\rho_{22} H_{22}H_{22}^{\dagger}\right)-n_2; \\
R_1+R_2\leq \log \det &\left(I_{N_2}+\rho_{12} H_{12}H_{12}^{\dagger}+\rho_{22} H_{22}H_{22}^{\dagger}\right)\\
&+\log \det \left(I_{N_1}+\rho_{11} H_{11}K_{1}H_{11}^{\dagger}\right)-(n_1+n_2);\\
R_1+R_2\leq \log \det &\left(I_{N_1}+\rho_{21} H_{21}H_{21}^{\dagger}+\rho_{11} H_{11}H_{11}^{\dagger}\right)\\
 &+\log \det \left(I_{N_2}+\rho_{22} H_{22}K_{2}H_{22}^{\dagger}\right)-(n_1+n_2);\\
R_1+R_2\leq \log \det &\left(I_{N_1}+ \rho_{21} H_{21}H_{21}^{\dagger}+ \rho_{11} H_{11}K_{1}H_{11}^{\dagger}\right)\\
& +\log \det \left(I_{N_2}+\rho_{12} H_{12}H_{12}^{\dagger}+\rho_{22} H_{22} K_{2} H_{22}^{\dagger}\right) -(n_1+n_2);\\
2R_1+R_2\leq \log \det &\left(I_{N_1}+\rho_{21}  H_{21}H_{21}^{\dagger}+ \rho_{11} H_{11}H_{11}^{\dagger}\right)+\log\det\left(I_{N_1}+\rho_{11} H_{11} K_{1}H_{11}^{\dagger}\right)+ \\
& \log \det \left(I_{N_2}+ \rho_{12}  H_{12}H_{12}^{\dagger}+\rho_{22} H_{22} K_{2}H_{22}^{\dagger}\right)-(2n_1+n_2);\\
R_1+2R_2\leq \log \det &\left(I_{N_2}+\rho_{12}  H_{12}H_{12}^{\dagger}+ \rho_{22} H_{22}H_{22}^{\dagger}\right)+\log\det\left(I_{N_2}+ \rho_{22} H_{22} K_{2}H_{22}^{\dagger}\right)+ \\
&\log \det \left(I_{N_1}+ \rho_{21} H_{21}H_{21}^{\dagger}+\rho_{11} H_{11} K_1 H_{11}^{\dagger}\right)-(n_1+2n_2),
\end{IEEEeqnarray*}
where $n_i$ for $i=1,2$ are constants functions of the number of antennas only, $K_{iw}$ and $K_{iu}$ are given by equation \eqref{eq_power_split} and $\rho_{ij}=\rho^{\alpha_{ij}}$ for $i, j\in\{1,2\}$..
\end{lemma}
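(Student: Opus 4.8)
Since the statement is quoted from \cite{Sanjay_Varanasi_Cap_MIMO_IC_const_gap}, one may simply cite it; the plan below indicates how I would recover it. The idea is to specialize the Han--Kobayashi inner bound to the Gaussian superposition code $\mathcal{HK}(K_{1u},K_{1w},K_{2u},K_{2w})$ of the next subsection and then to show that every term by which the resulting region falls short of matching the outer bound of Lemma~\ref{lem_upper_bound} term-for-term is a constant that depends only on $(M_1,N_1,M_2,N_2)$.

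First I would fix the input distribution: $Tx_i$ sends $X_i=U_i+W_i$, where $U_i\sim\mathcal{CN}(0,K_{iu})$ carries the public sub-message, $W_i\sim\mathcal{CN}(0,K_{iw})$ carries the private sub-message, and $U_1,W_1,U_2,W_2$ are mutually independent, with $K_{iu}+K_{iw}$ meeting the power constraint \eqref{power_constraint} and with the split given by \eqref{eq_power_split}. The essential property of that split is that each private signal is either beamed into the null space of the cross-link matrix it sees or backed off in power so that it reaches the unintended receiver at or below the noise floor. I would then invoke the compact form of the Han--Kobayashi inner bound \cite{Han_Kobayashi}, in which each receiver jointly decodes both of its own sub-messages and the other user's public sub-message while treating the other user's private sub-message as noise. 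This yields a region in $(R_1,R_2)$ together with the two common-rate split variables, and a Fourier--Motzkin elimination of those two variables leaves precisely the seven families of inequalities in the statement (two single-rate bounds, three sum-rate bounds, and the two $2R_1+R_2$ and $R_1+2R_2$ bounds).

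Next I would evaluate the mutual-information terms for these Gaussian inputs, so that each becomes a difference of $\log\det$ quantities of the shape $\log\det(I+\text{desired}+\text{private interference})-\log\det(I+\text{private interference})$, the private-interference contribution at a receiver being $\rho_{ji}H_{ji}K_{jw}H_{ji}^{\dagger}$ (plus, in the conditional terms, the residual of the receiver's own private signal). The crucial estimate is that, by construction of \eqref{eq_power_split}, on the probability-one event that all $H_{ij}$ are full rank one has $0\preceq \rho_{ji}H_{ji}K_{jw}H_{ji}^{\dagger}\preceq c\,I$ for a constant $c$ independent of $\rho$ and of $\mathcal{H}$; hence each such subtracted term is $\mathcal{O}(1)$, and each term in which the other user's public message is still treated as interference exceeds the corresponding full-power term $\log\det(I+\rho_{ji}H_{ji}H_{ji}^{\dagger}+\cdots)$ by at most an additive constant. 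Matching the remaining reduced-covariance desired-signal terms $I(W_i;Y_i\mid U_1,U_2)$ to the expressions $\log\det(I+\rho_{ii}H_{ii}K_{i}H_{ii}^{\dagger})$ with $K_i=M_iK_{iu}$ again costs only a constant. Collecting all these constants, together with the fixed rate loss from the typicality/packing arguments in the achievability proof, into $n_1$ and $n_2$ yields exactly the stated inequalities with the penalties $-n_i$, $-(n_1+n_2)$, $-(2n_1+n_2)$, and $-(n_1+2n_2)$.

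The main obstacle is bookkeeping rather than anything conceptual: one must carry out the Fourier--Motzkin elimination carefully enough to confirm that no redundant constraint survives and that exactly the seven listed forms remain, and --- more importantly --- one must verify that the bound $c$ on the private-interference covariances depends only on the antenna numbers and not on the exponents $\bar{\alpha}$. That last point is exactly where the structure of the power allocation in \eqref{eq_power_split} (null-space steering versus power backoff to the noise floor) does the work, and it is what makes the gap terms $n_1,n_2$ pure functions of $(M_1,N_1,M_2,N_2)$.
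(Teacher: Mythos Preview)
Your approach is correct and is precisely how the cited work \cite{Sanjay_Varanasi_Cap_MIMO_IC_const_gap} establishes the lemma: specialize the compact Han--Kobayashi region (via Fourier--Motzkin) to independent Gaussian inputs with the covariance split \eqref{eq_power_split}, then show that the private interference arriving at the unintended receiver is bounded independently of $\rho$ so that every mutual-information term matches the corresponding outer-bound term up to an additive constant depending only on the antenna counts. The paper itself does not reprove this and simply invokes the reference, so your sketch is more than what is required here.

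One correction is needed, however. You have interchanged the roles of $U_i$ and $W_i$: in the scheme of Subsection~\ref{def_coding_scheme}, $U_i$ (with covariance $K_{iu}$) is the \emph{private} message and $W_i$ (with covariance $K_{iw}$) is the \emph{public} one. Consequently, the private-interference term at $Rx_i$ is $\rho_{ji}H_{ji}K_{ju}H_{ji}^{\dagger}$, not $\rho_{ji}H_{ji}K_{jw}H_{ji}^{\dagger}$; indeed, by \eqref{eq_power_split} one has $\rho_{ji}H_{ji}K_{ju}H_{ji}^{\dagger}\preceq \tfrac{1}{M_j}I_{N_i}$, whereas $\rho_{ji}H_{ji}K_{jw}H_{ji}^{\dagger}$ is unbounded in $\rho$. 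With this swap corrected, the remainder of your argument (including the identification of the reduced-covariance term with $\log\det(I_{N_i}+\rho_{ii}H_{ii}K_iH_{ii}^{\dagger})$, where $K_i=M_iK_{iu}$) goes through exactly as you describe.
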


\subsection{The simple HK coding scheme}
\label{def_coding_scheme}
We describe next the coding scheme whose rate region contains the rate region of of Lemma \ref{lem_achievable_region} which is in turn within a constant gap to the capacity region.
Let each of the users divide its message into two sub-messages (called the private and public messages hereafter) and use superposition coding to
encode the two sub-messages with mutually independent random zero-mean Gaussian codewords so that we have
\begin{equation}
\begin{array}{c}
X_1={U}_1+{W}_1,\\
X_2={U}_2+{W}_2,
\end{array}
\end{equation}
where ${U}_i$ and ${W}_i$ represent the codewords of the private and public messages of user $i$, respectively\footnote{With a slight abuse of notation, we shall use the same symbol to represent a message and its corresponding codeword.}. Moreover, the covariance matrices of the public and private messages are taken for each $i \in \{1,2\}$ to be
\begin{IEEEeqnarray}{rl}
\label{eq_power_split}
K_{iu}(\mathcal{H})\triangleq\mathbb{E}(U_i U_i^{\dagger}) =& \frac{1 }{M_{i}}\left(I_{M_i}+\rho_{ij} H_{ij}^{\dagger}H_{ij}\right)^{-1},\nonumber \\
K_{iw}(\mathcal{H})\triangleq\mathbb{E}(W_i W_i^{\dagger})=&  \left(\frac{I_{M_i}}{M_{i}}-K_{iu}\right).
\end{IEEEeqnarray}
In what follows, we shall refer to such a coding scheme as the $\mathcal{HK}\left(\{K_{1u},K_{1w},K_{2u},K_{2w}\}\right)$ scheme, where we drop the dependence of $K_{iu}$ and $K_{iw}$ on the channel matrices for notational convenience. Let the singular value decomposition (SVD) of the channel matrix $H_{ij}$ be given by $H_{ij}=V_{ij}\Sigma_{ij}U_{ij}^\dagger$, where $V_{ij}\in \mathbb{U}^{N_j\times N_j}$ and $U_{ij}\in \mathbb{U}^{M_i\times M_i}$ are unitary matrices and $\Sigma_{ij}\in \mathbb{C}^{N_j\times M_i}$ is a rectangular matrix containing the singular values along its diagonal. Using the SVD of the matrix $H_{ij}$, the covariance matrices for $U_i$ and $W_i$ of equation \eqref{eq_power_split} can alternatively be written as
\begin{IEEEeqnarray}{rl}
\label{eq:expansion-of-kiu}
K_{iu}=U_{ij}\left[\begin{array}{cc} \frac{1}{M_i}\left(I_{\min\{M_i,N_j\}}+\rho^{\alpha_{ij}}\Lambda_{ij}\right)^{-1} & \mathbf{0}\\ \mathbf{0}& \frac{1}{M_i}I_{(M_i-N_j)^+}\end{array}\right] U_{ij}^\dagger =U_{ij}D_{ij} U_{ij}^\dagger,
\end{IEEEeqnarray}
where $\Lambda_{ij}$ is a diagonal matrix containing the non-zero eigenvalues of $H_{ij}^\dagger H_{ij}$ and denoting the quantity $M_i(1+\rho^{\alpha_{ij}}\lambda_{ij}^{(k)})=r_{ik}$ where $\lambda_{ij}^{(k)}$ is the $k^{th}$ non-zero eigenvalues of $H_{ij}^\dagger H_{ij}$ for $1\leq k\leq m_{ij}=\min\{M_i,N_j\}$ we have
\begin{IEEEeqnarray}{rl}
[D_{ij}]_{kk}=\left\{\begin{array}{cc}r_{ik}^{-1},& \textrm{for} ~1\leq k\leq m_{ij};\\
\frac{1}{M_i},& \textrm{for}~m_{ij}+1\leq k\leq M_i.
\end{array}\right.
\end{IEEEeqnarray}
Similarly, we have
\begin{IEEEeqnarray}{rl}
\label{eq:expansion-of-kiw}
K_{iw}=U_{ij}\left(\frac{1}{M_i}I_{M_i}-D_{ij}\right) U_{ij}^\dagger = U_{ij}^{[1:m_{ij}]}\widetilde{D}_{ij}(U_{ij}^{[1:m_{ij}]})^\dagger,
\end{IEEEeqnarray}
where $[\widetilde{D}_{ij}]_{kk}=(\frac{1}{M_i}-\frac{1}{r_{ik}})$ for $1\leq k\leq m_{ij}$. Now, it is well known that a Gaussian vector $V$ with covariance matrix $K$ can be expressed as $V=Ax$, where $x$ is a Gaussian vector with identity as covariance matrix if $AA^\dagger=K$. Using this result along with equations \eqref{eq:expansion-of-kiu} and \eqref{eq:expansion-of-kiw} we can write
\begin{IEEEeqnarray}{rl}
U_i=&U_{ij} \sqrt{D_{ij}}\mathbf{x}_{ip}=\sum_{l=1}^{M_i}\sqrt{[D_{ij}]_{ll}}x_{ip}^{(l)}U_{ij}^{[l]};\nonumber \\
W_i=&U_{ij} \sqrt{\widetilde{D}_{ij}}\mathbf{x}_{ic}= \sum_{k=1}^{m_{ij}}\sqrt{[\widetilde{D}_{ij}]_{kk}}x_{ic}^{(k)}U_{ij}^{[k]},
\end{IEEEeqnarray}
where $\mathbf{x}_{ic}=[x_{ic}^{(1)}, \cdots , x_{ic}^{(m_{ij})}]^T\sim \mathcal{CN}(0,I_{m_{ij}})$ and $\mathbf{x}_{ip}=[x_{ip}^{(1)}, \cdots , x_{ip}^{(M_i)}]^T\sim \mathcal{CN}(0,I_{M_i})$ are mutually independent normal Gaussian vectors. Substituting this in the expression for $X_i$ we see that, the transmit signal at $Tx_i$ can be written as
\begin{IEEEeqnarray}{rl}
\label{eq_structure_of_streams}
X_i=\sum_{k=1}^{m_{ij}}\sqrt{[\widetilde{D}_{ij}]_{kk}}&x_{ic}^{(k)}U_{ij}^{[k]}+\sum_{l=1}^{m_{ij}}\sqrt{[D_{ij}]_{ll}}x_{ip}^{(l)} U_{ij}^{[l]}+ \sum_{m=1+m_{ij}}^{M_i}\frac{1}{\sqrt{M_i }}x_{ip}^{(m)}U_{ij}^{[m]}.
\end{IEEEeqnarray}
In the above equation, $x_{ic}^{(l)}$ for $1\leq l\leq m_{ij}$ and $x_{ip}^{(k)}$ for $1\leq k\leq M_i$ represent the $l^{th}$ and $k^{th}$ stream of the public and private information along directions $U_{ij}^{[l]}$ and $U_{ij}^{[k]}$, respectively, for user $i$.

\begin{rem}
\label{rem_use_of_null_space}
Note that each of the terms in the second sum of the right hand side of (\ref{eq_structure_of_streams}) have power proportional to $\rho_{ij}^{-1}$. Hence, all the streams encoded through $x_{ip}^{(k)}$ for $1\leq k\leq m_{ij}$ after passing through the cross channel with strength $\rho_{ij}$ reach $Rx_j$ at the noise floor. This technique can be considered as a form of {\em interference alignment at the signal level}.

On the other hand, in the SVD of the matrix $H_{ij}$, the last $(M_i-N_j)^+$ columns of $\Sigma_{ij}$ are all zeros and hence $H_{ij}U_{ij}^{[k]}=0$ for $N_j<k\leq M_i$. In other words, each of the $U_{ij}^{[k]}$'s for $m_{ij}<k\leq M_{i}$ lie in the null space of the matrix $H_{ij}$. Therefore, any stream sent along one of these directions reaches $Rx_j$ in a subspace which is perpendicular to the subspace in which the useful signals of $Rx_2$ lie. That is, each user can be said to align the interference to the undesired user in a particular subspace, which is a simple form of {\em signal space interference alignment}. This explains why we call the streams carried by $x_{ip}^{(k)},~1\leq k\leq M_{i}$, {\it private} streams.

Thus the specific choice of the covariance matrices $K_{iu}$ in $\mathcal{HK}\left(\{K_{1u},K_{1w},K_{2u},K_{2w}\}\right)$ for $i=1,2$ amounts to employing a technique to jointly utilize both types of interference alignments described above.
\end{rem}

\subsection{Generalized Degrees of Freedom Region}

\begin{defn}
\label{def:GDoF-region}
The GDoF region, $\mathcal{D}_o(\bar{M},\bar{\alpha})$, of $\mathcal{IC}(\mathcal{H},\bar{\alpha})$ is defined as
\begin{IEEEeqnarray}{rl}
\mathcal{D}_o(\bar{M}, \bar{\alpha})=\Big\{(d_1,d_2): ~d_i=\lim_{\rho_{ii}\to \infty}\frac{R_i}{\log(\rho_{ii})}, i\in\{1,2\}
~\textrm{such that}~(R_1,R_2)\in \mathcal{C}(\mathcal{H},\bar{\alpha}) \Big\}. \label{def_GDOF}
\end{IEEEeqnarray}
\end{defn}

Since the capacity region of a MIMO IC is not known and a constant number of bits is insignificant in the GDoF analysis, to derive the GDoF region we shall use the constant-gap-to-capacity result found by the authors in \cite{Sanjay_Varanasi_Cap_MIMO_IC_const_gap}. In particular, since a constant number of bits is insignificant in the GDoF analysis, the $\mathcal{C}(\mathcal{H},\bar{\alpha})$ in the definition of the GDoF region can be replaced by either $\mathcal{R}^u(\mathcal{H},\bar{\alpha})$ or $\mathcal{R}_a(\mathcal{H},\bar{\alpha})$ to compute the GDoF region of the MIMO IC. We state this fact as a lemma for easy further reference.
\begin{lemma}
\label{lem_alternate_def_GDOF}
The GDoF region of the MIMO IC is given as
\begin{IEEEeqnarray}{rl}
\mathcal{D}_o(\bar{M},\bar{\alpha})=\Big\{&(d_1,d_2): ~d_i=\lim_{\rho_{ii}\to \infty}\frac{R_i}{\log(\rho_{ii})}, i\in\{1,2\} ~\textrm{and}~(R_1,R_2)\in \mathcal{R}^u(\mathcal{H},\bar{\alpha})  \Big\}, \label{eq_alternate_def_GDOF}
\end{IEEEeqnarray}
where $\mathcal{R}^u(\mathcal{H},\bar{\alpha})=\mathcal{R}^u(\mathcal{H},\bar{\rho})$ is given by Lemma~\ref{lem_upper_bound}. 
\end{lemma}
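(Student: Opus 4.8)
The plan is to sandwich the (unknown) capacity region between the inner bound $\mathcal{R}_a(\mathcal{H},\bar{\alpha})$ of Lemma~\ref{lem_achievable_region} and the outer bound $\mathcal{R}^u(\mathcal{H},\bar{\alpha})$ of Lemma~\ref{lem_upper_bound}, and then to exploit the fact -- established in \cite{Sanjay_Varanasi_Cap_MIMO_IC_const_gap} and reflected here in the constants $n_1,n_2$ of Lemma~\ref{lem_achievable_region} -- that these two regions differ only by a constant number of bits, where the constant depends only on $\bar{M}=(M_1,N_1,M_2,N_2)$ and not on $\bar{\alpha}$, on the nominal SNR $\rho$, or on $\mathcal{H}$. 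Since a bounded additive term is annihilated after normalizing rates by $\log(\rho_{ii})$ and letting $\rho_{ii}\to\infty$, the GDoF region in Definition~\ref{def:GDoF-region} is insensitive to whether one uses $\mathcal{R}_a$, $\mathcal{C}$, or $\mathcal{R}^u$ in \eqref{def_GDOF}.

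Concretely, I would first record the chain $\mathcal{R}_a(\mathcal{H},\bar{\alpha}) \subseteq \mathcal{C}(\mathcal{H},\bar{\alpha}) \subseteq \mathcal{R}^u(\mathcal{H},\bar{\alpha})$, which is immediate from Lemmas~\ref{lem_achievable_region} and \ref{lem_upper_bound}. Let $\mathcal{D}_o^{a}$, $\mathcal{D}_o$, and $\mathcal{D}_o^{u}$ denote the sets of GDoF pairs obtained by substituting $\mathcal{R}_a$, $\mathcal{C}$, and $\mathcal{R}^u$, respectively, for $\mathcal{C}(\mathcal{H},\bar{\alpha})$ in \eqref{def_GDOF}. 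The map from a rate region to its induced set of normalized limit pairs is monotone under set inclusion, so $\mathcal{D}_o^{a} \subseteq \mathcal{D}_o \subseteq \mathcal{D}_o^{u}$. It therefore suffices to prove the reverse containment $\mathcal{D}_o^{u} \subseteq \mathcal{D}_o^{a}$, which then forces all three sets to coincide and yields \eqref{eq_alternate_def_GDOF}.

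For the reverse containment I would invoke the constant-gap result: there is a finite constant $c=c(\bar{M})$ such that for every $\bar{\alpha}$, every admissible $\mathcal{H}$, and every $(R_1,R_2)\in\mathcal{R}^u(\mathcal{H},\bar{\alpha})$, the pair $\left((R_1-c)^+,(R_2-c)^+\right)$ lies in $\mathcal{R}_a(\mathcal{H},\bar{\alpha})$ (this follows by comparing the bounds of Lemma~\ref{lem_upper_bound} and Lemma~\ref{lem_achievable_region} term by term, absorbing the $n_i$'s and the mismatch between the two sets of single-user and sum-rate constraints into $c$). Now take any $(d_1,d_2)\in\mathcal{D}_o^{u}$, with an associated family $(R_1(\rho),R_2(\rho))\in\mathcal{R}^u(\mathcal{H},\bar{\alpha})$ satisfying $R_i(\rho)/\log(\rho_{ii})\to d_i$. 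Then $\left((R_1(\rho)-c)^+,(R_2(\rho)-c)^+\right)\in\mathcal{R}_a(\mathcal{H},\bar{\alpha})$, and since $c$ is a fixed constant while $\log(\rho_{ii})\to\infty$, we get $(R_i(\rho)-c)^+/\log(\rho_{ii})\to d_i$ for $i\in\{1,2\}$ (for $d_i>0$ the truncation is inactive for all large $\rho$, and for $d_i=0$ the normalized quantity is squeezed to $0$ from above). Hence $(d_1,d_2)\in\mathcal{D}_o^{a}$, as required.

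The only real subtlety -- the main obstacle -- is the bookkeeping around the limit: one must be certain that the gap constant $c$ (equivalently, the $n_i$'s) is genuinely uniform in $\rho$, so that it truly vanishes under normalization, and one must handle the coordinatewise $(\cdot)^+$ truncation and, depending on how strictly the limit in \eqref{def_GDOF} is read, possibly pass to a subsequence of $\rho\to\infty$ along which all relevant normalized quantities converge. Both points are routine given that the bounds in Lemmas~\ref{lem_upper_bound} and \ref{lem_achievable_region} are explicit and the gap constant of \cite{Sanjay_Varanasi_Cap_MIMO_IC_const_gap} depends only on $\bar{M}$; no new estimate is needed.
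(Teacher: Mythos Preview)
Your proposal is correct and follows essentially the same approach as the paper: sandwich $\mathcal{C}$ between $\mathcal{R}_a$ and $\mathcal{R}^u$ via Lemmas~\ref{lem_upper_bound} and \ref{lem_achievable_region}, then use the fact that the gap constants $n_i$ are independent of $\rho$ and $\mathcal{H}$ so that they vanish under the $\log(\rho_{ii})$ normalization. The paper's proof is simply a two-sentence sketch of this argument, whereas you have spelled out the monotonicity and reverse-containment steps explicitly.
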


\begin{proof}
From Lemma~\ref{lem_upper_bound} and \ref{lem_achievable_region} we have the following
\begin{equation}
\mathcal{R}_a(\mathcal{H},\bar{\alpha})~~\subseteq \mathcal{C}(\mathcal{H},\bar{\alpha})~~\subseteq \mathcal{R}^u(\mathcal{H},\bar{\alpha}).
\end{equation}
To obtain the desired result we use in the definition of the GDoF region of \eqref{def_GDOF}, the above set inclusions along with the fact that $n_i$'s in Lemma \ref{lem_achievable_region} are independent of $\rho$ and $\mathcal{H}$.
\end{proof}

\subsection{Asymptotic Approximations}


In the derivation of the GDoF region of the 2-user MIMO IC quantities like the sum rate upper bound on 2- and 3-user MIMO multiple-access channels (MACs) will appear frequently. Thus, in the following two lemmas, we provide asymptotic approximations up to $O(1)$ of such quantities for different $\bar{\alpha}$ and number of antennas.

\begin{lemma}
\label{lem:sum-dof-of-2user-MAC}
Let $H_1\in \mathbb{C}^{u\times u_1}$ and $H_2\in \mathbb{C}^{u\times u_2}$ are two full rank (w.p.1) channel matrices such that $H=[H_1~H_2]$ is also full rank w.p.1. Then for asymptotic $\rho$
\begin{equation}
\label{eq_lem_myMAC_approximate1}
\log\det\left(I_u+\rho^a H_1H_1^{\dagger}+\rho^b H_2H_2^{\dagger}\right)=f\left(u, (a,u_1),(b,u_2)\right)\log(\rho)+\mathcal{O}(1),
\end{equation}
where for any $u\in \mathbb{R}^+$ and $(a_i,u_i)\in \mathbb{R}^{2}$ for $i\in \{1,2\}$,
\begin{IEEEeqnarray}{l}
\label{eq_def_f}
f\left(u,(a_1,u_1),(a_2,u_2)\right)\triangleq \left\{\begin{array}{c}
\min\{u,u_1\}a_1^++\min\{(u-u_1)^+,u_2\}a_2^+,~\textrm{if}~a_1\geq a_2;\\
\min\{u,u_2\}a_2^++\min\{(u-u_2)^+,u_1\}a_1^+,~\textrm{if}~a_1< a_2.
\end{array}\right.
\end{IEEEeqnarray}
\end{lemma}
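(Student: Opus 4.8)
The plan is to reduce the statement to the scalar high-SNR identity
$\log\det(I_m+\rho^c GG^{\dagger})=\min\{m,n\}\,c^{+}\log\rho+\mathcal{O}(1)$
for a fixed full-rank $G\in\mathbb{C}^{m\times n}$ (which follows by writing the determinant as $\prod_{k=1}^{\min\{m,n\}}(1+\rho^{c}\sigma_k^{2})$ over the nonzero singular values $\sigma_k$ of $G$, each a fixed positive number, and using $\log(1+\rho^{c}\sigma_k^{2})=c^{+}\log\rho+\mathcal{O}(1)$), together with two elementary facts: for Hermitian $M\succeq 0$ and Hermitian $E$, (i) if $0\preceq E\preceq cI$ then $0\le\log\det(I+M+E)-\log\det(I+M)\le m\log(1+c)$, using monotonicity of $\log\det$ on the PSD cone and $(1+c)(I+M)\succeq I+M+cI\succeq I+M+E$; and (ii) the chain-rule identity $\log\det(I+M+E)=\log\det(I+M)+\log\det(I+(I+M)^{-1/2}E(I+M)^{-1/2})$, together with $\det(I+GG^{\dagger})=\det(I+G^{\dagger}G)$.

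Next I would assume WLOG $a\ge b$; the case $a<b$ follows by interchanging the roles of $(H_1,a,u_1)$ and $(H_2,b,u_2)$, which reproduces precisely the second branch of the definition of $f$. Then I dispatch the degenerate subcases: if $a<0$ then $b<0$ and $\rho^{a}H_1H_1^{\dagger}+\rho^{b}H_2H_2^{\dagger}\preceq c'I$ is bounded, so the left side is $\mathcal{O}(1)=f\log\rho+\mathcal{O}(1)$ since $f=0$; and if $a\ge 0>b$ then $\rho^{b}H_2H_2^{\dagger}$ is a bounded PSD matrix, so fact (i) and the scalar identity give $\min\{u,u_1\}a\log\rho+\mathcal{O}(1)$, which equals $f\log\rho+\mathcal{O}(1)$.

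For the main case $a\ge b\ge 0$ I apply the chain-rule identity (ii) with $M=\rho^{a}H_1H_1^{\dagger}$ and $E=\rho^{b}H_2H_2^{\dagger}$: the first term is $\min\{u,u_1\}a\log\rho+\mathcal{O}(1)$ by the scalar identity, and the second equals $\log\det(I+\rho^{b}H_2^{\dagger}(I+\rho^{a}H_1H_1^{\dagger})^{-1}H_2)$. Letting $P_1$ be the orthogonal projector onto the column space of $H_1$ and $P_1^{\perp}=I_u-P_1$, the matrix $(I+\rho^{a}H_1H_1^{\dagger})^{-1}$ is block-diagonal with respect to $\mathrm{range}(H_1)\oplus\mathrm{range}(H_1)^{\perp}$, equals $I$ on $\mathrm{range}(H_1)^{\perp}$, and is $\preceq c\rho^{-a}I$ on $\mathrm{range}(H_1)$ (with $c$ the reciprocal of the least nonzero eigenvalue of $H_1H_1^{\dagger}$), so $P_1^{\perp}\preceq(I+\rho^{a}H_1H_1^{\dagger})^{-1}\preceq P_1^{\perp}+c\rho^{-a}I$. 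Hence the second term lies between $\log\det(I+\rho^{b}H_2^{\dagger}P_1^{\perp}H_2)$ and $\log\det(I+\rho^{b}H_2^{\dagger}P_1^{\perp}H_2+c\rho^{b-a}H_2^{\dagger}H_2)$; since $b\le a$ and $\rho\ge 1$ the extra matrix is bounded PSD, so fact (i) collapses the two sides up to $\mathcal{O}(1)$, and the scalar identity applied to $P_1^{\perp}H_2$ gives the common value $\mathrm{rank}(P_1^{\perp}H_2)\,b\log\rho+\mathcal{O}(1)$. Finally I identify the rank: $\mathrm{rank}(P_1^{\perp}H_2)=\dim(\mathrm{range}(H_1)+\mathrm{range}(H_2))-\dim\mathrm{range}(H_1)=\mathrm{rank}[H_1\ H_2]-\mathrm{rank}(H_1)=\min\{u,u_1+u_2\}-\min\{u,u_1\}$, and a short check on whether $u_1\ge u$ shows this equals $\min\{(u-u_1)^{+},u_2\}$. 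Adding the two terms yields $f(u,(a,u_1),(b,u_2))\log\rho+\mathcal{O}(1)$.

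The main obstacle is not a deep one: it is the bookkeeping over the sign/ordering subcases for $(a,b)$ and, above all, nailing the rank identity $\mathrm{rank}(P_1^{\perp}H_2)=\min\{(u-u_1)^{+},u_2\}$, which is exactly where the hypothesis that $[H_1\ H_2]$ is full rank w.p.1 enters and which must be reconciled with the nested minima in the definition of $f$. One should also keep in mind that every $\mathcal{O}(1)$ is allowed to depend on the fixed (generic) realization $\mathcal{H}$ but not on $\rho$, which is all that is needed downstream in the GDoF computation.
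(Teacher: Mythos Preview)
Your argument is correct. It differs in organization from the paper's own proof, and the difference is worth noting. The paper splits into two cases according to whether $u_1+u_2\ge u$ or $u_1+u_2<u$: the former is deferred to \cite{PBT} (whose argument is essentially the chain-rule/projection computation you carry out), while the latter is handled by the one-line determinant identity
\[
\log\det\Big(I_{u_1+u_2}+H^{\dagger}H\,\mathrm{diag}(\rho^{a}I_{u_1},\rho^{b}I_{u_2})\Big)=(u_1a+u_2b)\log\rho+\mathcal{O}(1),
\]
valid because $H^{\dagger}H$ is invertible when $u_1+u_2<u$. Your approach instead treats both cases uniformly via the chain rule and the rank identity $\mathrm{rank}(P_1^{\perp}H_2)=\mathrm{rank}[H_1\ H_2]-\mathrm{rank}(H_1)=\min\{(u-u_1)^{+},u_2\}$, which is exactly the place where the full-rank hypothesis on $[H_1\ H_2]$ is used and which absorbs the paper's case split. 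What you gain is a single self-contained argument (no external citation) and an explicit treatment of the sign subcases for $(a,b)$, which the paper glosses over despite the $a^{+}$'s in the definition of $f$; what the paper's second-case computation buys is brevity when $u_1+u_2<u$, since there no projection bookkeeping is needed at all.
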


\begin{proof}
This result was proved in \cite{PBT} when $(u_1+u_2)\geq u$. The proof for the case when $(u_1+u_2)< u$ is given in Appendix~\ref{App:proof-lemma-2user-MAC}.
\end{proof}

\begin{rem}
\label{rem:channel-distribution}
If $H_1\in \mathbb{C}^{u\times u_1}$ and $H_2\in \mathbb{C}^{u\times u_2}$ are mutually independent and $H_1, H_2\in \mathcal{P}$ then $[H_1 H_2]\in \mathcal{P}$ and therefore is a full rank matrix w.p.1. That is, if $H_1$ and $H_2$ represent the two incoming channel matrices at any of the receivers in the MIMO IC then Lemma~\ref{lem:sum-dof-of-2user-MAC} holds.
\end{rem}

\begin{lemma}
\label{lem:sum-dof-of-3user-MAC}
Let $H_i\in \mathbb{C}^{u\times u_i}$ for $i=1,2,3$ be three channel matrices with statistics described at the beginning of this section, then for asymptotic $\rho$
\begin{IEEEeqnarray}{rl}
\label{eq_lem_myMAC3_approximate1}
\log\det\left(I_u+\sum_{i=1}^3\rho^{a_i} H_iH_i^{\dagger}\right)=&g\left(u,(a_1,u_1),(a_2,u_2),(a_3,u_3)\right)\log(\rho)+\mathcal{O}(1),
\end{IEEEeqnarray}
where for any $u\in \mathbb{R}^+$ and $(a_i,u_i)\in \mathbb{R}^{2}$ for $i\in \{1,2,3\}$,
\begin{IEEEeqnarray}{rl}
g\left(u,(a_1,u_1),(a_2,u_2),(a_3,u_3)\right)\triangleq\min\{u,u_{i_1}\}a_{i_1}^++&\min\{(u-u_{i_1})^+,u_{i_2}\}a_{i_2}^++\nonumber\\
\label{eq_def_g}
&\min\{(u-u_{i_1}-u_{i_2})^+,u_{i_3}\}a_{i_3}^+,
\end{IEEEeqnarray}
for $i_1,i_2,i_3\in \{1,2,3\}$ such that $a_{i_1}\geq a_{i_2}\geq a_{i_3}$.
\end{lemma}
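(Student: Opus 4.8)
The plan is to relabel the links so that $a_1\ge a_2\ge a_3$ and then reduce, by two inexpensive steps, to a situation that either is trivial, is covered by Lemma~\ref{lem:sum-dof-of-2user-MAC}, or is the single genuinely new case of three \emph{strictly ordered positive} exponents. First, whenever $a_i\le 0$ the term $\rho^{a_i}H_iH_i^\dagger$ is a bounded positive-semidefinite perturbation for $\rho\ge 1$ (as $\|H_i\|^2$ is a fixed constant w.p.1), so deleting it changes the left side of \eqref{eq_lem_myMAC3_approximate1} by $\mathcal{O}(1)$; since $a_i^+=0$, one checks from \eqref{eq_def_g} that $g$ is likewise unchanged, so we may assume all $a_i>0$. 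Second, indices sharing a common exponent can be merged: e.g.\ $\rho^{a}H_iH_i^\dagger+\rho^{a}H_jH_j^\dagger=\rho^{a}\widehat H\widehat H^\dagger$ with $\widehat H=[H_i\ H_j]\in\mathcal{P}$ (Remark~\ref{rem:channel-distribution}) of width $u_i+u_j$. Every coincidence pattern thus collapses the expression to a one- or two-matrix log-det, and using the telescoping identity $\min\{u,p\}+\min\{(u-p)^+,q\}=\min\{u,p+q\}$ one verifies that direct evaluation (resp.\ Lemma~\ref{lem:sum-dof-of-2user-MAC}) returns exactly $g\log(\rho)+\mathcal{O}(1)$. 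It remains to handle $a_1>a_2>a_3>0$.

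For that case I would peel off the dominant term via the identity $\log\det(I_u+X+Y)=\log\det(I_u+X)+\log\det(I+(I_u+X)^{-1/2}Y(I_u+X)^{-1/2})$ with $X=\rho^{a_1}H_1H_1^\dagger$ and $Y=\rho^{a_2}H_2H_2^\dagger+\rho^{a_3}H_3H_3^\dagger$. The first term equals $\min\{u,u_1\}a_1\log(\rho)+\mathcal{O}(1)$ because $H_1H_1^\dagger$ has rank $\min\{u,u_1\}$ with nonzero eigenvalues that are fixed positive constants w.p.1. For the second term, split $\mathbb{C}^u=\mathrm{range}(H_1)\oplus\ker(H_1H_1^\dagger)$; in this basis $(I_u+X)^{-1/2}=\mathrm{diag}(S,I)$ with $\|S\|=\Theta(\rho^{-a_1/2})$, and $Y=\begin{pmatrix}Y_{11}&Y_{12}\\ Y_{12}^\dagger&Y_{22}\end{pmatrix}$ with $\lambda_{\max}(Y_{11})=\mathcal{O}(\rho^{a_2})$ and $Y_{22}=\rho^{a_2}\widetilde H_2\widetilde H_2^\dagger+\rho^{a_3}\widetilde H_3\widetilde H_3^\dagger$, $\widetilde H_i:=PH_i$, $P$ the orthoprojector onto $\ker(H_1H_1^\dagger)$. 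A second application of the block identity (Schur complement in the $(1,1)$ block) rewrites the second term as $\log\det(I+SY_{11}S)+\log\det(I+Y_{22}-C)$ with $C=Y_{12}^\dagger S(I+SY_{11}S)^{-1}SY_{12}$; here $SY_{11}S\preceq\|S\|^2\lambda_{\max}(Y_{11})I=\mathcal{O}(\rho^{a_2-a_1})I$, so $\log\det(I+SY_{11}S)=\mathcal{O}(1)$. Since $P$ is independent of $H_2,H_3$ and $[H_2\ H_3]\in\mathcal{P}$, the matrix $[\widetilde H_2\ \widetilde H_3]=P[H_2\ H_3]$ has the generic rank $\min\{(u-u_1)^+,u_2+u_3\}$ w.p.1, so Lemma~\ref{lem:sum-dof-of-2user-MAC} gives $\log\det(I+Y_{22})=f((u-u_1)^+,(a_2,u_2),(a_3,u_3))\log(\rho)+\mathcal{O}(1)$. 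Adding the pieces and simplifying with $((u-u_1)^+-u_2)^+=(u-u_1-u_2)^+$ reproduces $g(u,(a_1,u_1),(a_2,u_2),(a_3,u_3))\log(\rho)+\mathcal{O}(1)$.

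The only delicate point is $\log\det(I+Y_{22}-C)=\log\det(I+Y_{22})+\mathcal{O}(1)$, i.e.\ that the Schur-complement correction $C$ is negligible against $I+Y_{22}$; the naive norm bound $\|C\|=\mathcal{O}(\rho^{2a_2-a_1})$ is too weak when $2a_2>a_1$. Instead I would use the block-positivity inequality $Y_{12}^\dagger Y_{12}\preceq\lambda_{\max}(Y_{11})\,Y_{22}$, which follows from $\ker Y_{11}\subseteq\ker Y_{12}^\dagger$ together with the Schur complement $Y_{12}^\dagger Y_{11}^{+}Y_{12}\preceq Y_{22}$ of the PSD matrix $Y$; combined with $C\preceq\|S\|^2 Y_{12}^\dagger Y_{12}$ this yields $C\preceq\mathcal{O}(\rho^{a_2-a_1})(I+Y_{22})$, and as $a_2<a_1$ strictly we get $C\preceq\tfrac12(I+Y_{22})$ for all large $\rho$, so $\log\det(I-(I+Y_{22})^{-1/2}C(I+Y_{22})^{-1/2})$ stays bounded. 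This is precisely where the strict ordering secured by the merging step is needed. I expect this estimate, plus the (routine but necessary) verification of the full-rank/genericity claims for the compressed channels $\widetilde H_i$, to be the main work; the rest is the determinant identity, Weyl-type rank counting, and the arithmetic of $g$.
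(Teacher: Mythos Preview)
Your argument is correct, but it is considerably more elaborate than the paper's. The paper also peels off the strongest user, but instead of staying in the $u\times u$ receive space and block-decomposing $(I_u+X)^{-1/2}Y(I_u+X)^{-1/2}$, it first applies the Sylvester identity $\log\det(I_u+AB)=\log\det(I+BA)$ with $A=H_{23}\Lambda^{1/2}$, $B=\Lambda^{1/2}H_{23}^\dagger(I_u+\rho^{a_1}H_1H_1^\dagger)^{-1}$, moving the whole computation to the $(u_2+u_3)\times(u_2+u_3)$ transmit space. After the eigendecomposition $H_1H_1^\dagger=U\Lambda_1U^\dagger$ and the change of variable $H=U^\dagger H_{23}$ (which preserves the distribution), the inverse splits cleanly and the quantity becomes
\[
\log\det\!\left(I_{u_2+u_3}+\Lambda^{1/2}G_1(I+\rho^{a_1}\Lambda_1^+)^{-1}G_1^\dagger\Lambda^{1/2}+\Lambda^{1/2}G_2G_2^\dagger\Lambda^{1/2}\right),
\]
where $G_1,G_2$ are the row blocks of $H$. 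The cross term is now an \emph{additive} PSD perturbation of operator norm $\mathcal{O}(\rho^{a_2-a_1})=\mathcal{O}(1)$ whenever $a_1\ge a_2$, so it can be dropped outright and Lemma~\ref{lem:sum-dof-of-2user-MAC} finishes the job.

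The payoff of the paper's route is that no Schur-complement correction $C$ ever appears, so the subtle estimate $C\preceq\mathcal{O}(\rho^{a_2-a_1})(I+Y_{22})$ (and the block-positivity inequality $Y_{12}^\dagger Y_{12}\preceq\lambda_{\max}(Y_{11})Y_{22}$ behind it) is not needed; consequently the non-strict ordering $a_1\ge a_2\ge a_3$ suffices and your preprocessing step of merging equal exponents becomes unnecessary. Your approach, on the other hand, works entirely in the receive space and makes explicit which PSD inequalities are doing the work, which is instructive in its own right; but the Sylvester trick gives a shorter and more robust argument.
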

\begin{proof}
The proof is given in Appendix~\ref{pf_lem_my_MAC3}.
\end{proof}

\begin{rem}
Suppose $a_1\geq \max\{a_2,a_3\}$ in Lemma~\ref{lem:sum-dof-of-3user-MAC}, then $g(.)$ can also be written as
\begin{IEEEeqnarray*}{rl}
g\left(u,(a_1,u_1),(a_2,u_2),(a_3,u_3)\right)=\min\{u,u_1\}a_1^+ + f\left((u-u_1)^+,(a_2,u_2),(a_3,u_3)\right).
\end{IEEEeqnarray*}
For example, $g\left(10,(.5,3),(1,4),(1.2,2)\right)= 2(1.2)+ f\left(8,(.5,3),(1,4)\right)$ $
=2.4+ 4(1)+3(.5)=7.9$.

\end{rem}

\begin{rem}
\label{rem_MAC_interpretation}
g(.) in Lemma~\ref{lem:sum-dof-of-3user-MAC} represents the sum DoFs achievable on a 3-user MIMO multiple-access channel (MAC) with $u$ antennas
at the receiver, $u_i$ antennas at the $i^{th}$ transmitter, where the SNR of the $i^{th}$ user is $\rho^{a_i}$ for $i\in \{1,2,3\}$. Similarly, $f(.)$ in Lemma~\ref{lem:sum-dof-of-2user-MAC} can be interpreted as the sum GDoF achievable on a 2-user MIMO MAC.
\end{rem}

\section{The GDoF region of the MIMO IC}
\label{sec_main_result}
Using the explicit expression for the upper bounds to the capacity region of the MIMO IC from Lemma \ref{lem_upper_bound} and using it in Lemma~\ref{lem_alternate_def_GDOF} we get the main result of this paper.
\begin{thm}
\label{thm_mainresult}
The GDoF region of $\mathcal{IC}(\bar{M},\bar{\alpha})$ is the set of DoF tuples $(d_1,d_2)$, denoted by $\mathcal{D}_o(\bar{M},\bar{\alpha})$, where $d_i\in \mathbb{R}^{+}$ for $i=1,2$ satisfy the following conditions:
\begin{IEEEeqnarray*}{rl}
\label{eq_gdofr1}
d_1\leq &\min\{M_1, N_1\};\\
d_2\leq &\min\{M_2, N_2\};\\
\label{eq_gdofr3}
(d_1+\alpha_{22}d_2)\leq & f\left(N_2, (\alpha_{12},M_1),(\alpha_{22},M_2)\right)+ f\left(N_1, (\beta_{12},m_{12}),(\alpha_{11},(M_1-N_2)^+)\right);\\
\label{eq_gdofr4}
(d_1+\alpha_{22}d_2)\leq & f\left(N_1, (\alpha_{21},M_2),(\alpha_{11},M_1)\right)+f\left(N_2, (\beta_{21},m_{21}),(\alpha_{22},(M_2-N_1)^+)\right);\\
(d_1+\alpha_{22}d_2)\leq & g\left(N_1,(\alpha_{21},M_2),(\beta_{12},m_{12}),(1,(M_1-N_2)^+)\right)+\nonumber \\
\label{eq_gdofr5}
&~~~~~~~~~~~~g\left(N_2,(\alpha_{12},M_1),(\beta_{21},m_{21}),(\alpha_{22},(M_2-N_1)^+)\right);\\
(2d_1+\alpha_{22}d_2)\leq & f\left(N_1, (\alpha_{21},M_2),(\alpha_{11},M_1)\right)+f\left(N_1, (\beta_{12},m_{12}),(\alpha_{11},(M_1-N_2)^+)\right)+\nonumber\\
\label{eq_gdofr6}
&~~~~~~~~~~~~g\left(N_2,(\alpha_{12},M_1),(\beta_{21},m_{21}),(\alpha_{22},(M_2-N_1)^+)\right);\\
(d_1+2\alpha_{22}d_2)\leq & f\left(M_2, (\alpha_{21},N_1),(\alpha_{22},N_2)\right)+f\left(N_2, (\beta_{21},m_{21}),(\alpha_{22},(M_2-N_1)^+)\right)+\nonumber\\
\label{eq_gdofr7}
&~~~~~~~~~~~~g\left(N_1,(\alpha_{21},M_2),(\beta_{12},m_{12}),(1,(M_1-N_2)^+)\right),
\end{IEEEeqnarray*}
where $\beta_{ij}=(\alpha_{ii}-\alpha_{ij})^+$, functions $f(.,.,.)$ and g(.,.,.,.) are as defined in equation \eqref{eq_def_f} and \eqref{eq_def_g}, respectively, for $i\neq j\in \{1,2\}$ and and $m_{ij} \triangleq \min\{M_i,N_j\}$ as defined before.
\end{thm}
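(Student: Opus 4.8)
\textbf{Proof plan.} The plan is to substitute the explicit outer bound $\mathcal{R}^u(\mathcal{H},\bar\alpha)$ of Lemma~\ref{lem_upper_bound} into Lemma~\ref{lem_alternate_def_GDOF} and to read off the leading (in $\log\rho$) behaviour of every term that appears. By Lemma~\ref{lem_alternate_def_GDOF}, $(d_1,d_2)\in\mathcal{D}_o(\bar M,\bar\alpha)$ precisely when rates $R_i=d_i\log\rho_{ii}\,(1+o(1))$ satisfy all seven inequalities of Lemma~\ref{lem_upper_bound} as $\rho\to\infty$. Because $\log\rho_{ii}=\alpha_{ii}\log\rho\,(1+o(1))$ and $\alpha_{11}=1$, dividing a constraint $c_1R_1+c_2R_2\le(\text{sum of }\log\det\text{ terms})$ by $\log\rho$ turns its left-hand side into $c_1d_1+c_2\alpha_{22}d_2$; this is the origin of the factors $\alpha_{22}$ multiplying $d_2$ in the statement. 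It therefore remains to show that the right-hand side of each constraint equals $(\text{the indicated combination of }f\text{ and }g)\log\rho+\mathcal{O}(1)$.

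First I would dispose of the ``plain'' determinants $\log\det(I_u+\rho^{a_1}H_1H_1^\dagger+\rho^{a_2}H_2H_2^\dagger)$ and their three-summand analogues. By Remark~\ref{rem:channel-distribution} the incoming matrices at any receiver are mutually independent and lie in $\mathcal{P}$, so the relevant stacked matrices are full rank w.p.~$1$, and Lemmas~\ref{lem:sum-dof-of-2user-MAC}--\ref{lem:sum-dof-of-3user-MAC} give such terms as $f(u,(a_1,u_1),(a_2,u_2))\log\rho+\mathcal{O}(1)$ (resp.\ $g(\cdots)\log\rho+\mathcal{O}(1)$), with arguments $(\alpha_{ij},M_i)$ or $(\alpha_{ij},N_i)$ read directly from the bound. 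For a determinant sitting inside a weighted-sum bound I would, where it lowers the effective receive dimension, first apply $\det(I+AA^\dagger)=\det(I+A^\dagger A)$ to a single summand so that Lemma~\ref{lem:sum-dof-of-2user-MAC} is used with the minimal index; this is the point to be watchful about in the $R_1+2R_2$ bound, whose first determinant must be expressed so as to contribute $f\big(M_2,(\alpha_{21},N_1),(\alpha_{22},N_2)\big)\log\rho+\mathcal{O}(1)$, i.e.\ with receive dimension $M_2$ rather than $N_2$.

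The substantive step is the terms built from $K_i=M_iK_{iu}$, typically $\rho_{ii}H_{ii}K_iH_{ii}^\dagger$. Here I would use the spectral description already recorded in \eqref{eq:expansion-of-kiu}--\eqref{eq:expansion-of-kiw}: in the eigenbasis $U_{ij}$ of $H_{ij}^\dagger H_{ij}$ the matrix $K_i$ has $(M_i-N_j)^+$ eigenvalues equal to $1$ -- the directions spanning the null space of $H_{ij}$ -- and $m_{ij}=\min\{M_i,N_j\}$ eigenvalues of the form $(1+\rho^{\alpha_{ij}}\lambda_{ij}^{(k)})^{-1}$. Splitting $U_{ij}$ into its first $m_{ij}$ and last $(M_i-N_j)^+$ columns, one finds that, at the cost of an $\mathcal{O}(1)$ error inside the $\log\det$, the term $\rho_{ii}H_{ii}K_iH_{ii}^\dagger$ may be replaced by
\[
\rho^{\beta_{ij}}\,\widetilde H_1\widetilde H_1^\dagger \;+\; \rho^{\alpha_{ii}}\,\widetilde H_2\widetilde H_2^\dagger ,
\]
where $\beta_{ij}=(\alpha_{ii}-\alpha_{ij})^+$ (the positive part because directions whose received power has sunk below the noise floor contribute nothing to leading order), $\widetilde H_1=H_{ii}U_{ij}^{[1:m_{ij}]}$ has $m_{ij}$ columns and $\widetilde H_2=H_{ii}U_{ij}^{[m_{ij}+1:M_i]}$ has $(M_i-N_j)^+$ columns. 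Since $U_{ij}$ depends only on the independent cross-link $H_{ij}$, unitary invariance of $\mathcal{P}$ guarantees that $\widetilde H_1$, $\widetilde H_2$ and their concatenations with any other incoming matrix at that receiver are again full rank w.p.~$1$, so Lemmas~\ref{lem:sum-dof-of-2user-MAC}--\ref{lem:sum-dof-of-3user-MAC} apply once more, yielding for instance $\log\det(I_{N_1}+\rho_{11}H_{11}K_1H_{11}^\dagger)=f\big(N_1,(\beta_{12},m_{12}),(1,(M_1-N_2)^+)\big)\log\rho+\mathcal{O}(1)$ and $\log\det(I_{N_1}+\rho_{21}H_{21}H_{21}^\dagger+\rho_{11}H_{11}K_1H_{11}^\dagger)=g\big(N_1,(\alpha_{21},M_2),(\beta_{12},m_{12}),(1,(M_1-N_2)^+)\big)\log\rho+\mathcal{O}(1)$, together with their user-$2$ mirror images.

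Substituting the estimates of the previous two paragraphs into the seven inequalities of Lemma~\ref{lem_upper_bound}, dividing by $\log\rho$ and letting $\rho\to\infty$ so the bounded errors vanish, reproduces exactly the seven stated inequalities; combined with Lemma~\ref{lem_alternate_def_GDOF} this identifies $\mathcal{D}_o(\bar M,\bar\alpha)$. I expect the main obstacle to be the $K_i$ step of the third paragraph: one must verify that collapsing the spectrum of $K_i$ to its two levels $\{1,\rho^{-\alpha_{ij}}\}$ and truncating negative exponents to $\beta_{ij}$ costs only $\mathcal{O}(1)$ inside each $\log\det$, and -- the genuinely delicate point -- that the manufactured channels $H_{ii}U_{ij}^{[\cdot]}$ still satisfy the genericity hypotheses demanded by Lemmas~\ref{lem:sum-dof-of-2user-MAC}--\ref{lem:sum-dof-of-3user-MAC}; this is precisely where the unitary-invariance property built into the class $\mathcal{P}$ is indispensable. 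Correctly matching the several mixed determinants of the weighted-sum bounds \eqref{eq_bound6}--\eqref{eq_bound7} to their respective $f$ or $g$ (using the rewriting noted in the second paragraph) is the remaining bookkeeping.
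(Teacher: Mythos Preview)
Your approach is essentially the paper's own: substitute $\mathcal{R}^u$ into Lemma~\ref{lem_alternate_def_GDOF}, divide by $\log\rho$, and compute the leading order of every $\log\det$ via Lemmas~\ref{lem:sum-dof-of-2user-MAC}--\ref{lem:sum-dof-of-3user-MAC}. Your treatment of the $K_i$ terms by spectral splitting into the $m_{ij}$ ``attenuated'' and $(M_i-N_j)^+$ ``null-space'' columns of $U_{ij}$, together with unitary invariance of $\mathcal{P}$, is exactly what the paper does for $I_{51}$ in Appendix~\ref{proof_thm_mainresult}; the paper happens to handle the standalone term $I_{32}=\log\det(I_{N_1}+\rho_{11}H_{11}K_1H_{11}^\dagger)$ by the alternative difference-of-$\log\det$'s identity (write it as $\log\det(I_{M_1}+\rho_{12}H_{12}^\dagger H_{12}+\rho_{11}H_{11}^\dagger H_{11})-\log\det(I_{M_1}+\rho_{12}H_{12}^\dagger H_{12})$ and apply Lemma~\ref{lem:sum-dof-of-2user-MAC} twice), but your uniform spectral argument gives the same $f\big(N_1,(\beta_{12},m_{12}),(1,(M_1-N_2)^+)\big)$.

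One caution: your remark about the $R_1+2R_2$ bound is off. The first determinant there is $\log\det(I_{N_2}+\rho_{12}H_{12}H_{12}^\dagger+\rho_{22}H_{22}H_{22}^\dagger)$, identical to $I_{31}$ in bound~\eqref{eq_bound3}, and Lemma~\ref{lem:sum-dof-of-2user-MAC} gives it directly as $f\big(N_2,(\alpha_{12},M_1),(\alpha_{22},M_2)\big)\log\rho+\mathcal{O}(1)$; no Sylvester-type rewriting is needed (nor is ``applying $\det(I+AA^\dagger)=\det(I+A^\dagger A)$ to a single summand'' possible when two summands are present). The expression $f\big(M_2,(\alpha_{21},N_1),(\alpha_{22},N_2)\big)$ appearing in the theorem's seventh line is a typographical slip---the SISO specialization in Corollary~\ref{gdof-siso-ic} and the proof in Appendix~\ref{proof_thm_mainresult} both confirm the intended term is $f\big(N_2,(\alpha_{12},M_1),(\alpha_{22},M_2)\big)$, the same as in the third bound. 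Don't try to manufacture a derivation matching the misprint.
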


\begin{proof}[Proof of Theorem~\ref{thm_mainresult}(Outline)]
From Lemma~\ref{lem_alternate_def_GDOF} we see that the GDoF region of the 2-user MIMO IC is simply the scaled version of the rate region $\mathcal{R}^u(\mathcal{H},\bar{\alpha})$. Thus to evaluate the GDoF region we simply need to scale all the terms in each of the equations of $\mathcal{R}^u(\mathcal{H},\bar{\alpha})$. For example, consider the third bound in equation ~\eqref{eq_bound3}
\begin{IEEEeqnarray*}{l}
R_1+R_2\leq I_{b3}.
\end{IEEEeqnarray*}
Dividing both sides by $\log(\rho_{11})$ and taking the limit we get
\begin{IEEEeqnarray*}{l}
\lim_{\rho_{11}\to \infty}\frac{R_1+R_2}{\log(\rho_{11})}\leq  \lim_{\rho_{11}\to \infty}\frac{I_{b3}}{\log(\rho_{11})};\\
\Rightarrow~\lim_{\rho \to \infty}\left\{ \frac{R_1}{\log(\rho)}+\frac{\alpha_{22} R_2}{\alpha_{22}\log(\rho)}\right\}\leq  \lim_{\rho\to \infty}\frac{I_{b3}}{\log(\rho)};\\
\Rightarrow~\lim_{\rho \to \infty} \frac{R_1}{\log(\rho)}+\lim_{\rho_{22} \to \infty}\frac{\alpha_{22} R_2}{\log(\rho_{22})}\leq  \lim_{\rho\to \infty}\frac{I_{b3}}{\log(\rho)};\\
\Rightarrow~d_1+\alpha_{22} d_2 \leq  \lim_{\rho\to \infty}\frac{I_{b3}}{\log(\rho)}.
\end{IEEEeqnarray*}
Following the same steps for other bounds we get
\begin{IEEEeqnarray}{rl}
\label{eq_outline_gdof_b1}
\mathcal{D}_o(\bar{M},\bar{\alpha})=\Bigg\{(d_1,d_2):~d_1\leq &\lim_{\rho\to \infty}\frac{I_{b1}}{\log(\rho)};\\
\label{eq_outline_gdof_b2}
d_2\leq &\lim_{\rho\to \infty}\frac{I_{b2}}{\log(\rho)};\\
\label{eq_outline_gdof_b3}
d_1+\alpha_{22}d_2\leq & \lim_{\rho\to \infty}\frac{I_{b3}}{\log(\rho)};\\
\label{eq_outline_gdof_b4}
d_1+\alpha_{22}d_2\leq & \lim_{\rho\to \infty}\frac{I_{b4}}{\log(\rho)};\\
\label{eq_outline_gdof_b5}
d_1+\alpha_{22}d_2\leq & \lim_{\rho\to \infty}\frac{I_{b5}}{\log(\rho)};\\
\label{eq_outline_gdof_b6}
2d_1+\alpha_{22}d_2\leq & \lim_{\rho\to \infty}\frac{I_{b6}}{\log(\rho)};\\
\label{eq_outline_gdof_b7}
d_1+2\alpha_{22}d_2\leq & \lim_{\rho\to \infty}\frac{I_{b7}}{\log(\rho)};\Bigg\},
\end{IEEEeqnarray}
To prove the theorem we have to evaluate the right hand side limits, which can be done by finding the asymptotic approximations of the different $I_{bi}$s of Lemma~\ref{lem_upper_bound} using Lemmas \ref{lem:sum-dof-of-2user-MAC} and \ref{lem:sum-dof-of-3user-MAC}. The detailed proof is given in Appendix~\ref{proof_thm_mainresult}.
\end{proof}

The above theorem is specialized next to the SISO IC (by putting $M_1=M_2=N_1=N_2=1$) in the following corollary, yielding its GDoF region.
\begin{cor}
\label{gdof-siso-ic}
The GDoF region of the SISO IC is given as
\begin{subequations}
\label{eq:GDoF-region-of-SISO-IC}
\begin{align}
\mathcal{D}_{\textrm{SISO}}=\Big\{(d_1,d_2):d_1\leq &1;\\
d_2\leq &1;\\
(d_1+\alpha_{22} d_2)\leq & \max\{\alpha_{22},\alpha_{12}\}+(1-\alpha_{12})^+;\\
(d_1+\alpha_{22} d_2)\leq & \max\{1,\alpha_{21}\}+(\alpha_{22}-\alpha_{21})^+;\\
(d_1+\alpha_{22} d_2)\leq & \max\{\alpha_{21}, (1-\alpha_{12})^+\}+\max\{\alpha_{12}, (\alpha_{22}-\alpha_{21})^+\};\\
(2d_1+\alpha_{22} d_2)\leq & \max\{1,\alpha_{21}\}+(1-\alpha_{12})^+\max\{\alpha_{12}, (\alpha_{22}-\alpha_{21})^+\};\\
(d_1+2\alpha_{22} d_2)\leq & \max\{\alpha_{22},\alpha_{12}\}+\max\{\alpha_{21}, (1-\alpha_{12})^+\}+(\alpha_{22}-\alpha_{21})^+\Big\}
\end{align}
\end{subequations}
\end{cor}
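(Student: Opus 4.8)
The plan is to obtain the corollary by a direct specialization of Theorem~\ref{thm_mainresult}: set $M_1=M_2=N_1=N_2=1$, record the resulting values of the auxiliary quantities, and simplify the functions $f(\cdot)$ and $g(\cdot)$ at these arguments. With unit antenna counts one has $m_{ij}=\min\{M_i,N_j\}=1$ and $(M_i-N_j)^+=0$ for all $i,j$, and, using the normalization $\alpha_{11}=1$, the cross-exponents become $\beta_{12}=(\alpha_{11}-\alpha_{12})^+=(1-\alpha_{12})^+$ and $\beta_{21}=(\alpha_{22}-\alpha_{21})^+$. The first two bounds of Theorem~\ref{thm_mainresult}, $d_i\le\min\{M_i,N_i\}$, collapse immediately to $d_i\le 1$.

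The heart of the argument is three elementary evaluations of $f$ and $g$ that follow directly from their piecewise definitions \eqref{eq_def_f} and \eqref{eq_def_g}, valid for any nonnegative exponents $a,b,c$:
\[
f\big(1,(a,1),(b,1)\big)=\max\{a,b\},\qquad f\big(1,(a,1),(b,0)\big)=a,\qquad g\big(1,(a,1),(b,1),(c,0)\big)=\max\{a,b\}.
\]
The first holds because the larger of the two unit-dimensional ``levels'' occupies the single available receive dimension and contributes its exponent, while the other contributes $\min\{(1-1)^+,1\}(\cdot)=0$; the second and third follow from the same reasoning together with the observation that a stream carrying zero transmit dimensions contributes $\min\{\cdot,0\}(\cdot)=0$ and consumes no receive dimension, so the value is unaffected by its exponent. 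Because the $\alpha$'s are nonnegative and the $\beta$'s are already of the form $(\cdot)^+$, every ``$(\cdot)^+$'' appearing inside $f$ and $g$ in Theorem~\ref{thm_mainresult} is vacuous, which is exactly what lets these quantities collapse to plain maxima.

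Applying these identities term by term to the sum-rate bounds of Theorem~\ref{thm_mainresult} then produces the five weighted inequalities of the corollary. For instance, $f\big(N_2,(\alpha_{12},M_1),(\alpha_{22},M_2)\big)\to\max\{\alpha_{12},\alpha_{22}\}$ and $f\big(N_1,(\beta_{12},m_{12}),(\alpha_{11},(M_1-N_2)^+)\big)\to(1-\alpha_{12})^+$, giving the third bound; the $g$-terms such as $g\big(N_1,(\alpha_{21},M_2),(\beta_{12},m_{12}),(1,(M_1-N_2)^+)\big)$ become $\max\{\alpha_{21},(1-\alpha_{12})^+\}$, and so on. One can equivalently read these off by scaling the seven bounds of $\mathcal{R}^u(\mathcal{H},\bar{\alpha})$ in Lemma~\ref{lem_upper_bound} directly by $\log(\rho)$ and invoking Lemmas~\ref{lem:sum-dof-of-2user-MAC}--\ref{lem:sum-dof-of-3user-MAC} with $u=u_1=u_2=u_3=1$, exactly as in the proof of Theorem~\ref{thm_mainresult}; this second route is a useful cross-check for making sure each $f/g$ slot is matched to the correct row/column dimension of the underlying $\log\det$.

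There is no deep obstacle here — the work is bookkeeping — but the step that will demand care is precisely this term-matching together with the tie cases of the piecewise definitions (equal exponents $a=b$, or an exponent equal to $0$). In each such case one must check that the two branches of \eqref{eq_def_f}, or the three orderings in \eqref{eq_def_g}, agree; the verifications are short but are needed to be sure the displayed identities hold on the boundaries as well as the interior. Finally one notes that the constant gaps $n_i$ of Lemma~\ref{lem_achievable_region} and the $\mathcal{O}(1)$ terms in Lemmas~\ref{lem:sum-dof-of-2user-MAC}--\ref{lem:sum-dof-of-3user-MAC} wash out in the $\log(\rho)\to\infty$ limit, so the resulting region is exactly the GDoF region of the SISO IC.
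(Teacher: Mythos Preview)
Your proposal is correct and follows exactly the route the paper takes: the corollary is obtained simply ``by putting $M_1=M_2=N_1=N_2=1$'' in Theorem~\ref{thm_mainresult}, and the paper offers no further detail beyond that sentence. Your explicit evaluations $f(1,(a,1),(b,1))=\max\{a,b\}$, $f(1,(a,1),(b,0))=a$, and $g(1,(a,1),(b,1),(c,0))=\max\{a,b\}$ are the right bookkeeping and in fact supply more justification than the paper itself does.
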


\begin{rem}
The region of Corollary \ref{gdof-siso-ic} provides a single unified formula for the GDoF region of the SISO IC for all interference regimes. It can be specialized to obtain GDoF regions for different interference regimes given in Section V of \cite{ETW1} such as weak interference, mixed interference, strong interference, etc., separate formulas for each of which are given therein. For instance, in the {\em weak} interference regime defined by $\alpha_{12}\leq 1$ and $\alpha_{21}\leq \alpha_{22}$, we have
\begin{eqnarray*}
\max\{\alpha_{22},\alpha_{12}\}+(1-\alpha_{12})^+=\max\{\alpha_{22},\alpha_{12}\}+(1-\alpha_{12})=1+(\alpha_{22}-\alpha_{12})^+;\\
\max\{1,\alpha_{21}\}+(\alpha_{22}-\alpha_{21})^+=\max\{1,\alpha_{21}\}+(\alpha_{22}-\alpha_{21})=\alpha_{22}+(1-\alpha_{21})^+.
\end{eqnarray*}
Substituting these identities in equation \eqref{eq:GDoF-region-of-SISO-IC} we recover equation (78) of \cite{ETW1} which represents the GDoF region of the SISO IC in the weak interference regime (note that $\alpha_{22}, \alpha_{21} $ and $\alpha_{12}$ are denoted as $\alpha_1, \alpha_2$ and $\alpha_3$ in \cite{ETW1}). Similarly for instance, equations (82) and (84) of \cite{ETW1} can be recovered for the mixed and strong interference channels, respectively, by simplifying the result of Corollary \ref{gdof-siso-ic} according to the defining conditions on the $\alpha$'s for those regimes.
\end{rem}

\begin{rem}
The conventional DoF region of the MIMO IC obtained in \cite{JFak} can also be recovered from Theorem~\ref{thm_mainresult} by putting $\alpha_{ij}=1$, for $1\leq i,j\leq 2$ in Theorem~\ref{thm_mainresult} and simplifying the different bounds. Consequently, we get
\begin{subequations}
\label{eq:DoF-region-of-MIMO-IC}
\begin{align}
\label{eq:DoF-region-of-MIMO-IC-a}
\mathcal{D}_{\textrm{DoF}}=\Big\{(d_1,d_2):d_1\leq &\min\{M_1,N_1\};\\
\label{eq:DoF-region-of-MIMO-IC-b}
d_2\leq &\min\{M_1,N_1\};\\
\label{eq:DoF-region-of-MIMO-IC-c}
(d_1+d_2)\leq & (N_2\land (M_1+M_2))+N_1\land (M_1-N_2)^+;\\
\label{eq:DoF-region-of-MIMO-IC-d}
(d_1+d_2)\leq & (N_1\land (M_1+M_2))+N_2\land (M_2-N_1)^+;\\
\label{eq:DoF-region-of-MIMO-IC-e}
(d_1+d_2)\leq & (N_1\land M_2)+ ((N_1-M_2)^+\land (M_1-N_2)^+) + \nonumber\\
&~~~(N_2\land M_1)+ ((M_1-N_2)^+\land(M_2-N_1)^+);\\
\label{eq:DoF-region-of-MIMO-IC-f}
(2d_1+d_2)\leq & (N_1\land (M_1+M_2))+N_1\land (M_1-N_2)^++\nonumber\\
&~~~(N_2\land M_1)+ ((M_1-N_2)^+\land(M_2-N_1)^+);\\
\label{eq:DoF-region-of-MIMO-IC-g}
(d_1+2d_2)\leq & (N_2\land (M_1+M_2))+ N_2\land (M_2-N_1)^++ \nonumber\\
&~~~(N_1\land M_2)+ ((N_1-M_2)^+\land (M_1-N_2)^+)\Big\}
\end{align}
\end{subequations}

\begin{cor}[The main result of \cite{JFak}]
\label{claim:DoF-region}
The DoF region of the 2-user MIMO IC is given as
\begin{IEEEeqnarray*}{rl}
\mathcal{D}_{\textrm{DoF}}=\Big\{(d_1,d_2):d_1\leq &\min\{M_1,N_1\};\\
d_2\leq &\min\{M_2,N_2\};\\
(d_1+d_2)\leq & \min\{(M_1+M_2),(N_1+N_2),\max(M_1,N_2),\max(M_1,N_2)\}\Big\}.
\end{IEEEeqnarray*}
\end{cor}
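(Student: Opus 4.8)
The plan is to derive this compact three-inequality description of the DoF region from the seven-inequality description in equation \eqref{eq:DoF-region-of-MIMO-IC}, which is itself the specialization of Theorem~\ref{thm_mainresult} to $\alpha_{ij}=1$ for all $i,j$. The first two bounds $d_i \le \min\{M_i,N_i\}$ are already in the required form, so the entire content of the claim is that, in the DoF (symmetric-exponent) setting, the single sum-rate bound $d_1+d_2 \le \min\{M_1+M_2,\ N_1+N_2,\ \max(M_1,N_2),\ \max(M_2,N_1)\}$ subsumes all five of the remaining weighted-sum bounds \eqref{eq:DoF-region-of-MIMO-IC-c}--\eqref{eq:DoF-region-of-MIMO-IC-g} (given the two individual bounds). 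I would prove this in two directions: (i) every tuple satisfying the three bounds of the Corollary satisfies all seven bounds of \eqref{eq:DoF-region-of-MIMO-IC}; and (ii) conversely — though for a clean statement it suffices to show the two regions coincide, and since \eqref{eq:DoF-region-of-MIMO-IC} is already proven equal to the DoF region, I only need one inclusion plus the observation that the three-bound region is clearly no smaller on the $d_1+d_2$ face. In practice the cleanest route is to show the polytope \eqref{eq:DoF-region-of-MIMO-IC} equals the polytope of the Corollary directly.

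First I would simplify the right-hand sides of \eqref{eq:DoF-region-of-MIMO-IC-c} and \eqref{eq:DoF-region-of-MIMO-IC-d} using the identity $(N\land(M_1+M_2)) + (N'\land(M'-N)^+)$-type manipulations: note that $(N_2\land(M_1+M_2)) + (N_1\land(M_1-N_2)^+)$ equals $\min\{M_1+M_2,\ N_2 + (N_1\land(M_1-N_2)^+)\}$, and a short case analysis on the sign of $M_1-N_2$ and on which of $N_1,M_1$ is larger shows this collapses to $\min\{M_1+M_2,\ \max(M_1,N_2),\ N_1+N_2\}$ when one also invokes $d_1\le\min\{M_1,N_1\}$ and $d_2\le\min\{M_2,N_2\}$ — actually the bound \eqref{eq:DoF-region-of-MIMO-IC-c} by itself gives $d_1+d_2 \le \min\{M_1+M_2,\ \max(M_1,N_2)\}$ after the case split, because the term $N_1\land(M_1-N_2)^+$ is either $0$ (when $M_1\le N_2$, giving bound $N_2\ge\max(M_1,N_2)$... wait, giving $N_2$, and $N_2=\max(M_1,N_2)$ here) or adds to $N_2$ to give at most $M_1$ or $N_1+N_2$. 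Symmetrically \eqref{eq:DoF-region-of-MIMO-IC-d} gives $d_1+d_2\le\min\{M_1+M_2,\ \max(M_2,N_1),\ N_1+N_2\}$. Together \eqref{eq:DoF-region-of-MIMO-IC-c} and \eqref{eq:DoF-region-of-MIMO-IC-d} thus reproduce exactly the sum bound of the Corollary. The remaining task is then to check that \eqref{eq:DoF-region-of-MIMO-IC-e}, \eqref{eq:DoF-region-of-MIMO-IC-f}, \eqref{eq:DoF-region-of-MIMO-IC-g} are implied by the Corollary's three bounds.

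For \eqref{eq:DoF-region-of-MIMO-IC-e}, I would show its right-hand side is at least $\min\{M_1+M_2,\ N_1+N_2,\ \max(M_1,N_2),\ \max(M_2,N_1)\}$ by a case analysis on the four sign patterns of $(M_1-N_2)$ and $(M_2-N_1)$; in each case the four "meet" terms telescope to one of the four quantities inside the outer $\min$. For \eqref{eq:DoF-region-of-MIMO-IC-f}, which bounds $2d_1+d_2$, I would write $2d_1+d_2 = d_1 + (d_1+d_2)$ and bound $d_1 \le \min\{M_1,N_1\}$ by the individual bound and $d_1+d_2$ by the Corollary's sum bound, then verify $\min\{M_1,N_1\} + \min\{M_1+M_2,N_1+N_2,\max(M_1,N_2),\max(M_2,N_1)\}$ is $\ge$ the RHS of \eqref{eq:DoF-region-of-MIMO-IC-f}, again via a short case split on $M_1$ vs.\ $N_1$ and $M_1$ vs.\ $N_2$; \eqref{eq:DoF-region-of-MIMO-IC-g} follows by the symmetric argument (swap indices $1\leftrightarrow 2$ on both transmitters and receivers). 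The main obstacle is purely bookkeeping: the outer/inner $\min$ and $(\cdot)^+$ operators mean each verification splits into several sub-cases, and one must be careful that the reduction of \eqref{eq:DoF-region-of-MIMO-IC-c}--\eqref{eq:DoF-region-of-MIMO-IC-d} uses only the already-available individual bounds $d_i\le\min\{M_i,N_i\}$ and nothing circular. No genuinely hard inequality arises; the whole proof is a finite case check, which is why the statement is phrased as a corollary. The details are relegated to an appendix.
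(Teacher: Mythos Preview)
Your approach is essentially the paper's: start from the seven-bound region \eqref{eq:DoF-region-of-MIMO-IC}, simplify \eqref{eq:DoF-region-of-MIMO-IC-c} and \eqref{eq:DoF-region-of-MIMO-IC-d} via a case split on the sign of $M_1-N_2$ (resp.\ $M_2-N_1$) to obtain exactly the compact sum bound of the Corollary, and then argue that \eqref{eq:DoF-region-of-MIMO-IC-e}--\eqref{eq:DoF-region-of-MIMO-IC-g} are redundant given \eqref{eq:DoF-region-of-MIMO-IC-a}--\eqref{eq:DoF-region-of-MIMO-IC-d}. The paper does precisely this, with the redundancy check relegated to Appendix~\ref{App:proof-of-DoF-region}, where the right-hand sides of \eqref{eq:DoF-region-of-MIMO-IC-e} and \eqref{eq:DoF-region-of-MIMO-IC-f} are expanded by the same four-way sign-pattern case analysis you propose.

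One slip: in your treatment of \eqref{eq:DoF-region-of-MIMO-IC-f} you write that you will ``verify $\min\{M_1,N_1\}+\min\{M_1+M_2,N_1+N_2,\max(M_1,N_2),\max(M_2,N_1)\}$ is $\ge$ the RHS of \eqref{eq:DoF-region-of-MIMO-IC-f}''. The inequality must go the other way. You have $2d_1+d_2 = d_1+(d_1+d_2)\le \min\{M_1,N_1\}+S$ from the Corollary's bounds; to conclude that \eqref{eq:DoF-region-of-MIMO-IC-f} is \emph{implied} (i.e.\ never active) you need $\min\{M_1,N_1\}+S \le \text{RHS of \eqref{eq:DoF-region-of-MIMO-IC-f}}$, not $\ge$. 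This is exactly what the paper establishes in Appendix~\ref{App:proof-of-DoF-region} when it says the bound ``is looser than the sum of the $1^{\rm st}$ and $3^{\rm rd}$ or the sum of the $1^{\rm st}$ and the $4^{\rm th}$ bounds''. With that direction corrected, your plan and the paper's proof coincide.
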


\begin{proof}
We obtain this result starting from equation \eqref{eq:DoF-region-of-MIMO-IC}. Let us consider the sum bound of equation~\eqref{eq:DoF-region-of-MIMO-IC-c},
\begin{IEEEeqnarray*}{rl}
(d_1+d_2)\leq & (N_2\land (M_1+M_2))+N_1\land (M_1-N_2)^+,\\
=&\{N_2\land (M_1+M_2)\}1(N_2\geq M_1)+\{N_2+ (N_1\land (M_1-N_2))\}1(N_2< M_1),\\
=&\{N_2\land (M_1+M_2)\}1(N_2\geq M_1)+\{(N_2+N_1)\land M_1\}1(N_2< M_1),\\
=&\min \{(M_1+M_2),(N_1+N_2),\max(N_2,M_1)\}.
\end{IEEEeqnarray*}
Similarly, simplifying the bound in equation \eqref{eq:DoF-region-of-MIMO-IC-d} it can be shown that
\begin{IEEEeqnarray*}{rl}
(d_1+d_2)\leq & \min \{(M_1+M_2),(N_1+N_2),\max(N_1,M_2)\}.
\end{IEEEeqnarray*}
Moreover, in Appendix~\ref{App:proof-of-DoF-region} it will be shown that the bounds in equations \eqref{eq:DoF-region-of-MIMO-IC-e}-\eqref{eq:DoF-region-of-MIMO-IC-g} are looser than those in equation \eqref{eq:DoF-region-of-MIMO-IC-a}-\eqref{eq:DoF-region-of-MIMO-IC-d}. Finally, combining the simplified forms of the 3-rd and 4-th bounds above, the  claim is proved.
\end{proof}

\end{rem}

\begin{rem}
The GDoF region of the 2-user MIMO multiple-access channel (MAC) with an arbitrary number of antennas at the three terminals can also be found as a by-product of the analysis of the MIMO IC. This is detailed in Appendix \ref{App:GDoF:MIMO-MAC}.
\end{rem}

\begin{rem}[Interpretation of the different bounds]
We know that the GDoF optimal coding scheme divides each user's message into two sub-messages. Let the DoFs of the private and the public messages of user $i$ be denoted by $d_{ip}$ and $d_{ic}$, respectively. Note that $H_{12}$ has a $(M_1-N_2)^+$-dimensional null space along which $Tx_1$ can send private information to its desired receiver at an SNR of $\rho^{\alpha_{11}}$. Along the remaining $m_{12}$ dimensions $Tx_1$ can send private information only at a power level of $\rho^{-\alpha_{12}}$ which reaches $Rx_1$ at a power level of $\rho^{(\alpha_{11}-\alpha_{12})^+}$. Thus with respect to the private information of $Tx_1$, $Rx_1$ is a MAC with 2 virtual transmitters having SNRs $\rho^{\alpha_{11}}$ and $\rho^{(\alpha_{11}-\alpha_{12})^+}$ and $(M_1-N_2)^+$ and $m_{12}$ transmit antennas, respectively. Hence, from Lemma~\ref{lem:sum-dof-of-2user-MAC}, we have
\begin{equation*}
d_{1p}\leq f\left(N_1, ((\alpha_{11}-\alpha_{12})^+,m_{12}),(\alpha_{11},(M_1-N_2)^+)\right).
\end{equation*}
On the other hand, since $d_{1c}$ is decoded at $Rx_2$, $Rx_2$ is a MAC receiver with respect to $W_1$ (having an SNR of $\rho^{\alpha_{12}}$ and $M_1$ transmit antennas) and $X_2$ (having an SNR of $\rho^{\alpha_{22}}$) and from Lemma~\ref{lem:sum-dof-of-2user-MAC} (recall Remark~\ref{rem_MAC_interpretation}) we have
\begin{equation*}
    (d_{1c}+\alpha_{22} d_2)\leq f\left(N_2, (\alpha_{12},M_1),(\alpha_{22},M_2)\right).
\end{equation*}
Combining the above two equations we get the $3^{rd}$ bound of the GDoF region. The $4^{th}$ bound can be similarly interpreted just by interchanging the roles of $Rx_1$ and $Rx_2$. As explained above, the two parts of the private message of $Tx_1$ can be thought of as two virtual users to the MAC receiver $Rx_1$; in addition to them, $Tx_2$ can send a maximum of $m_{21}\alpha_{21}$ public DoFs to $Rx_1$ through $W_2$, which can be interpreted as the $3^{rd}$ virtual user (with SNR $\rho^{\alpha_{21}}$ and $m_{21}$ transmit antennas) to the MAC receiver at $Rx_1$, and therefore, Lemma \ref{lem:sum-dof-of-3user-MAC} provides the following sum DoF uppper bound
\begin{equation*}
    (d_{1p}+\alpha_{22} d_{2c})\leq g\left(N_1,(\alpha_{21},M_2),(\beta_{12},m_{12}),(1,(M_1-N_2)^+)\right).
\end{equation*}
A similar consideration regarding the DoFs decodable at $Rx_2$ gives
\begin{equation*}
    (d_{1c}+\alpha_{22} d_{2p})\leq g\left(N_2,(\alpha_{12},M_1),(\beta_{21},m_{21}),(\alpha_{22},(M_2-N_1)^+)\right).
\end{equation*}
Combining the last two equations we get the $5^{th}$ bound of Theorem~\ref{thm_mainresult}. The other two bounds of the theorem can be similarly interpreted.
\end{rem}

\begin{rem}[DoF-Split]
The distributions and power levels of the codewords that encode each user's private and public messages are specified in Section \ref{def_coding_scheme}; however, to completely specify the encoding scheme we further need to specify the DoFs carried by the private and public messages of each user, which are denoted by $d_{ip}$ and $d_{ic}$, respectively, for $i=1,2$. The lemma below completes the specification of the GDoF optimal coding scheme by providing a set of 4-tuples, $\mathcal{G}(\bar{M},\bar{\alpha})=\{(d_{1c}, d_{1p}, d_{2c}, d_{2p})\}$, which is achievable on the 2-user MIMO IC by the GDoF optimal coding scheme of Section \ref{def_coding_scheme}. The $\mathcal{G}(\bar{M},\bar{\alpha}) $ region has the property that, for any $(d_1,d_2)\in \mathcal{D}_o(\bar{M},\bar{\alpha})$ (which is specified in Theorem \ref{thm_mainresult}), there exists an $(d_{1c}, d_{1p}, d_{2c}, d_{2p})\in \mathcal{G}(\bar{M},\bar{\alpha})$ such that $(d_{ip}+d_{ic})=d_i$ for $i=1,2$.
\end{rem}

\begin{lemma}
\label{lem:DoF-split}
The DoF pair $(d_1,d_2)\in \mathcal{D}_o(\bar{M},\bar{\alpha})$ only if there exists a 4-tuple $(d_{1c}, d_{1p},d_{2c}, d_{2p}) \in \mathcal{G}(\bar{M},\bar{\alpha})$ such that $d_i=(d_{ic}+d_{ip})$ for $i=1,2$, where $\mathcal{G}(\bar{M},\bar{\alpha})=\mathcal{G}_1(\bar{M},\bar{\alpha})\cap \mathcal{G}_2(\bar{M},\bar{\alpha})$ with $ \mathcal{G}_1(\bar{M},\bar{\alpha}) $ defined below (and with $\mathcal{G}_2(\bar{M},\bar{\alpha})$ obtained by interchanging the indexes 1 and 2 in the expression for $\mathcal{G}_1(\bar{M},\bar{\alpha})$),
\begin{subequations}
\label{eq:GDoF-sub-region}
\begin{align}
\mathcal{G}_1(\bar{M},\bar{\alpha})=\Big\{(d_{1p},d_{1c},d_{2c}):~\alpha_{11}d_{1p}\leq & f\left(N_1, (\beta_{12},m_{12}),(\alpha_{11},(M_1-N_2)^+)\right)\alpha_{11};\\
\alpha_{11}d_{1c}\leq &\min\{N_1,M_1, N_2\}\alpha_{11};\\
\alpha_{22}d_{2c}\leq &\min\{N_1,M_2\}\alpha_{21};\\
\alpha_{11}(d_{1p}+d_{1c})\leq & \min\{M_1,N_1\}\alpha_{11};\\
(\alpha_{11}d_{1p}+\alpha_{22}d_{2c})\leq & g\left(N_1,(\alpha_{21},M_2),(\beta_{12},m_{12}),(1,(M_1-N_2)^+)\right);\\
(\alpha_{11}d_{1c}+\alpha_{22}d_{2c})\leq & f\left(N_1, (\alpha_{21},M_2),(\alpha_{11},m_{12})\right);\\
(\alpha_{11}d_{1p}+\alpha_{11}d_{1c}+\alpha_{22}d_{2c})\leq & f\left(N_1, (\alpha_{21},M_2),(\alpha_{11},M_1)\right);\Big\}
\end{align}
\end{subequations}
with $\beta_{12}=(\alpha_{11}-\alpha_{12})^+$ and functions $f(.,.,.)$ and g(.,.,.,.) are as defined in equation \eqref{eq_def_f} and \eqref{eq_def_g}, respectively.
\end{lemma}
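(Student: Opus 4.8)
The plan is to prove the set inclusion $\mathcal{D}_o(\bar{M},\bar{\alpha})\subseteq \Phi\big(\mathcal{G}(\bar{M},\bar{\alpha})\big)$, where $\Phi$ denotes the linear ``sum'' map $(d_{1c},d_{1p},d_{2c},d_{2p})\mapsto(d_{1c}+d_{1p},\,d_{2c}+d_{2p})$; this is exactly the assertion of the lemma. Since the functions $f$ and $g$ of \eqref{eq_def_f} and \eqref{eq_def_g} are piecewise linear, the sets $\mathcal{G}_1(\bar{M},\bar{\alpha})$, $\mathcal{G}_2(\bar{M},\bar{\alpha})$, their intersection $\mathcal{G}$, and the image $\Phi(\mathcal{G})$ are all finitely described polyhedra, so the inclusion can be verified by Fourier--Motzkin elimination. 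Concretely, I would substitute $d_{1p}=d_1-d_{1c}$ and $d_{2p}=d_2-d_{2c}$ (keeping the feasibility constraints $0\le d_{1c}\le d_1$ and $0\le d_{2c}\le d_2$), rewrite every inequality of \eqref{eq:GDoF-sub-region} and of its index-swapped copy as a linear inequality in the four variables $(d_1,d_2,d_{1c},d_{2c})$, eliminate $d_{1c}$ and $d_{2c}$ one at a time, and then check that each surviving inequality in $(d_1,d_2)$ is implied by the seven inequalities defining $\mathcal{D}_o(\bar{M},\bar{\alpha})$ in Theorem~\ref{thm_mainresult}. One in fact expects equality $\Phi(\mathcal{G})=\mathcal{D}_o$, which also serves as a consistency check with the GDoF optimality of the $\mathcal{HK}$ scheme.

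The elimination should be organized around the virtual multiple-access interpretation of the bounds recorded in Remark~\ref{rem_MAC_interpretation} and in the remark interpreting the bounds of Theorem~\ref{thm_mainresult}. The single-user caps $\alpha_{11}(d_{1p}+d_{1c})\le\min\{M_1,N_1\}\alpha_{11}$ in $\mathcal{G}_1$ and its counterpart in $\mathcal{G}_2$ pass straight through to $d_1\le\min\{M_1,N_1\}$ and $d_2\le\min\{M_2,N_2\}$. Each of the three weighted-sum bounds of Theorem~\ref{thm_mainresult} is then recovered by pairing an inequality of $\mathcal{G}_1$ bounding what $Rx_1$ can jointly decode with an inequality of $\mathcal{G}_2$ bounding what $Rx_2$ can, chosen so that the $d_{1c}$- and $d_{2c}$-terms cancel. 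For instance, combining the bound $(\alpha_{11}d_{1p}+\alpha_{11}d_{1c}+\alpha_{22}d_{2c})\le f\big(N_1,(\alpha_{21},M_2),(\alpha_{11},M_1)\big)$ from $\mathcal{G}_1$ (the $2$-user MAC of $X_1$ and $W_2$ at $Rx_1$) with the private-DoF budget of user $2$ from $\mathcal{G}_2$, rewritten as a lower bound on $d_{2c}$ via $d_{2c}=d_2-d_{2p}$, produces a bound on $d_1+\alpha_{22}d_2$; replacing the first inequality by the $3$-user-MAC bound $(\alpha_{11}d_{1p}+\alpha_{22}d_{2c})\le g\big(N_1,(\alpha_{21},M_2),(\beta_{12},m_{12}),(1,(M_1-N_2)^+)\big)$ and combining additionally with the companion inequality at $Rx_2$ produces the $(2d_1+\alpha_{22}d_2)$- and $(d_1+2\alpha_{22}d_2)$-bounds. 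Throughout, one must carry the normalization $\alpha_{11}=1$ and the factors $\alpha_{22}$ attached to the user-$2$ DoF variables consistently (so that $\alpha_{11}d_{1p},\alpha_{22}d_{2c},\ldots$ are all measured in $\log\rho$ units); this is precisely the bookkeeping that turns the $f$- and $g$-expressions of \eqref{eq:GDoF-sub-region} into the ones appearing in Theorem~\ref{thm_mainresult}.

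I expect the principal obstacle to be the case analysis forced by the definitions of $f$ and $g$. Verifying that a given inequality produced by the elimination is no tighter than the corresponding bound of Theorem~\ref{thm_mainresult} requires splitting according to the relative order of the exponents $\alpha_{11}=1,\alpha_{12},\alpha_{21},\alpha_{22}$ and to the signs of $M_1-N_2$, $M_2-N_1$, $1-\alpha_{12}$ and $\alpha_{22}-\alpha_{21}$, since these fix which branch of $f$ or $g$ is active and whether the inner truncations $\min\{(u-u_1)^+,u_2\}$ are binding; the number of regimes is substantial. To keep this manageable I would first isolate the structural identities for $f$ and $g$ --- in particular the telescoping relation $g\big(u,(a_1,u_1),(a_2,u_2),(a_3,u_3)\big)=\min\{u,u_1\}a_1^+ + f\big((u-u_1)^+,(a_2,u_2),(a_3,u_3)\big)$ when $a_1$ is the largest exponent (already noted in the excerpt), together with the monotonicity of $f$ and $g$ in $u$ and in each exponent --- so that most pairwise combinations collapse to a handful of canonical forms before the regime-by-regime checking begins. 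The remaining delicate point is the pruning step: the caps $d_{1c}\le d_1$ and $d_{2c}\le d_2$, when paired against the private-DoF bounds during elimination, generate additional inequalities that must be shown to be dominated by the genuine ones (or by the single-user caps) rather than carried forward; handling these correctly is where the argument is most error-prone.
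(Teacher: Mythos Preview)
Your proposal is correct in principle but takes a genuinely different route from the paper. You work entirely at the GDoF level: treat $\mathcal{G}(\bar{M},\bar{\alpha})$ as a given polyhedron, eliminate $d_{1c},d_{2c}$ by Fourier--Motzkin, and verify inequality by inequality (across all orderings of the exponents and signs of $M_i-N_j$) that the projection contains $\mathcal{D}_o$. The paper instead works at the \emph{rate} level and never performs this elimination. It identifies the bounds of $\mathcal{G}_i$ as the high-SNR scaling (via Lemmas~\ref{lem:sum-dof-of-2user-MAC} and~\ref{lem:sum-dof-of-3user-MAC}) of the seven MAC constraints at $Rx_i$ in the original four-dimensional Han--Kobayashi region $\mathcal{R}^{o}_{\mathrm{HK}}(P)$, evaluated at the Gaussian distribution of Section~\ref{def_coding_scheme}; it then invokes the known fact (from~\cite{CMG} and the companion paper~\cite{Sanjay_Varanasi_Cap_MIMO_IC_const_gap}) that $\mathcal{R}_a(\mathcal{H},\bar{\alpha})\subseteq\Pi\big(\mathcal{R}^{G}_{\mathrm{HK}}\big)$ --- the Fourier--Motzkin elimination having already been carried out on the $\log\det$ expressions --- and simply carries this inclusion through the $\log(\rho)$-scaling limit. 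Because the elimination happens before the asymptotics, the paper sidesteps entirely the piecewise case analysis you anticipate: there is no splitting on which branch of $f$ or $g$ is active. Your approach buys self-containment (no appeal to~\cite{CMG} or to the constant-gap result), at the cost of the regime-by-regime verification you correctly flag as the main obstacle; the paper's approach buys economy by reusing the rate-level projection, at the cost of depending on those external results.
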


\begin{proof}
The proof is given in Appendix~\ref{app:lem-dof-split}.
\end{proof}

\section{The Symmetric GDoF region of the $(M,N,M,N)$ MIMO IC}
\label{sec:reciprocity-of-GDoF-region}
Suppose the roles of the transmitters and receivers of the MIMO IC $\mathcal{IC}(\mathcal{H},\bar{\alpha})$ are interchanged. In the notations defined in Section \ref{sec_channel_model_and_preliminaries}, this resulting IC (hereafter referred to as the ``reciprocal" channel) can be denoted by $\mathcal{IC}(\mathcal{H}^r,\bar{\alpha}^r)$, where $\mathcal{H}^r=\{H_{11}^T,H_{21}^T,H_{12}^T, H_{22}^T\}$ and $\bar{\alpha}^r=[\alpha_{11},\alpha_{21},\alpha_{12},\alpha_{22}]$. Clearly, $\mathcal{D}_o(\bar{M}^r,\bar{\alpha}^r)$ denotes the GDoF region of the reciprocal channel where $\bar{M}^r=(N_1,M_1,N_2,M_2)$.

\begin{cor}[Reciprocity of the GDoF region]
\label{cor:reciprocity}
The GDoF region of the MIMO IC is same as that of its {\it reciprocal} channel i.e.,
\begin{equation*}\label{cor_reciprocity_GDOF}
\mathcal{D}_o(\bar{M},\bar{\alpha})=\mathcal{D}_o(\bar{M}^r,\bar{\alpha}^r).
\end{equation*}
\end{cor}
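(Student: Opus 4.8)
The plan is to verify the reciprocity claim directly at the level of the GDoF region description provided by Theorem~\ref{thm_mainresult}, by checking that the system of inequalities defining $\mathcal{D}_o(\bar{M},\bar{\alpha})$ is invariant under the substitution $(M_1,N_1,M_2,N_2,\alpha_{12},\alpha_{21})\mapsto(N_1,M_1,N_2,M_2,\alpha_{21},\alpha_{12})$ that transforms the channel into its reciprocal. First I would record the effect of this substitution on the auxiliary quantities: $m_{12}=\min\{M_1,N_2\}$ and $m_{21}=\min\{M_2,N_1\}$ are each mapped to themselves, $(M_1-N_2)^+$ and $(M_2-N_1)^+$ are likewise fixed, and $\beta_{12}=(\alpha_{11}-\alpha_{12})^+$, $\beta_{21}=(\alpha_{22}-\alpha_{21})^+$ are fixed as well (since $\alpha_{11},\alpha_{22}$ are untouched). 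Thus every ``ingredient'' appearing in the bounds of Theorem~\ref{thm_mainresult} is stable, and the only thing that can move is how these ingredients are assembled into the $f(\cdot)$ and $g(\cdot)$ terms.

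Next I would go bound by bound. The first two constraints $d_1\le\min\{M_1,N_1\}$ and $d_2\le\min\{M_2,N_2\}$ are manifestly invariant. For the sum-rate bounds, the key observation is a symmetry of the function $f$: namely $f(u,(a,p),(b,q))$, interpreted via Remark~\ref{rem_MAC_interpretation} as the sum GDoF of a $2$-user MIMO MAC with receiver antennas $u$ and transmitter antennas $p,q$, satisfies the MAC--BC-type reciprocity that the sum GDoF is unchanged if one transposes the underlying channel, which at the level of the formula \eqref{eq_def_f} should amount to an identity of the form $f(u,(a,p),(b,q))$ equal to the corresponding quantity with the roles of ``receiver dimension'' and ``one transmitter dimension'' swapped; I would isolate and prove the precise elementary identity on $(\min,\max,(\cdot)^+)$ expressions that is needed — e.g.\ checking that the $d_1+\alpha_{22}d_2$ bound \eqref{eq_gdofr3}, which in the reciprocal channel becomes $f(M_2,(\alpha_{21},N_1),(\alpha_{22},N_2))+f(M_1,(\beta_{21}^{r},m_{21}),\dots)$, coincides term-for-term with bound \eqref{eq_gdofr7} of the original channel after dividing by the appropriate scaling, and symmetrically that \eqref{eq_gdofr4} swaps with the scaled $2d_1+\alpha_{22}d_2$ bound; similarly \eqref{eq_gdofr5} should map to itself. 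A cleaner route, which I would try first, is to bypass Theorem~\ref{thm_mainresult} entirely and instead invoke the well-known fact that the MAC-type outer bounds of Lemma~\ref{lem_upper_bound} are built from single-user and MAC mutual-information terms each of which is invariant (in the GDoF sense) under transposition of its channel matrix, so that $\mathcal{R}^u(\mathcal{H},\bar\alpha)$ and $\mathcal{R}^u(\mathcal{H}^r,\bar\alpha^r)$ give the same region after taking $\log\rho\to\infty$; then Lemma~\ref{lem_alternate_def_GDOF} finishes the argument.

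I expect the main obstacle to be the bookkeeping in matching the asymmetric bounds: bounds \eqref{eq_gdofr3}, \eqref{eq_gdofr4}, \eqref{eq_gdofr6}, \eqref{eq_gdofr7} are not individually symmetric under $1\leftrightarrow 2$ relabeling, so the proof must exhibit the correct pairing — \eqref{eq_gdofr3}$\leftrightarrow$\eqref{eq_gdofr7} and \eqref{eq_gdofr4}$\leftrightarrow$\eqref{eq_gdofr6}, with the $2d_1+\alpha_{22}d_2$ and $d_1+2\alpha_{22}d_2$ inequalities being interchanged because reciprocity swaps the two users — and then verify the requisite algebraic identities relating $f(N_2,(\alpha_{12},M_1),(\alpha_{22},M_2))$ to $f(M_2,(\alpha_{21},N_1),(\alpha_{22},N_2))$ and the analogous $g(\cdot)$ identities. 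These are finite case checks on piecewise-linear functions of the antenna counts and exponents, so no conceptual difficulty arises, but care is needed to confirm the GDoF normalization (all inequalities are stated with the common factor $\log\rho$, so the asymmetry $\alpha_{11}=1$ versus $\alpha_{22}$ is already absorbed) is preserved under reciprocity, which it is since $\alpha_{11}$ and $\alpha_{22}$ are themselves fixed by the reciprocal map.
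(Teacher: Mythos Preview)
Your primary approach---direct verification that the inequalities of Theorem~\ref{thm_mainresult} are invariant under the reciprocity substitution---is a legitimate strategy, but the specific invariance claims you make are incorrect, and this derails the bound-matching you propose. Under $(M_1,N_1,M_2,N_2,\alpha_{12},\alpha_{21})\mapsto(N_1,M_1,N_2,M_2,\alpha_{21},\alpha_{12})$ one has $m_{12}=\min\{M_1,N_2\}\mapsto\min\{N_1,M_2\}=m_{21}$, so $m_{12}$ and $m_{21}$ are \emph{swapped}, not fixed; likewise $(M_1-N_2)^+\mapsto(N_1-M_2)^+$ and $(M_2-N_1)^+\mapsto(N_2-M_1)^+$. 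More seriously, $\beta_{12}=(1-\alpha_{12})^+\mapsto(1-\alpha_{21})^+$ and $\beta_{21}=(\alpha_{22}-\alpha_{21})^+\mapsto(\alpha_{22}-\alpha_{12})^+$, neither of which is $\beta_{12}$ or $\beta_{21}$ in general. So the premise ``every ingredient is stable'' is false. You also assert that reciprocity swaps the two users; it does not---it swaps the transmitter and receiver \emph{within} each user, so the left-hand sides $d_1+\alpha_{22}d_2$, $2d_1+\alpha_{22}d_2$, $d_1+2\alpha_{22}d_2$ remain as they are, and the pairing $(3)\leftrightarrow(7)$, $(4)\leftrightarrow(6)$ you propose is built on the wrong picture. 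A correct direct verification would require genuinely nontrivial identities for $f$ and $g$ (e.g.\ relating $f(N_2,(\alpha_{12},M_1),(\alpha_{22},M_2))$ to an expression with $M_2$ as the ``receiver dimension''), and these do not reduce to the simple symmetry you sketch.

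The paper takes a different and much shorter route: it cites the result (Lemma~6 of \cite{Sanjay_Varanasi_Cap_MIMO_IC_const_gap}) that the \emph{capacity regions} of $\mathcal{IC}(\mathcal{H},\bar\alpha)$ and $\mathcal{IC}(\mathcal{H}^r,\bar\alpha^r)$ are within a constant number of bits of each other, and then observes from Definition~\ref{def:GDoF-region} that two channels whose capacity regions differ by $O(1)$ have identical GDoF regions. Your ``cleaner route'' via Lemma~\ref{lem_upper_bound} is in spirit this same argument, but you would still need to establish that each bound of $\mathcal{R}^u(\mathcal{H},\bar\alpha)$ matches, up to $O(1)$, the corresponding bound of $\mathcal{R}^u(\mathcal{H}^r,\bar\alpha^r)$; that is exactly the content of the cited lemma, and it is not a triviality for the MAC-type and $K_i$-weighted terms.
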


\begin{proof}
It was proved in \cite{Sanjay_Varanasi_Cap_MIMO_IC_const_gap} that the capacity region of a 2-user MIMO IC and its {\it reciprocal} channel are within a constant (independent of $\rho$) number of bits to each other (see Lemma~6 of \cite{Sanjay_Varanasi_Cap_MIMO_IC_const_gap}). The corollary is easily proved by using this result in the definition of the GDoF region of the IC in equation \eqref{def_GDOF}, which states that the GDoF regions of two channels with capacity regions differing by only a constant number of bits are the same.
\end{proof}

In other words, the GDoF region of the channel does not change if the roles of the transmitters and the receivers are interchanged. Note that this is a more general result than the reciprocity of the conventional DoF region proved in \cite{JFak}. In what follows, we define the symmetric GDoF metric.

\begin{defn}[Symmetric GDoF]
\label{def:symmetric-GDoF}
Let $\mathcal{C}_s(\alpha)=\sup (R_1+R_2)$ with $(R_1,R_2)\in \mathcal{C}(\mathcal{H},\bar{\alpha})$ where $\bar{\alpha}=[1,\alpha,\alpha,1]$ and $\sup \mathcal{A}$ represents the supremum of the set of elements in $\mathcal{A}$. Then the {\em symmetric GDoF} of the channel, denoted by $d_s$, is defined as
\begin{equation*}
    d_s\triangleq \lim_{\rho \to \infty} \frac{\mathcal{C}_s(\alpha)}{2\log(\rho)}.
\end{equation*}
\end{defn}
It is clear from Definition \ref{def:GDoF-region} and the above equation that
\begin{equation}
\label{eq:symmetric-GDoF}
    d_s ~=~\frac{\sup_{\mathcal{D}_o(\bar{M},\bar{\alpha})} (d_1+d_2)}{2}.
\end{equation}

The authors in \cite{PBT} found the symmetric GDoF of the MIMO IC with an equal number of antennas at the two transmitters and an equal number of antennas at the receiver and with more receive than transmit antennas, i.e., for an $(M,N,M,N)$ IC with $M\leq N$ and $\bar{\alpha}=[1,\alpha,\alpha,1]$. They found that
\begin{equation}
d_{s}\leq \min\{M,\hat{D}(\alpha)\}
\end{equation}
where
\begin{IEEEeqnarray}{l}
\hat{D}(\alpha)=\left\{\begin{array}{ll}
M-(2M-N)\alpha, &0\leq \alpha<\frac{1}{2};\\
(N-M)+(2M-N)\alpha, &\frac{1}{2}\leq\alpha\leq \frac{2}{3};\\
M-\frac{\alpha}{2}(2M-N),  &\frac{2}{3}\leq\alpha\leq 1;\\
\frac{N}{2}+\frac{M}{2}(\alpha-1), &1\leq \alpha.
\end{array}\right.
\label{eq:dalpha}
\end{IEEEeqnarray}

It can be verified that the above result of \cite{PBT} can be recovered by putting  $N_1=N_2=N, M_1=M_2=M$ and $\bar{\alpha}=[1,\alpha,\alpha,1]$, in Theorem~\ref{thm_mainresult} of this paper (the details are left to the reader). The result of \cite{PBT} is however only valid for $M\leq N$ and does not extend to the $M>N$ case. Specializing Theorem~\ref{thm_mainresult} for $N_1=N_2=N, M_1=M_2=M>N$ and $\bar{\alpha}=[1,\alpha,\alpha,1]$ we get the following.
\begin{cor}
\label{cor_gdof_symmetric_MgeqN}
The symmetric DoF ($d_{s}=d_1=d_2$) of a $(M,N,M,N)$ IC with $M>N$ and $\bar{\alpha}=[1,\alpha,\alpha,1]$ is given by
\begin{equation*}
d_{s}\leq \min \{N, D(\alpha)\}
\end{equation*}
where $D(\alpha)$ is given as
\begin{IEEEeqnarray}{l}
\label{eq_cor_MgN_case}
D(\alpha)=\left\{\begin{array}{ll}
N-(2N-M)\alpha, &0\leq \alpha<\frac{1}{2};\\
(M-N)+(2N-M)\alpha, &\frac{1}{2}\leq\alpha\leq \frac{2}{3};\\
N-\frac{\alpha}{2}(2N-M),  &\frac{2}{3}\leq\alpha\leq 1;\\
\frac{M}{2}+\frac{N}{2}(\alpha-1), &1\leq \alpha.
\end{array}\right.
\end{IEEEeqnarray}
\end{cor}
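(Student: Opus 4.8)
The plan is to specialize the GDoF region of Theorem~\ref{thm_mainresult} to the parameters at hand and then maximize $(d_1+d_2)$ over it, using \eqref{eq:symmetric-GDoF}. Putting $M_1=M_2=M$, $N_1=N_2=N$, $\alpha_{11}=\alpha_{22}=1$ and $\alpha_{12}=\alpha_{21}=\alpha$ gives $m_{12}=m_{21}=\min\{M,N\}=N$, $\beta_{12}=\beta_{21}=(1-\alpha)^+$ and $(M_1-N_2)^+=(M_2-N_1)^+=M-N$, so each of the seven bounds of Theorem~\ref{thm_mainresult} becomes an explicit function of $(\alpha,M,N)$ built from the four quantities $f(N,(\alpha,M),(1,M))$, $f(N,((1-\alpha)^+,N),(1,M-N))$, $f(M,(\alpha,N),(1,N))$ and $g(N,(\alpha,M),((1-\alpha)^+,N),(1,M-N))$. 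Since $\mathcal{D}_o(\bar{M},\bar{\alpha})$ depends only on $\bar{M}$ and $\bar{\alpha}$ (Theorem~\ref{thm_mainresult}) it is invariant under relabeling the two users, and being cut out by linear inequalities it is convex; hence the supremum of $(d_1+d_2)$ is attained on the symmetric line $d_1=d_2=d_s$, and it suffices to compute the largest $d_s$ satisfying $d_s\le N$ (the single-user bounds), $2d_s\le(\cdot)$ (the $3^{rd}$, $4^{th}$ and $5^{th}$ bounds), and $3d_s\le(\cdot)$ (the $6^{th}$ and $7^{th}$ bounds).

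The second step is to evaluate the $f(\cdot)$ and $g(\cdot)$ terms from \eqref{eq_def_f} and \eqref{eq_def_g}. The only real work is keeping track of the inner minima, which forces a finite case split according to the ordering of the exponents $\alpha$, $(1-\alpha)^+$ and $1$ — equivalently whether $\alpha\in[0,\tfrac12]$, $[\tfrac12,\tfrac23]$, $[\tfrac23,1]$ or $[1,\infty)$ — and according to the sign of $2N-M$. For $N<M<2N$ one finds, for instance, $f(N,(\alpha,M),(1,M))=N\max\{1,\alpha\}$, $f(N,((1-\alpha)^+,N),(1,M-N))=(M-N)+(2N-M)(1-\alpha)^+$, and $g(N,(\alpha,M),((1-\alpha)^+,N),(1,M-N))$ equal to $(M-N)+(2N-M)(1-\alpha)$ for $\alpha\le\tfrac12$, to $(M-N)+(2N-M)\alpha$ for $\tfrac12\le\alpha\le1$, and to $N\alpha$ for $\alpha\ge1$. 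One also checks that at the symmetric point the $6^{th}$ bound dominates the $7^{th}$ (the difference of their right-hand sides collapses to $f(N,(\alpha,M),(1,M))-f(M,(\alpha,N),(1,N))\le 0$), so the $7^{th}$ can be dropped; and that when $M\ge 2N$ all of these right-hand sides are already $\ge N$, so the region reduces to the trivial bound $d_s\le N$, consistent with $D(\alpha)\ge N$ in that case.

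Finally, I would assemble $d_s=\min\{N,\ \tfrac12(3^{rd}\text{ bound}),\ \tfrac12(5^{th}\text{ bound}),\ \tfrac13(6^{th}\text{ bound})\}$ regime by regime; a short comparison of the candidates identifies the active one: the $5^{th}$ bound governs $0\le\alpha\le\tfrac23$ (giving the first two pieces of \eqref{eq_cor_MgN_case}, with the $3^{rd}$ and $5^{th}$ bounds coinciding exactly at $\alpha=\tfrac23$), the $3^{rd}$ (equivalently $4^{th}$) bound governs $\alpha\ge\tfrac23$ (giving the last two pieces), and the single-user bound $d_s\le N$ takes over once $\alpha\ge 3-M/N$; the $6^{th}$ and $7^{th}$ bounds turn out to be redundant. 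Matching the four pieces with \eqref{eq_cor_MgN_case} completes the proof, achievability being immediate since $\mathcal{D}_o(\bar{M},\bar{\alpha})$ is the exact GDoF region. I expect the only obstacle to be the bookkeeping of the nested minima in $f$ and $g$; no idea beyond Theorem~\ref{thm_mainresult} is needed. In fact, a shortcut bypasses the case analysis: the reciprocal channel (Section~\ref{sec:reciprocity-of-GDoF-region}) of the present $(M,N,M,N)$ IC with $\bar{\alpha}=[1,\alpha,\alpha,1]$ is the $(N,M,N,M)$ IC with $\bar{\alpha}^r=[1,\alpha,\alpha,1]$, which has fewer transmit ($N$) than receive ($M$) antennas and so is covered by the specialization of Theorem~\ref{thm_mainresult} that recovers \cite{PBT}; by Corollary~\ref{cor:reciprocity} the two channels have the same GDoF region, hence the same symmetric GDoF, so $d_s\le\min\{N,\hat D(\alpha)\}$ with $M$ and $N$ interchanged in \eqref{eq:dalpha}, which is precisely $\min\{N,D(\alpha)\}$.
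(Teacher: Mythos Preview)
Your proposal is correct and follows exactly the route the paper takes: the corollary is obtained by specializing Theorem~\ref{thm_mainresult} to $M_1=M_2=M$, $N_1=N_2=N$, $\bar{\alpha}=[1,\alpha,\alpha,1]$ (the paper leaves the case analysis you wrote out to the reader). Your reciprocity shortcut is also precisely the observation the paper makes in the remark immediately following the corollary, so both of your arguments match the paper's treatment.
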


\begin{rem}
The above formula is the same as the one given by \eqref{eq:dalpha} with $M$ and $N$ interchanged. This is in accordance with the reciprocity result in Corollary \ref{cor_reciprocity_GDOF}. In other words, $d_{s}$ of the $(M,N,M,N)$ IC with $M>N$ for a given $\frac{M}{N}=r$ is the same as that of the GDoF of a MIMO IC with $M\leq N$ and $\frac{M}{N}=\frac{1}{r}$.
\end{rem}

\begin{rem}
It must be noted that the achievable schemes on the two channels are entirely different. While for $M\leq N$ the coding scheme need not depend on the channel matrices at the transmitters (see the achievability scheme of \cite{PBT}), for $M>N$ the covariance matrices are necessarily functions of the channel matrices. Hence, a naive extension of the scheme of \cite{PBT} to the case of $M>N$ is not GDoF optimal. In fact, such a scheme  wouldn't even be DoF optimal because while on a MIMO IC with $M\leq N$, receive zero-forcing is sufficient to achieve the DoF region, for $M>N$, knowledge of channel state information at the transmitters (CSIT) is necessary to achieve DoF-optimal performance \cite{Huang_Jafar,Vaze_Varanasi,ZG} such as through transmit zero-forcing beamforming \cite{JFak}.
\end{rem}

\begin{rem}[A scheme that ignores CSIT:]
The GDoF optimal coding scheme of \cite{PBT} does not utilize any CSIT. The approach in \cite{PBT} was to divide the range of $\alpha $ into the five regimes delineated in the SISO IC case in \cite{ETW1} and employ the main idea of the achievable schemes that are known to be GDoF-optimal in the SISO case (e.g., treat interference as noise in the very weak interference regime; set the power level of private messages so they arrive at the noise level at the unintended receiver in the moderate and weak interference regimes following the prescription of \cite{ETW1}, send only common messages in the strong and very strong interference regimes). Hence, this coding scheme effectively only employs signal level interference alignment without any form of transmit beamforming. While on an $(M,N,M,N)$ IC with $M\leq N$ beamforming is not necessary because neither of the cross-links have a null space, it is so when $M>N$. Therefore, the coding scheme of \cite{PBT} when applied naively to the MIMO IC with $M>N$ cannot achieve the fundamental GDoF region of the channel (e.g., see Fig. \ref{figure_gdof_MgeqN_channel} and Fig. \ref{figure_gdof_PBT_scheme}).
\end{rem}

\begin{rem}[Treating interference as noise (TIN):] On a 2-user MIMO IC with $\alpha\leq \frac{1}{2}$, $M_1=M_2=M$ and $N_1=N_2=N$, treating interference as noise (TIN) gives the following achievable region
\begin{IEEEeqnarray}{rl}
\Big\{ (R_1,R_2): & R_1\leq \log\det\left(I_{M}+\rho H_{11}^\dagger\left(I_{N}+\rho^{\alpha}H_{21}H_{21}^\dagger\right)^{-1}H_{11}\right),\nonumber \\
\label{eq_TIN_formula}
& R_2\leq \log\det\left(I_{M}+\rho H_{22}^\dagger\left(I_{N}+\rho^{\alpha}H_{12}H_{12}^\dagger\right)^{-1}H_{22}\right)\Big\}.
\end{IEEEeqnarray}
The corresponding GDoF region is given as
\begin{IEEEeqnarray*}{rl}
\Big\{ (d_1,d_2): & d_1\leq f\left(N,(\alpha,M),(1,M)\right)-\min \{M,N\}\alpha=N(1-\alpha),\nonumber \\
& d_2\leq f\left(N,(\alpha,M),(1,M)\right)-\min \{M,N\}\alpha=N(1-\alpha)\Big\}.
\end{IEEEeqnarray*}
\end{rem}

Fig. \ref{figure_gdof_MgeqN_channel} and Fig. \ref{figure_gdof_PBT_scheme} show the GDoF achievable by the GDoF optimal coding scheme of this paper in comparison with the coding scheme used in \cite{PBT} and the TIN scheme (whose GDoF is denoted by the dashed red line). Comparing the GDoF curves of the three schemes, it is clear that both TIN and the No-CSIT coding scheme of \cite{PBT} fail to achieve the fundamental GDoF of the $(M,N,M,N)$ IC when $M>N$.

\begin{figure}[htp]
  \begin{center}
    \subfigure[The symmetric GDoF of the channel.]{\label{figure_gdof_MgeqN_channel}\includegraphics[scale=.5]{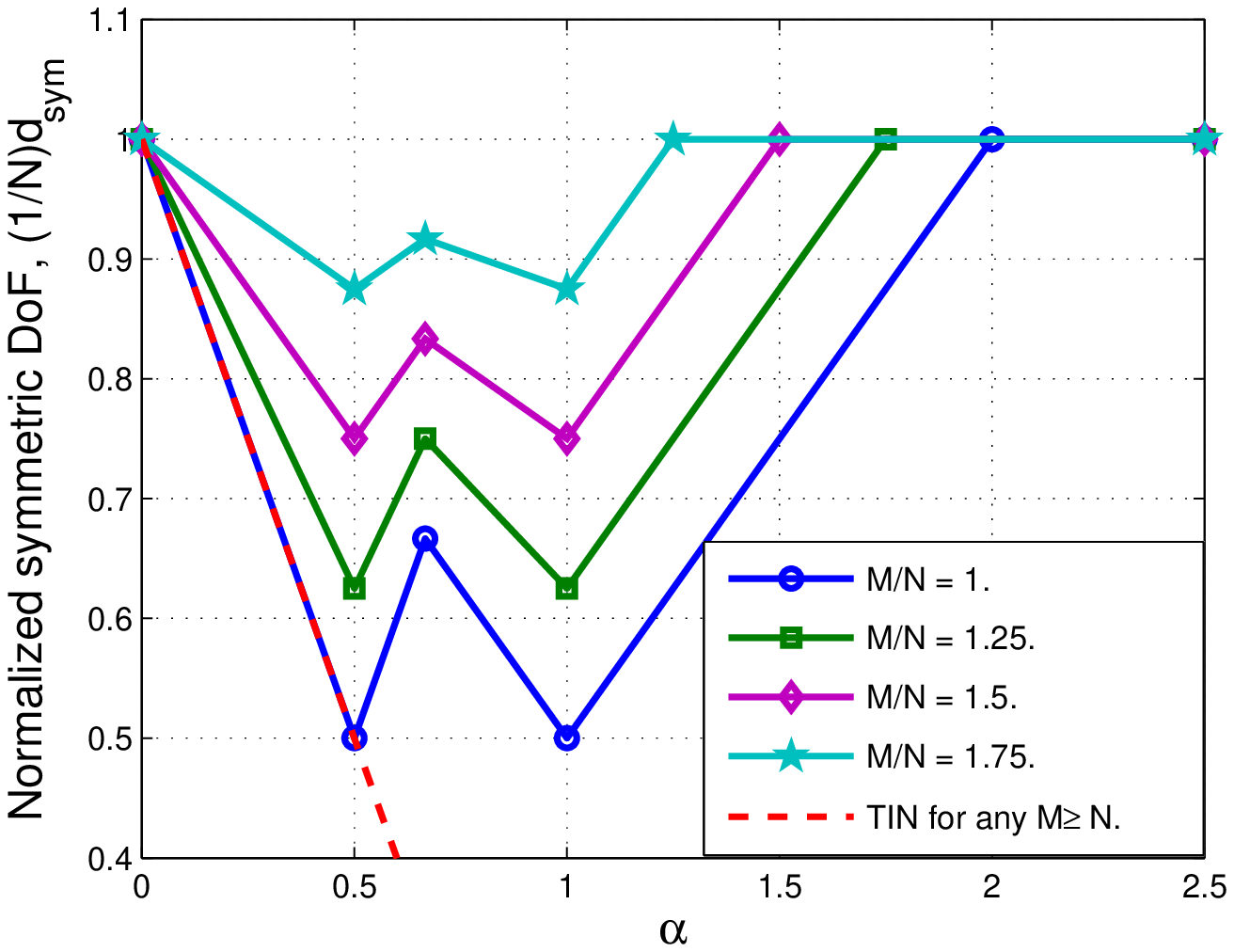}}
    \subfigure[Symmetric GDoF achievable by the coding scheme of \cite{PBT}.]{\label{figure_gdof_PBT_scheme}\includegraphics[scale=0.5]{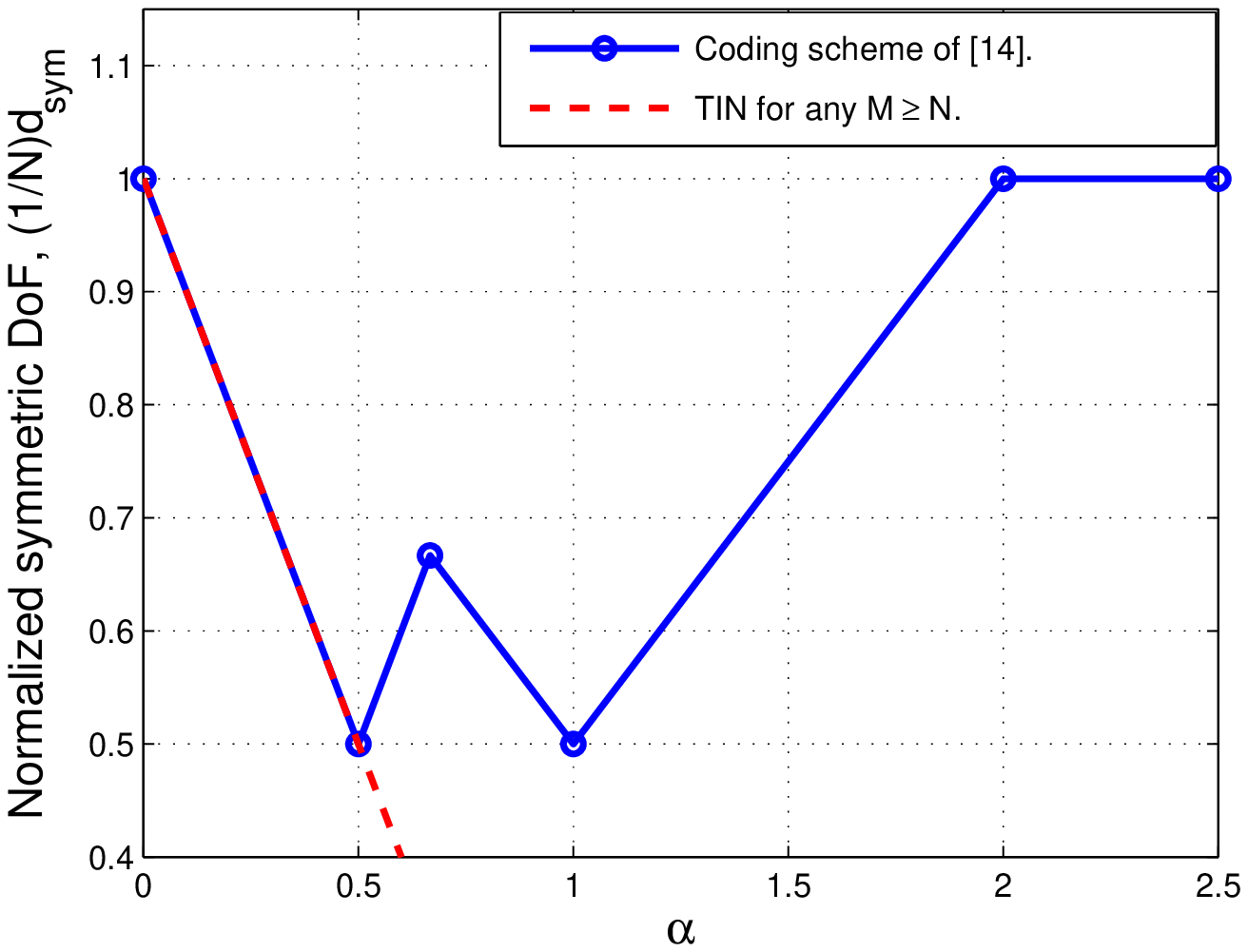}}
  \end{center}
\caption{Symmetric GDoF of the $(M,N,M,N)$ IC, with $M\geq N$. }
\end{figure}

\section{How does $\mathcal{HK}(\{K_{1u},K_{1w},K_{2u},K_{2w}\})$ achieve GDoF optimality?}
\label{sec:explicit-examples}
In what follows, we shall explain how the $\mathcal{HK}(\{K_{1u},K_{1w},K_{2u},K_{2w}\})$ scheme achieves the fundamental GDoF region of the MIMO IC through some examples. The encoding scheme is completely specified in Subsection \ref{def_coding_scheme} and Lemma~\ref{lem:DoF-split}. As explained in Subsection~\ref{def_coding_scheme}, in the GDoF optimal coding scheme the private and public messages of each user are essentially a weighted sum of several independent streams of information, each stream directed along a beam which is dependent on the channel matrix of the cross link emerging from the corresponding transmitter. The direction of these beams and their weights are chosen in such a manner (e.g., see equations \eqref{eq:expansion-of-kiu}-\eqref{eq_structure_of_streams}) that the effective covariance matrix of the overall codeword corresponding to each of the message is as given by equation~\eqref{eq_power_split}\footnote{Instead of sending independent streams of information (which is without loss of GDoF optimality), if coding is also done across different streams, it is possible to achieve a larger error exponents.}. 
As for decoding, it is clear that with respect to $U_i$, $W_i$ and $W_j$, $Rx_i$ sees a MAC channel for $i\neq j\in \{1,2\}$ and for any $(d_{1p},d_{1c},d_{2p},d_{2c})$-tuple belonging to the achievable region (see Lemma~\ref{lem:DoF-split} in Appendix~\ref{app:lem-dof-split}) $Rx_i$ can decode $U_i$, $W_i$ and $W_j$ with probability of error going to zero. Therefore, any decoding scheme which is capacity optimal on a MAC will be GDoF optimal for the MIMO IC if each receiver tries to decode the 2 public messages and its own private message while treating the other private message as noise.

\begin{figure}[htp]
  \begin{center}
    \subfigure[$\bar{\alpha}=(1,\frac{3}{5}, \frac{3}{5},1)$.]{\label{fig_gdof_achievable_1}\includegraphics[scale=.3]{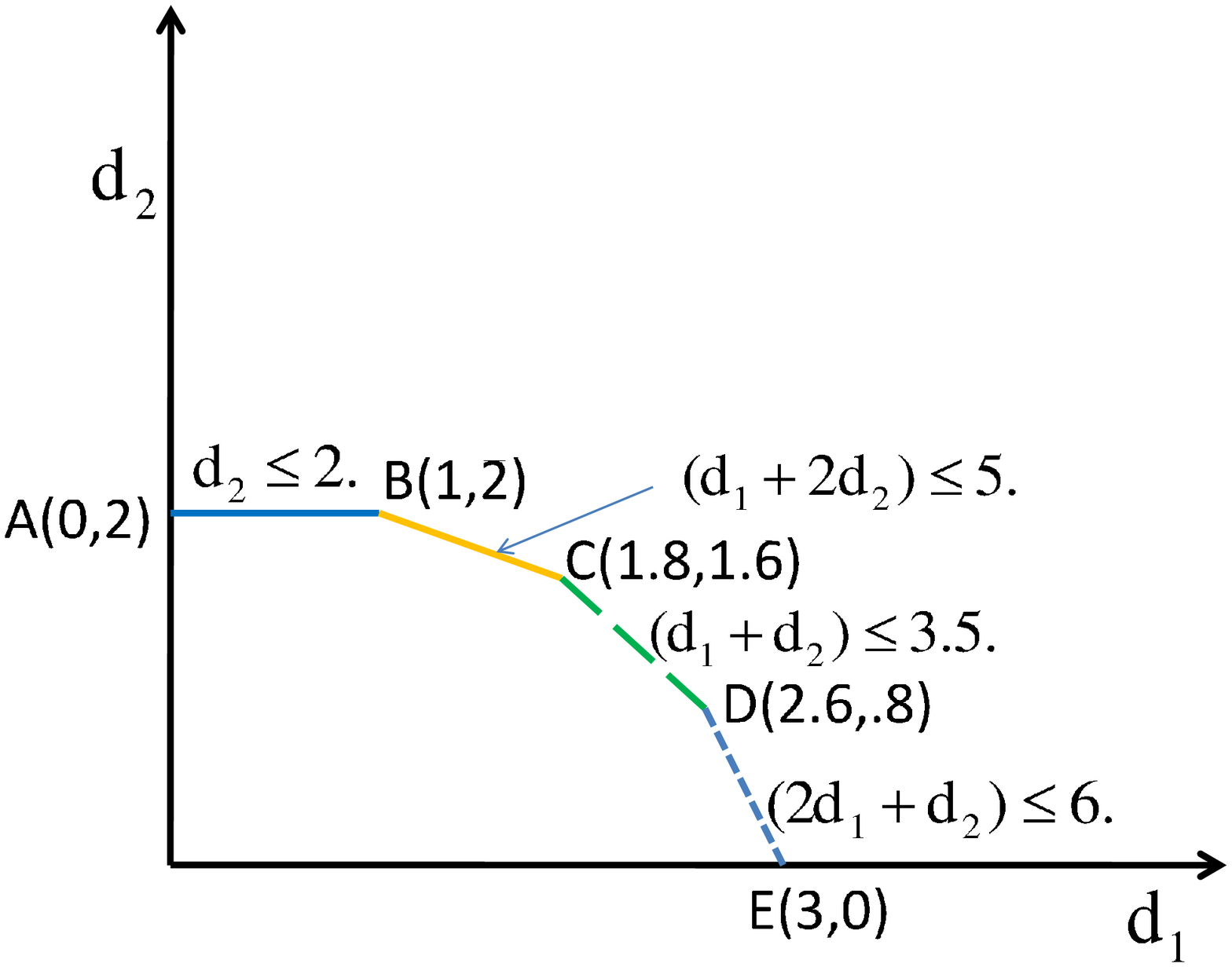}}
    \subfigure[$\bar{\alpha}=(1,\frac{1}{4}, \frac{5}{4},1)$.]{\label{fig_gdof_achievable_2}\includegraphics[scale=0.3]{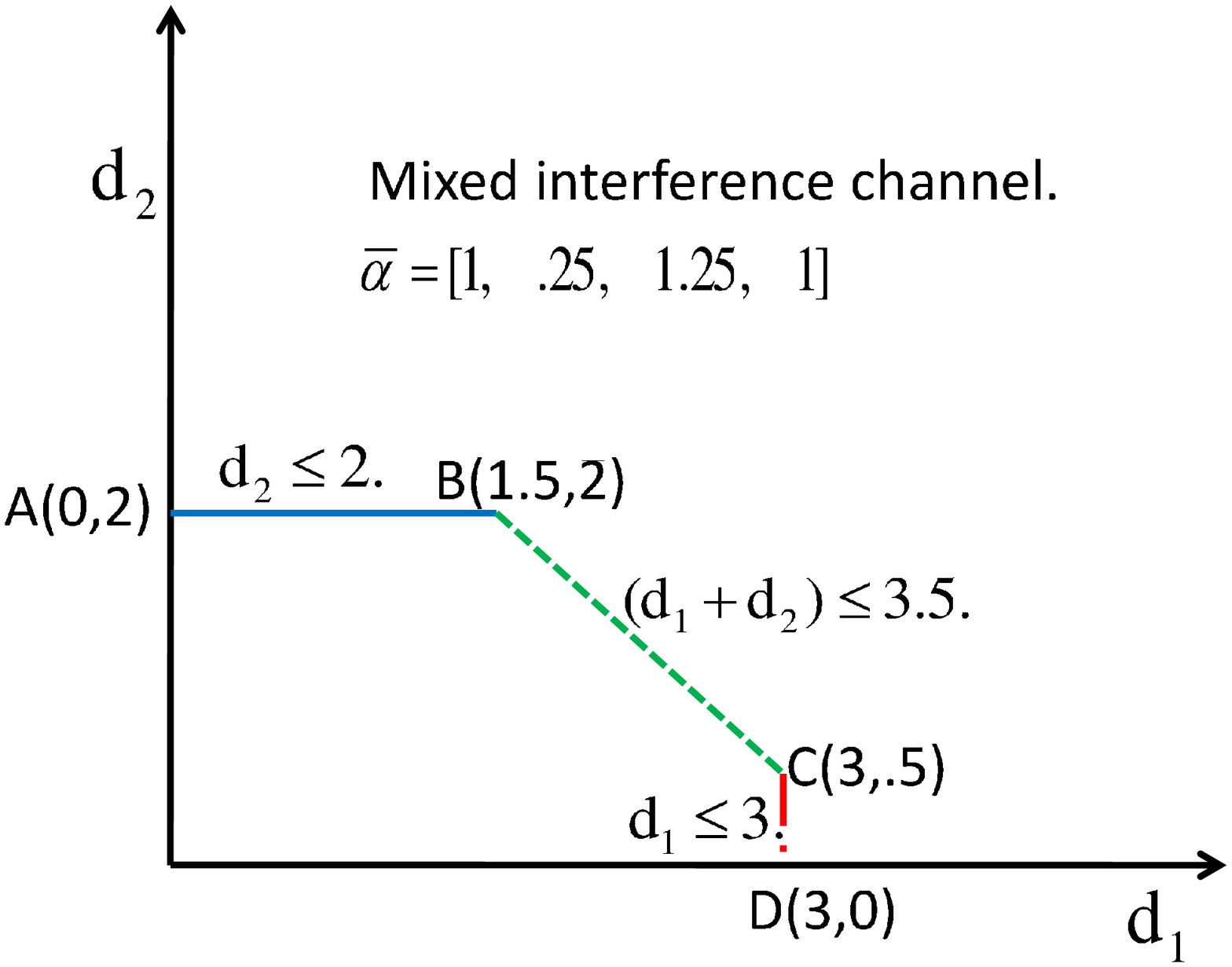}}
  \end{center}
\caption{GDoF region of the $(3,3,2,2)$ IC. }
\label{fig_explicit_scheme}
\end{figure}

\begin{ex}[A MIMO IC with weak interference]
\label{ex:weak-3322-interference channel}
Figure~\ref{fig_gdof_achievable_1} depicts the GDoF region of a $(3,3,2,2)$ MIMO IC with $\bar{\alpha}=[1,\frac{3}{5}, \frac{3}{5},1]$. Clearly, it is sufficient to illustrate the achievability of the vertices of the GDoF region since any point on the line joining any two vertices can be achieved via time-sharing. The time sharing argument, however, is just a matter of convenience, and is not necessary to achieve a point in the GDoF region. In fact, we illustrate the achievability of a non-corner point in Remark~\ref{rem_time_sharing}. Note that points A or E can be achieved simply by turning off $Tx_1$ or $Tx_2$, respectively. To analyze the achievability of the other corner points we need to know the DoFs carried by the private and public messages of each user. For the $(3,3,2,2)$ IC with $\bar{\alpha}=[1,\frac{3}{5}, \frac{3}{5},1]$, Lemma~\ref{lem:DoF-split} gives

\begin{equation}
\label{eq:dof-bounds-for-example-3322-a}
\begin{array}{cc}
d_{1p}\leq & 1.8;\\
d_{1c}\leq &2;\\
d_{2c}\leq &1.2;\\
(d_{1p}+d_{1c})\leq & 3;\\
(d_{1p}+d_{2c})\leq & 2.2;\\
(d_{1c}+d_{2c})\leq & 2.6;\\
(d_{1p}+d_{1c}+d_{2c})\leq & 3;
\end{array}~\textrm{and}~
\begin{array}{cc}
d_{2p}\leq & .8;\\
d_{2c}\leq &2;\\
d_{1c}\leq &1.2;\\
(d_{2p}+d_{2c})\leq & 2;\\
(d_{2p}+d_{1c})\leq & 1.2;\\
(d_{2c}+d_{1c})\leq & 2;\\
(d_{2p}+d_{2c}+d_{1c})\leq & 2;
\end{array}
\end{equation}

{\it Achievability of point B}: From the set of bounds in equation \eqref{eq:dof-bounds-for-example-3322-a} we see the only choice for the different DoFs for the public and private messages of the two users are given as $d_{1p}=1$, $d_{1c}=0$, $d_{2p}=.8$ and $d_{2c}=1.2$. Since the first user needs to send only private information having DoF 1, it is best to send it in the direction of the null space of $H_{12}$, i.e.,
\begin{IEEEeqnarray}{l}
X_1=\frac{1}{\sqrt{3}}x_{1p}^{(3)}U_{12}^{[3]}.
\end{IEEEeqnarray}
On the other hand, the structure of the codeword for the second user is also clear from equation \eqref{eq_structure_of_streams},
\begin{IEEEeqnarray}{rl}
X_2=&\sum_{k=1}^{2}\frac{\sqrt{\rho_{21} \lambda_{21}^{(k)}}}{\sqrt{2 (1+\rho_{21} \lambda_{21}^{(k)})}}x_{2c}^{(k)}U_{21}^{[k]}+\sum_{l=1}^{2}\frac{1}{\sqrt{2 (1+\rho_{21} \lambda_{21}^{(l)})}}x_{2p}^{(l)}U_{21}^{[l]},
\end{IEEEeqnarray}
where $x_{2c}^{(k)}$ and $x_{2p}^{(k)}$ carries $.6$ and $.4$ DoFs, respectively for both $k=1,2$.

{\it Decoding}: $Rx_1$ first projects the received signal on the 2 dimensional space which is perpendicular to $H_{11}U_{12}^{[3]}$ to remove the effect of $x_{1p}^{(3)}$ by zero forcing. In the resulting 2 dimensional signal space, only contribution from $W_2$ is present, carrying a DoF of $1.2$. This can be decoded because the link from $Tx_2$ to $Rx_1$ is a $2\times 2$ point-to-point MIMO channel with effective SNR of $\rho^{.6}$. Once decoded, $Rx_1$ removes its effect from the original received signal (the received signal before zero-forcing) and then it gets a interference-free channel from $Tx_1$ to itself. It can hence decode $U_1$. On the other hand, $Rx_2$ does not face any interference\footnote{The interference that reach below noise floor is irrelevant in the GDoF computation.} from $Tx_1$ so that it can decode $W_2$ while treating $U_2$ as noise. This is possible because treating $U_2$ as noise only raises the noise floor to $\rho^{.4}$ while the received signal power of $W_2$ is at $\rho$ which implies it can decode $.6$ DoFs from each receive dimension. Next, subtracting the contribution of $W_2$ from the received signal, $Rx_2$ can decode $U_2$.

{\it Achievability of point C}: Since $Rx_2$ can support only 2 DoFs at point C, we have $d_{1c}\leq .4$. Combining this with equation \eqref{eq:dof-bounds-for-example-3322-a} we get $d_{1c}= .4$, $d_{1p}= 1.4$, $d_{2p}= .8$ and $d_{2c}= .8$. For this choice of the different rates, the transmit signals at $Tx_1$ and $Tx_2$ are given by
\begin{IEEEeqnarray}{rl}
X_1=&\sum_{k=1}^{2}\frac{\sqrt{\rho_{12} \lambda_{12}^{(k)}}}{\sqrt{3 (1+\rho_{12} \lambda_{12}^{(k)})}}x_{1c}^{(k)}U_{12}^{[k]}+\sum_{l=1}^{2}\frac{1}{\sqrt{3 (1+\rho_{12} \lambda_{12}^{(l)})}}x_{1p}^{(l)}U_{12}^{[l]}+\frac{1}{\sqrt{3}}x_{1p}^{(3)}U_{12}^{[3]},
\end{IEEEeqnarray}
where $x_{1c}^{(k)}$ and $x_{1p}^{(k)}$ carries $.2$ DoFs, for both $k=1,2$ and $x_{1p}^{(3)}$ carries 1 DoF and
\begin{IEEEeqnarray}{rl}
X_2=&\sum_{k=1}^{2}\frac{\sqrt{\rho_{21} \lambda_{21}^{(k)}}}{\sqrt{2 (1+\rho_{21} \lambda_{21}^{(k)})}}x_{2c}^{(k)}U_{21}^{[k]}+\sum_{l=1}^{2}\frac{1}{\sqrt{2 (1+\rho_{21} \lambda_{21}^{(l)})}}x_{2p}^{(l)}U_{21}^{[l]},
\end{IEEEeqnarray}
where $x_{2c}^{(k)}$ and $x_{2p}^{(k)}$ carries $.4$ DoFs each, for both $k=1,2$. The different signals at both the receivers are depicted in Fig. \ref{fig_achievability_1C}, where each stream is represented by a box the top level of which marks its signal strength and the vertical height is proportional to the DoFs carried by it. Note that, $x_{1p}^{(1)}$ though transmitted at a power level of $1$, does not appear at $Rx_2$ since it is transmitted along the null space of the channel from $Tx_1$ to $Rx_2$.
\begin{figure}[htp]
  \begin{center}
    \subfigure[Receive signal spaces at DoF pair $(1.8,1.6)$.]{\label{fig_achievability_1C}\includegraphics[scale=.3]{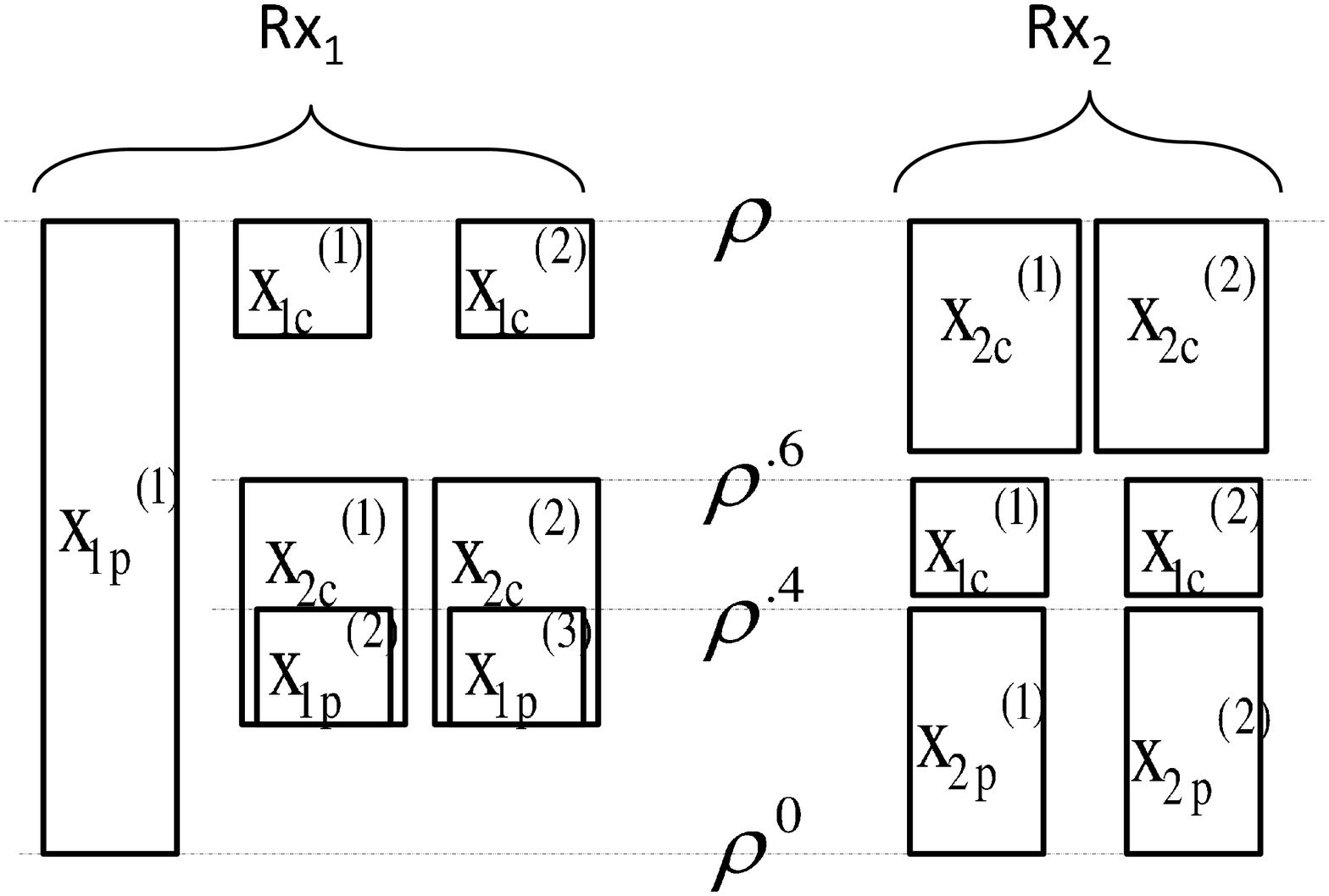}}
    \subfigure[Receive signal spaces at DoF pair $(2.6,.8)$.]{\label{fig_achievability_1D}\includegraphics[scale=.3]{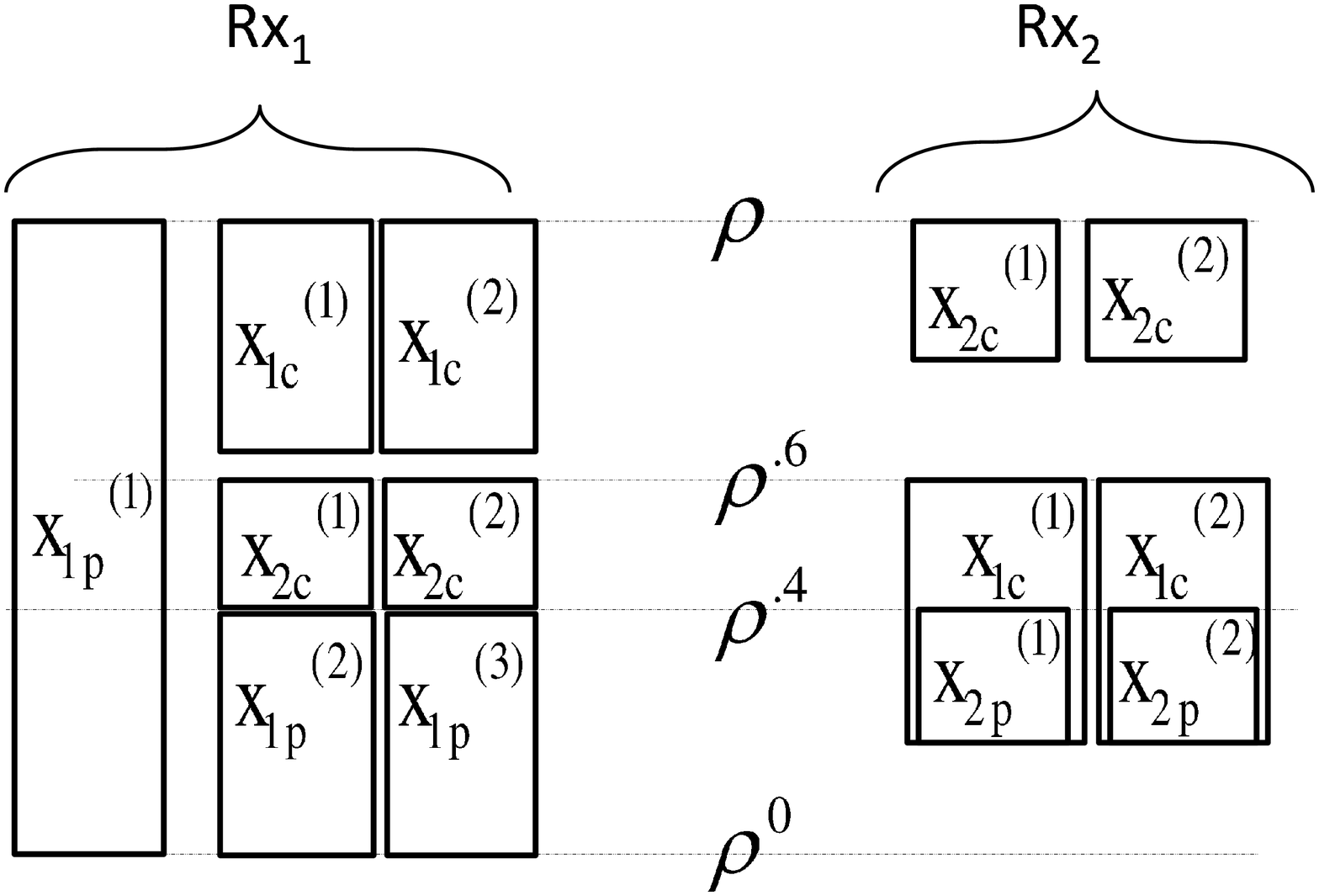}}
  \end{center}
\caption{GDoF region of a $(3,3,2,2)$ MIMO IC and its explicit achievable scheme.}
\label{fig_achievability}
\end{figure}

{\it Decoding}: The decoding procedure at $Rx_1$ is exactly the same as in the previous case. $Rx_2$ on the other hand, can decode $W_2$, $W_1$ and $U_2$, respectively, in that order through successive interference cancellation, i.e., it first decodes $W_2$, treating $W_1$ and $U_2$ (both of which are received below $\rho^{.6}$) as noise. Subtracting the contribution of $W_2$, it next decodes $W_1$ treating $U_2$ as noise. Finally, subtracting the contribution of $W_1$ it decodes $U_2$. It should be noted that during the decoding of each of these messages the noise floor is actually at the power level of the messages being treated as noise.

{\it Achievability of point D}: Again from \eqref{eq:dof-bounds-for-example-3322-a} we get $d_{1p}=1.8$, $d_{1c}=.8$, $d_{2p}=.4$ and $d_{2c}=.4$, for which $X_i$ for $1\leq i\leq 2$ can be written as
\begin{IEEEeqnarray}{rl}
X_1=&\sum_{k=1}^{2}\frac{\sqrt{\rho_{12} \lambda_{12}^{(k)}}}{\sqrt{3 (1+\rho_{12} \lambda_{12}^{(k)})}}x_{1c}^{(k)}U_{12}^{[k]}+\sum_{l=1}^{2}\frac{1}{\sqrt{3 (1+\rho_{12} \lambda_{12}^{(l)})}}x_{1p}^{(l)}U_{12}^{[l]}+\frac{1}{\sqrt{3}}x_{1p}^{(3)}U_{12}^{[3]},
\end{IEEEeqnarray}
where $x_{1c}^{(k)}$ and $x_{1p}^{(k)}$ carries $.4$ DoFs each, for both $k=1,2$ and $x_{1p}^{(3)}$ carries 1 DoF and
\begin{IEEEeqnarray}{rl}
X_2=&\sum_{k=1}^{2}\frac{\sqrt{\rho_{21} \lambda_{21}^{(k)}}}{\sqrt{2 (1+\rho_{21} \lambda_{21}^{(k)})}}x_{2c}^{(k)}U_{21}^{[k]}+\sum_{l=1}^{2}\frac{1}{\sqrt{2 (1+\rho_{21} \lambda_{21}^{(l)})}}x_{2p}^{(l)}U_{21}^{[l]},
\end{IEEEeqnarray}
where $x_{2c}^{(k)}$ and $x_{2p}^{(k)}$ carries $.2$ DoFs each, for both $k=1,2$. The different received signals at both the receivers are depicted in Fig. \ref{fig_achievability_1D}. It is clear from Fig. \ref{fig_achievability_1D} that a MAC receiver can decode all the messages.
\end{ex}
\vspace{2mm}

\begin{ex}[A MIMO IC with mixed interference]
\label{ex:mixed-3322-ic}
Fig. \ref{fig_gdof_achievable_2} depicts the GDoF region of the $(3,3,2,2)$ IC with {\it mixed} interference, i.e., $\alpha_{12}=.25< 1$ and $\alpha_{21}=1.25>1$. Points A and D of this channel are achievable simply by making $Tx_1$ and $Tx_2$, respectively, silent. Now for the MIMO IC of Fig. \ref{fig_gdof_achievable_2}, from Lemma \ref{lem:DoF-split}, we get

\begin{equation}
\label{eq:dof-bounds-for-example-3322-b}
\begin{array}{cc}
d_{1p}\leq & 2.5;\\
d_{1c}\leq &2;\\
d_{2c}\leq &2.5;\\
(d_{1p}+d_{1c})\leq & 3;\\
(d_{1p}+d_{2c})\leq & 3.5;\\
(d_{1c}+d_{2c})\leq & 3.5;\\
(d_{1p}+d_{1c}+d_{2c})\leq & 3.5;
\end{array}~\textrm{and}~
\begin{array}{cc}
d_{2p}\leq & 0;\\
d_{2c}\leq &2;\\
d_{1c}\leq &.5;\\
(d_{2p}+d_{2c})\leq & 2;\\
(d_{2p}+d_{1c})\leq & .5;\\
(d_{2c}+d_{1c})\leq & 2;\\
(d_{2p}+d_{2c}+d_{1c})\leq & 2;
\end{array}
\end{equation}

{\it Achievability of point B}: From equation \eqref{eq:dof-bounds-for-example-3322-b} we get $d_1=d_{1p}=1.5$ and $d_{2}=d_{2c}=2$, which imply $d_{1c}=d_{2p}=0$. With this choice of the GDoFs carried by the private and public messages we have
\begin{IEEEeqnarray}{l}
X_1=\sum_{k=1}^{2}\frac{1}{\sqrt{3(1+\rho_{12} \lambda_{12}^{(k)})}}x_{1p}^{(k)}U_{12}^{[k]}+\frac{1}{\sqrt{3}}x_{1p}^{(3)}U_{12}^{[3]},
\end{IEEEeqnarray}
where $x_{1p}^{(k)}$ carry $.25$, $.25$ and $1$ DoFs, for $k=1,2$ and $3$, respectively. On the other hand,
\begin{IEEEeqnarray}{rl}
X_2=&\sum_{k=1}^{2}\frac{1}{\sqrt{2}}x_{2c}^{(k)}{U_{21}}^{[k]},
\end{IEEEeqnarray}
where $x_{2c}^{(k)}$ carries $1$ DoFs for both $k=1,2$.

{\it Decoding}: $Rx_1$ first projects the received signal on the 2-dimensional space which is perpendicular to $H_{11}U_{12}^{[3]}$ to remove the effect of $x_{1p}^{(3)}$ by zero-forcing. In the resulting 2-dimensional signal space, contribution from $U_1$ and $W_2$ are present, together they are carrying $2.5$ DoFs and can be decoded by $Rx_1$, since as a MAC receiver it has a sum GDoF of $2.5$. Once decoded, it removes the effect of these signals from the original received signal (the received signal before zero-forcing), and decodes $x_{1p}^{(3)}$. On the other, $Rx_2$ does not face any interference from $Tx_1$ and so it can decode $W_2$.

{\it Achievability of point C}: Similarly, at point C we have $d_{1c}=.5$, $d_{1p}=2.5$, $d_{2c}=.5$, $d_{2p}=0$ and the transmit signals are
\begin{IEEEeqnarray}{rl}
X_1=&\sum_{k=1}^{2}\frac{\sqrt{\rho_{12} \lambda_{12}^{(k)}}}{\sqrt{3 (1+\rho_{12} \lambda_{12}^{(k)})}}x_{1c}^{(k)}U_{12}^{[k]}+\sum_{l=1}^{2}\frac{1}{\sqrt{3 (1+\rho_{12} \lambda_{12}^{(l)})}}x_{1p}^{(l)}U_{12}^{[l]}+\frac{1}{\sqrt{3}}x_{1p}^{(3)}U_{12}^{[3]},
\end{IEEEeqnarray}
where $x_{1c}^{(k)}$ and $x_{1p}^{(k)}$ carry $.25$ and $.75$ DoFs, respectively, for both $k=1,2$ and $x_{1p}^{(3)}$ carries 1 DoF. On the other hand,
\begin{IEEEeqnarray}{rl}
X_2=&\sum_{k=1}^{2}\frac{1}{\sqrt{2}}x_{2c}^{(k)}U_{21}^{[k]},
\end{IEEEeqnarray}
where $x_{2c}^{(k)}$ carries $.25$ DoFs, for both $k=1,2$.

{\it Decoding}: At $Rx_1$, first the contribution of $x_{1p}^{(3)}$ is zero-forced, and then $W_2$, $W_1$ and the remaining part of $U_1$ are decoded successively, in that order. During the decoding of each of these messages the others are treated as noise. After decoding a message, it is subtracted from the original signal. At $Rx_2$ also $W_2$, $W_1$ are decoded successively following a similar method.

\begin{rem}[ Achievability of a non-corner point]
\label{rem_time_sharing}
To illustrate the fact that all the points in between the corner points can be achieved without time-sharing, we consider the point $(2.25,1.25)$. From equation \eqref{eq:dof-bounds-for-example-3322-b} we know that $d_{1p}=1.75, d_{1c}=.5$ and $d_2=d_{2c}=1.25$ is achievable. The corresponding codewords are
\begin{IEEEeqnarray}{rl}
X_1=&\sum_{k=1}^{2}\frac{\sqrt{\rho_{12} \lambda_{12}^{(k)}}}{\sqrt{3 (1+\rho_{12} \lambda_{12}^{(k)})}}x_{1c}^{(k)}U_{12}^{[k]}+\sum_{l=1}^{2}\frac{1}{\sqrt{3 (1+\rho_{12} \lambda_{12}^{(l)})}}x_{1p}^{(l)}U_{12}^{[l]}+\frac{1}{\sqrt{3}}x_{1p}^{(3)}U_{12}^{[3]},
\end{IEEEeqnarray}
where $x_{1c}^{(k)}$ and $x_{1p}^{(k)}$ carry $.25$ and $3/8$ DoFs, respectively for both $k=1,2$ and $x_{1p}^{(3)}$ carries 1 DoF. On the other hand,
\begin{IEEEeqnarray}{rl}
X_2=&\sum_{k=1}^{2}\frac{1}{\sqrt{2}}x_{2c}^{(k)}U_{21}^{[k]},
\end{IEEEeqnarray}
where $x_{2c}^{(k)}$ carries $5/8$ DoFs, for both $k=1,2$.

{\it Decoding}: Decoding is similar to methods described above. The details are skipped.
\end{rem}
\end{ex}


\begin{figure}[htp]
  \begin{center}
    \subfigure[$\alpha=\frac{2}{3}$]{\label{figure_gdof_3232-a}\includegraphics[scale=0.3]{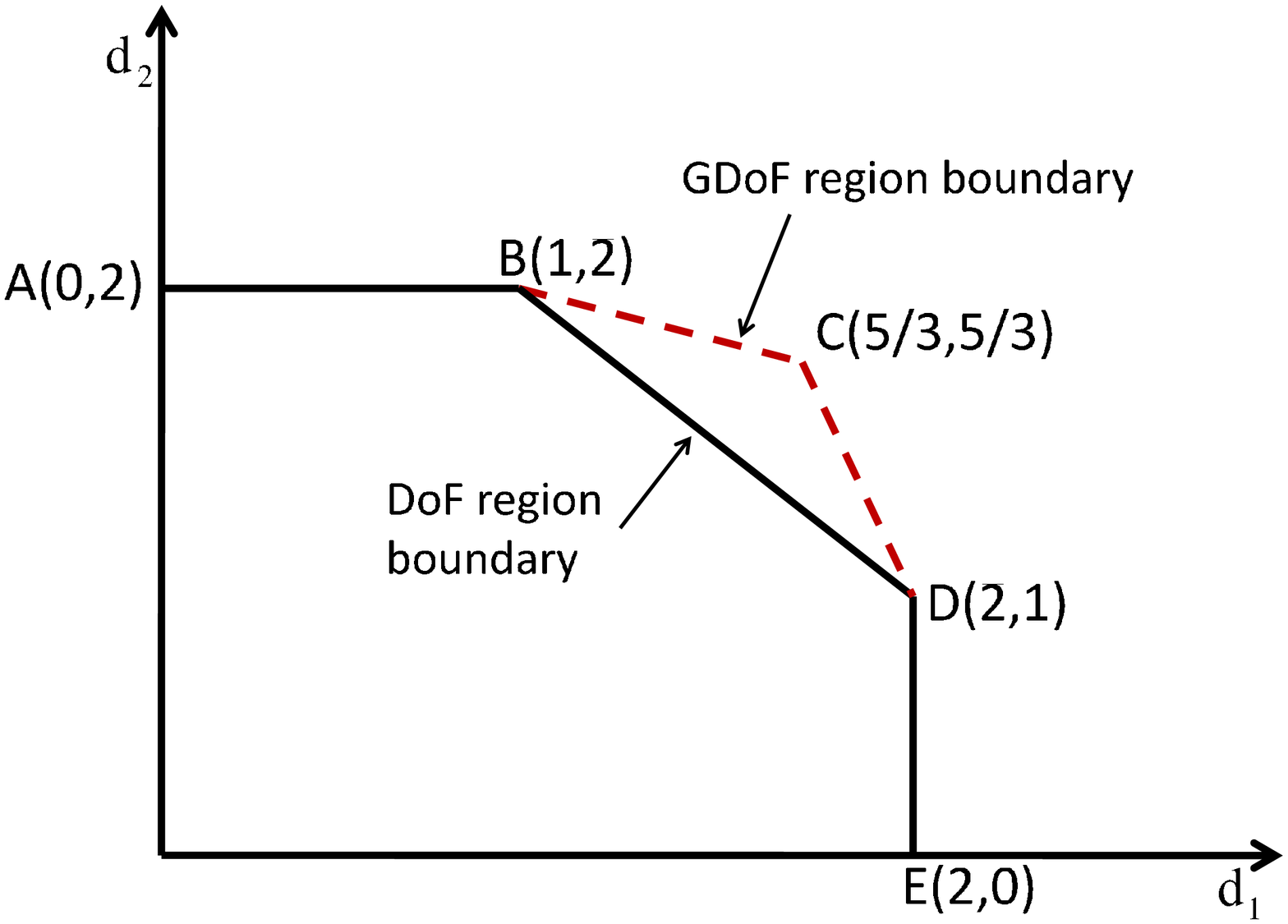}}
    \subfigure[$\alpha=\frac{3}{2}$]{\label{figure_gdof_3232-b}\includegraphics[scale=0.3]{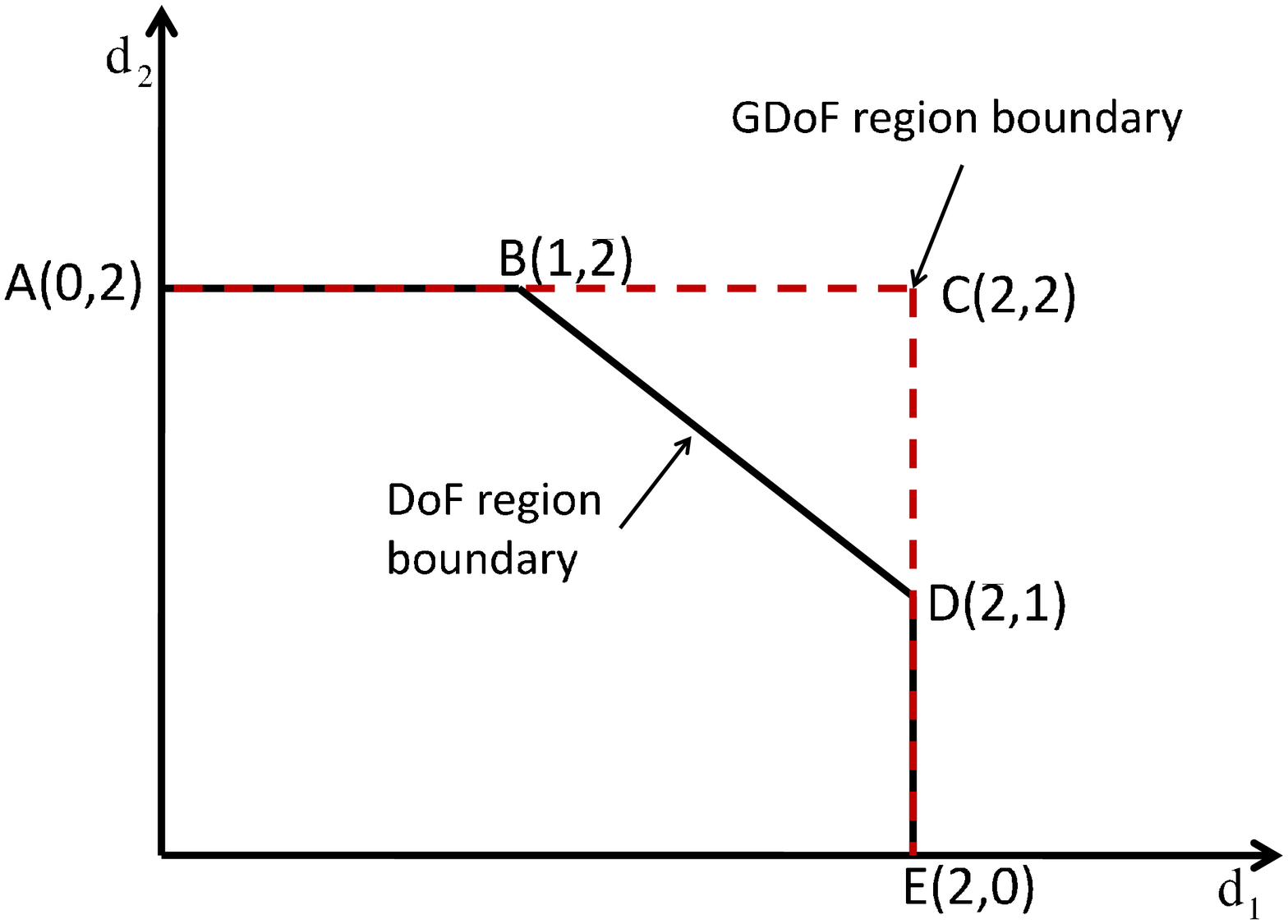}}
  \end{center}
\caption{GDoF region of the $(3,2,3,2)$ MIMO IC with $\alpha_{11}=\alpha_{22}=1$ and $\alpha_{12}=\alpha_{21}=\alpha$. }
\label{figure_gdof_3232}
\end{figure}

\section{Further insights}
\label{sec_insights}
\vspace{-1.25mm}
\subsection{Only Tx/Rx ZF Beam-forming is not GDoF optimal}
\label{subsec_BFZF_suboptimality}
The fundamental GDoF gives a finer high SNR approximation than the DoF approximation and therefore reveals insights that are not revealed by the DoF analysis. Figure~\ref{figure_gdof_3232-a} illustrates this point by comparing the DoF and GDoF region of the $(3,2,3,2)$ IC with $\bar{\alpha}=[1,\frac{2}{3},\frac{2}{3},1]$. It is known from \cite{JFak} that only transmit/receive zero-forcing beam-forming is sufficient to achieve any point in the DoF region of the channel. The DoF region achievable using this scheme is shown in Fig. \ref{figure_gdof_3232} as against the fundamental GDoF region. It is easily seen that forgoing the opportunity to align signals in the signal-level dimension leads to a strictly GDoF suboptimal performance. In particular, this technique can not achieve any point in the triangular region BCD. However, the coding scheme of Section~\ref{sec_channel_model_and_preliminaries} which in addition to beamforming, also employs signal-level interference alignment, can achieve all the points in the region BCD.

\subsection{On achieving single-user performance}
\label{subsection_single_user_perf}
It is well known that the achievability of single user DoFs on a MIMO MAC or BC depends on the number of antennas at the different nodes. Moreover, on a SISO IC, it depends on the interference level, $\alpha$. On a MIMO IC it depends on {\em both}. From Corollary~\ref{cor_gdof_symmetric_MgeqN} we get that on a $(M,N,M,N)$ IC with $M\geq N$, the single user GDoF is achieved (each user gets $N$ DoFs) when
\begin{equation*}
    \alpha\geq \alpha^*=\left(3-\frac{M}{N}\right).
\end{equation*}
In contrast to the case on a SISO IC, the value of $\alpha$ at which the single-user performance is achieved, denoted by $\alpha^*$, decreases below $2$ as $M$ increases, giving a similar effect as in a MAC or BC (see Fig. \ref{figure_gdof_MgeqN_channel}).

\begin{figure}[htp]
  \begin{center}
    \subfigure[Symmetric GDoF region of the $(3,2,2,3)$ IC.]{\label{figure_gdof_3223_channel}\includegraphics[scale=.5]{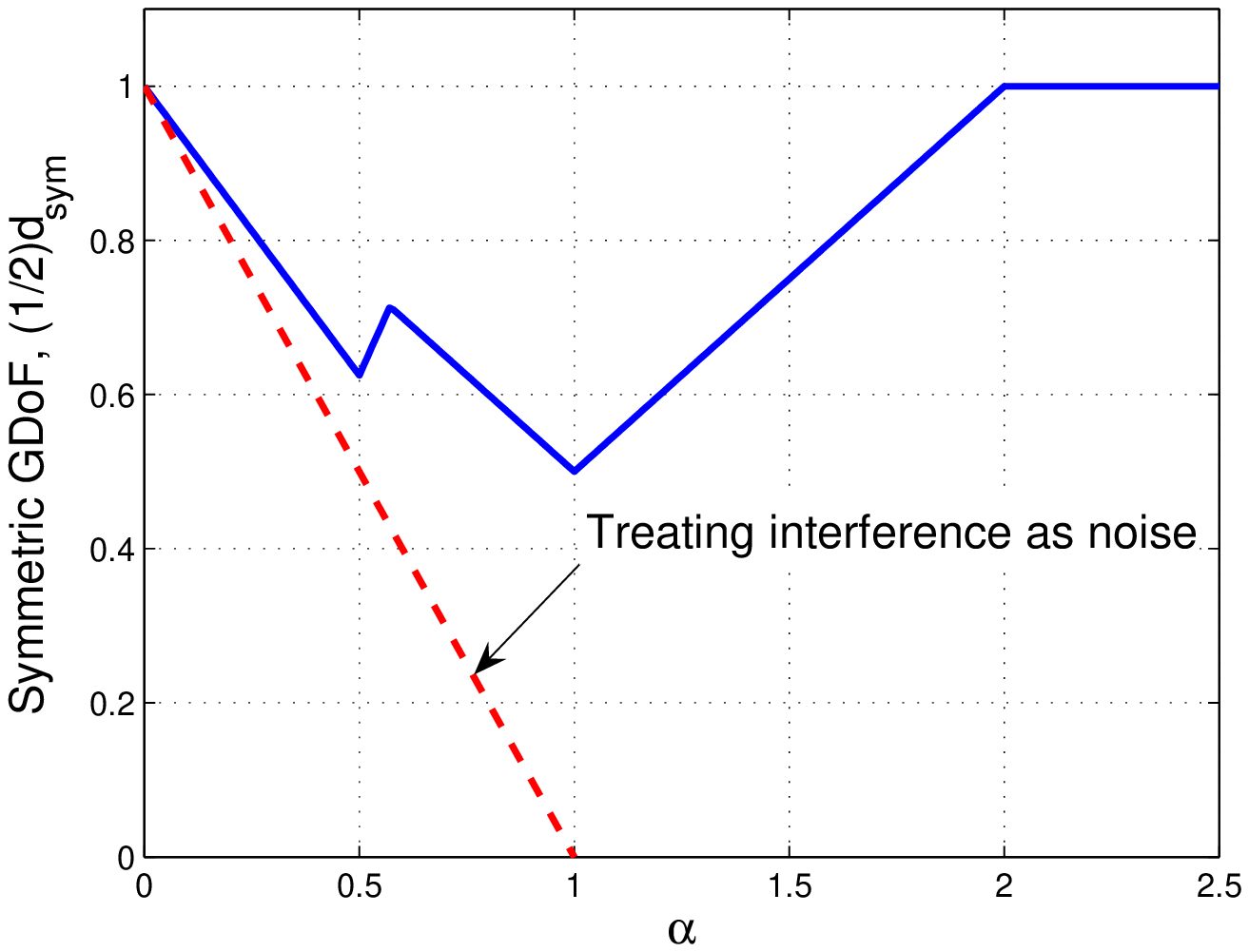}}
    \subfigure[Symmetric GDoF region of a $(1,1,2,1)$ MIMO IC]{\label{figure_gdof_1121}\includegraphics[scale=0.5]{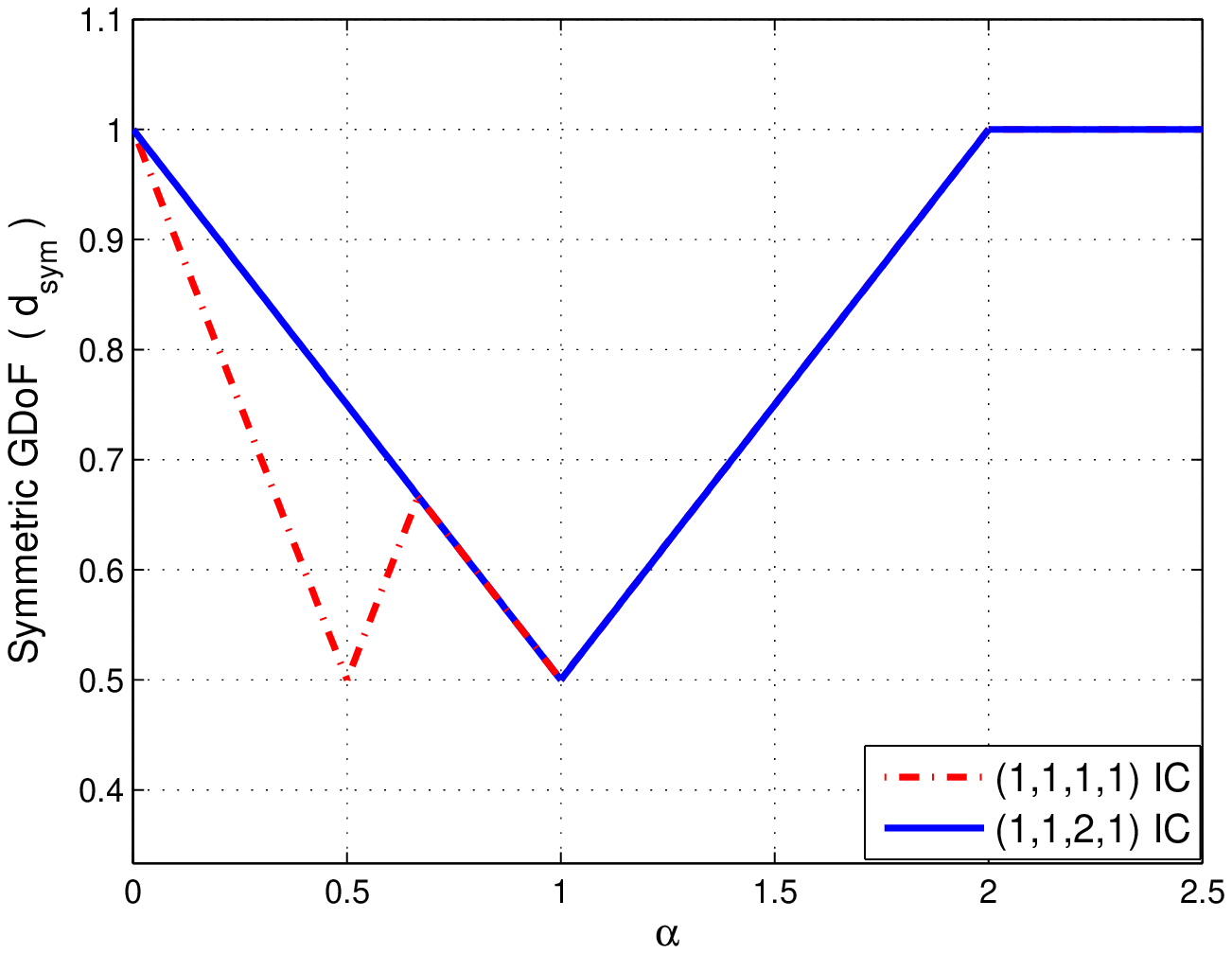}}
  \end{center}
\caption{Sub-optimality of TIN and deviation of the GDoF boundary from the well known ``W" shape. }
\end{figure}

\subsection{Sub-optimality of treating interference as noise}
Another fundamental difference of the MIMO IC from the SISO IC revealed by the GDoF analysis is this: in general, treating interference as noise (TIN) is {\em not} GDoF optimal on a MIMO IC even in the {\it very weak interference} regime, i.e., when $\alpha\leq \frac{1}{2}$. This is seen in Fig. \ref{figure_gdof_MgeqN_channel} where the dotted line, which represents the symmetric GDoF achievable by TIN, is strictly sub-optimal with respect to the fundamental GDoF of the channel for $\alpha\leq \frac{1}{2}$ whenever $M/N>1$. See also Fig. \ref{figure_gdof_3223_channel} which illustrates this point for the $(3,2,2,3)$ MIMO IC.

\vspace{-1.25mm}

\begin{figure}[htp]
  \begin{center}
  \subfigure{\label{figure_ZF_channel-b}\includegraphics[scale=.25]{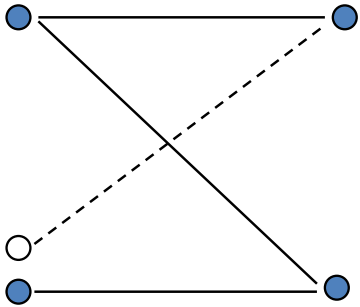}}
  \end{center}
\caption{Diagonalization of the cross links using ZF and BF. }
\end{figure}

\subsection{Deviation from the ``W" shape}
Unlike in the SISO IC, the symmetric GDoF region of a MIMO IC in general need not maintain the ``W" shape. The deviation in general is due to asymmetry in the numbers of antennas. For example, consider the $(1,1,2,1)$ IC with $\alpha_{ii}=1$ and $\alpha_{ij}=\alpha$, for $i\neq j\in \{1,2\}$. The best achievable symmetric DoF ($d_{sym}=d_1=d_2$) on this channel denoted by $d$ is
\begin{equation*}
d=\left\{\begin{array}{ll}
1-\frac{\alpha}{2}, &0\leq \alpha\leq 1;\\
\frac{\alpha}{2}, & 1\leq\alpha\leq 2;\\
1, & 2\leq \alpha.
\end{array}\right.
\end{equation*}
which is depicted in Figure~\ref{figure_gdof_1121}. Diagonalizing the cross-link from $Tx_2$ to $Rx_1$ and then turning off the subchannel which interferes with $Rx_1$ gives the GDoF equivalent channel of Figure~\ref{figure_ZF_channel-b} which is a SISO ``Z" IC. The symmetric GDoF region of this channel is indeed ``V" shaped as found in \cite{ETW1}. Although a little more involved, the distorted ``W" of Figure~\ref{figure_gdof_3223_channel} for the $(3,2,2,3)$ MIMO IC can be explained similarly.

\section{Conclusion}
\label{sec_conclusion}

The GDoF analysis of this paper, unifies and generalizes the earlier results on GDoF of SISO IC~\cite{ETW1}, the DoF region~\cite{JFak} of MIMO IC and the symmetric GDoF~\cite{PBT} of MIMO IC through a single achievable scheme for all. The coding schemes in~\cite{JFak} and \cite{PBT} are strictly suboptimal in the GDoF sense on a general 2-user MIMO IC in one case or other. The analysis here reveals various insights about the MIMO IC including the fact that in general, partially decoding the unintended user's message is necessary to be GDoF optimal even in the so called very weak interference regime. The two types of signaling dimensions available on a MIMO IC -- namely, signal space and signal level -- are jointly and optimally exploited in the GDoF optimal scheme.

\appendices


\section{Proof of Lemma~\ref{lem:sum-dof-of-2user-MAC}}
\label{App:proof-lemma-2user-MAC}
We shall consider two different cases: 1) $(u_1+u_2)\geq u$; and 2) $(u_1+u_2)< u$. The first case was proved in Lemma~1 of \cite{PBT} which gives
\begin{equation}\label{eq_lem_my_MAC_1}
\log\det\left(I_u+\rho^a H_1H_1^{\dagger}+\rho^b H_2H_2^{\dagger}\right)=\min\{u,u_1\}a\log(\rho)+(u-u_1)^+b\log(\rho)+\mathcal{O}(1).
\end{equation}
However when $(u_1+u_2)< u$ we have the following
\begin{IEEEeqnarray}{rl}
\log\det\left(I_u+\rho^a H_1H_1^{\dagger}+\rho^b H_2H_2^{\dagger}\right)=&\log\det\left(I_u+ [H_1~H_2]\left[\begin{array}{cc}\rho^aI_{u_1} & 0\\ 0 & \rho^aI_{u_2} \end{array}\right] [H_1~H_2]^{\dagger}\right)\nonumber\\
=&\log\det\left(I_{(u_1+u_2)}+ H^\dagger H\left[\begin{array}{cc}\rho^aI_{u_1} & 0\\ 0 & \rho^aI_{u_2} \end{array}\right] \right),~[\because H=[H_1~H_2]]\nonumber\\
\stackrel{(a)}{=}&\log\det\left( H^\dagger H\left[\begin{array}{cc}\rho^aI_{u_1} & 0\\ 0 & \rho^aI_{u_2} \end{array}\right] \right)+o(1)\nonumber\\
\label{eq_lem_my_MAC_2}
=&(u_1a+u_2b)\log(\rho)+\mathcal{O}(1),
\end{IEEEeqnarray}
where step $(a)$ follows from the fact that $H^\dagger H$ is full rank, since $H$ is full rank by assumption. Finally combining equations \eqref{eq_lem_my_MAC_1} and \eqref{eq_lem_my_MAC_2} we have the desired result.

\section{Proof of Lemma~\ref{lem:sum-dof-of-3user-MAC}}
\label{pf_lem_my_MAC3}
Without loss of generality, let us assume that $a_1\geq \max\{a_2,a_3\}$. In the proof we shall use the following notations:
\begin{IEEEeqnarray*}{rl}
L_t=\log\det\left(I_u+\sum_{i=1}^3\rho^{a_i}H_iH_i^{\dagger}\right),~\Lambda = \left[\begin{array}{cc} \rho^{a_2}I_{u_2} & 0\\ 0 & \rho^{a_3}I_{u_3}\end{array}\right],~\textrm{and}~ H_{23}=[H_2~H_3],
\end{IEEEeqnarray*}
where the entries of $H_{23}$ are iid that come form a continuous distribution and hence $H_{23}$ is is full rank w.p.1. Using the identity $\log\det(I+AB)=\log\det(I+BA)$ we get
\begin{IEEEeqnarray}{rl}
L_t=&\log\det\left(I_{u}+\rho^{a_1}H_1H_1^{\dagger}+H_{23}\Lambda H_{23}^{\dagger}\right),\nonumber \\
=&\log\det\left(I_{u}+\rho^{a_1}H_1H_1^{\dagger}\right)+\log\det\left(I_{u_2+u_3}+ \Lambda H_{23}^{\dagger}\left(I_u+\rho^{a_1}H_1H_1^{\dagger}\right)^{-1}
H_{23}\right),\nonumber\\
\label{eq_pf_lem_my_MAC3_1}
=&\min \{u,u_1\} a_1\log(\rho)+\log\det\left(I_{u_2+u_3}+ \Lambda H_{23}^{\dagger}\left(I_u+\rho^{a_1}H_1H_1^{\dagger}\right)^{-1}
H_{23}\right)+\mathcal{O}(1),
\end{IEEEeqnarray}
Next, we approximate the second term on the right hand side of the last equation as
\begin{IEEEeqnarray}{rl}
L_{t2}=&\log\det\left(I_{u_2+u_3}+ \Lambda H_{23}^{\dagger}\left(I_u+\rho^{a_1}H_1H_1^{\dagger}\right)^{-1}
H_{23}\right) \nonumber \\
=&\log\det\left(I_{u_2+u_3}+\Lambda H_{23}^\dagger U\left[\begin{array}{cc}\left(I_{\min\{u,u_1\}}+\rho^{a_1}\Lambda_1\right)^{-1}& 0\\ 0 & I_{(u-u_1)^+}\end{array}\right]U^{\dagger}H_{23}\right),
\end{IEEEeqnarray}
where in the last step we have used the eigen-decomposition of the matrix $H_1H_1^\dagger$, where $\Lambda_1$ is a diagonal matrix containing the positive eigenvalues only and $U$ is the unitary matrix containing all the eigen-vectors. Since both $H_2,H_3\in \mathcal{P}$, $H\triangleq U^\dagger H_{23}$ is identically distributed as $H_{23}$ and thus $H\in \mathcal{P}$. Suppose the rows of the matrix $H$ are divided into two sub sets: $G_1^\dagger=H^{(1:\min\{u,u_1\})}$ and $G_2^\dagger = H^{(\min\{u,u_1\}+1:u)}$, then both $G_1^\dagger, G_2^\dagger \in \mathcal{P}$ and from the last equation we get
\begin{IEEEeqnarray}{rl}
L_{t2}=&\log\det\left(I_{u_2+u_3}+\Lambda G_1\left(I_{u_1}+\rho^{a_1}\Lambda_1\right)^{-1}G_1^{\dagger}+\Lambda G_2G_2^\dagger\right), \nonumber \\
\stackrel{(c)}{=}&\log\det\left(I_{u_2+u_3}+\Lambda G_2 G_2^{\dagger}\right)+\mathcal{O}(1), \nonumber \\
\label{eq_pf_lem_my_MAC3_2}
=&\log\det\left(I_{(u-u_1)^+}+G_2^\dagger \Lambda G_2 \right)+\mathcal{O}(1),
\end{IEEEeqnarray}
where step $(c)$ follows from the fact that $a_1\geq \max\{a_2,a_3\}$. Since $G_2^\dagger\in \mathcal{P}$, it is full rank and so are $(G_2^\dagger)^{[1:u_2]}=G_{21}$ and $(G_2^\dagger)^{[u_2+1:u_2+u_3]}=G_{22}$. Putting these into equation \eqref{eq_pf_lem_my_MAC3_2} we get
\begin{IEEEeqnarray*}{rl}
L_{t2}=&\log\det\left(I_{u-u_1}+\rho^{a_2}G_{21}G_{21}^\dagger +\rho^{a_3} G_{22}G_{22}^\dagger\right)+\mathcal{O}(1),\nonumber\\
=& \left(\min\{(u-u_1)^+,u_2\}a_2+\min\{(u-u_1-u_2)^+,u_3\}a_3\right)\log(\rho)+\mathcal{O}(1),
\end{IEEEeqnarray*}
where the last step follows from Lemma~\ref{lem:sum-dof-of-2user-MAC}. Finally, substituting this into equation \eqref{eq_pf_lem_my_MAC3_1}, we get the desired result.


\section{Proof of Theorem~\ref{thm_mainresult}}
\label{proof_thm_mainresult}

{\it $1^{st}$ and $2^{nd}$ bound}: We start with the first two constraints in equation \eqref{eq_bound1} and \eqref{eq_bound2},
\begin{IEEEeqnarray*}{rl}
I_1\triangleq & \log\det\left(I_{N_1}+\rho H_{11}H_{11}^{\dagger}\right)=\min \{M_1, N_1\}\log(\rho)+\mathcal{O}(1);~\left[\because ~ \textrm{Lemma~\ref{lem:sum-dof-of-2user-MAC} with $a=1$ and $b=0$}\right]\\
I_2\triangleq & \log\det\left(I_{N_1}+\rho^{\alpha_{22}}H_{22}H_{22}^{\dagger}\right)=\min \{M_2, N_2\}\alpha_{22}\log(\rho)+\mathcal{O}(1).~\left[\because ~ \textrm{Lemma~\ref{lem:sum-dof-of-2user-MAC} with $a=\alpha_{22}$ and $b=0$}\right]
\end{IEEEeqnarray*}
Putting these into equations \eqref{eq_outline_gdof_b1} and \eqref{eq_outline_gdof_b2} we get
\begin{IEEEeqnarray}{l}
\label{eq_pf_D1}
d_1 \leq \lim_{\rho \to \infty} \frac{I_1}{\log(\rho)}=\min \{M_1, N_1\}; \\
\label{eq_pf_D2}
d_2 \leq \lim_{\rho \to \infty} \frac{I_2}{(\alpha_{22}\log(\rho))}=\min\{M_2, N_2\}.
\end{IEEEeqnarray}


{\it $3^{rd}$ and $4^{th}$ bound}: Using Lemma~\ref{lem:sum-dof-of-2user-MAC} we get
\begin{IEEEeqnarray}{rl}
I_{31}\triangleq &\log\det\left(I_{N_2}+\rho^{\alpha_{12}}H_{12}H_{12}^{\dagger}+\rho^{\alpha_{22}}H_{22}H_{22}^{\dagger}\right) \nonumber \\
\label{eq_pf_gdof_b31}
=& f\left(N_2, (\alpha_{12},M_1),(\alpha_{22},M_2)\right)\log(\rho)+\mathcal{O}(1),
\end{IEEEeqnarray}
where $f(.,.,.)$ is as defined in equation \eqref{eq_def_f}. The second term in equation \eqref{eq_bound3} can again be approximated by using Lemma~\ref{lem:sum-dof-of-2user-MAC} twice (recall Remark~\ref{rem:channel-distribution}) as follows.
\begin{IEEEeqnarray}{rl}
I_{32}=&\log \det \left(I_{N_1}+\rho_{11} H_{11}\left( I_{M_1}+\rho_{12} H_{12}^{\dagger}H_{12}\right)^{-1}H_{11}^{\dagger}\right),\nonumber \\
\label{eq_pf_mainthm_I32_expansion}
=&\log \det \left(I_{M_1}+\rho_{12} H_{12}^{\dagger}H_{12}+\rho_{11} H_{11}^\dagger H_{11}\right)-\log\det\left( I_{M_1}+\rho_{12} H_{12}^{\dagger}H_{12}\right),\nonumber \\
\label{eq_pf_gdof_b323}
=&f\left(M_1, (\alpha_{12},N_2),(\alpha_{11},N_1)\right)\log(\rho)-\min\{M_1,N_2\}\alpha_{12}\log(\rho)+\mathcal{O}(1),\\
=&f\left(N_1, ((1-\alpha_{12})^+,m_{12}),(1,(M_1-N_2)^+)\right)\log(\rho)+\mathcal{O}(1),
\end{IEEEeqnarray}
where the last step follows from the definition of $f(.)$, i.e., equation \eqref{eq_def_f}. Next, putting equations \eqref{eq_pf_gdof_b31}, and \eqref{eq_pf_gdof_b323} in equation \eqref{eq_outline_gdof_b3} we get
\begin{IEEEeqnarray*}{rl}
d_1 +\alpha_{22}d_2\leq &\lim_{\rho \to \infty} \frac{I_3}{\log(\rho)}=\lim_{\rho \to \infty} \frac{I_{31}+I_{32}}{\log(\rho)}; \nonumber \\
\label{eq_pf_D3}
=& f\left(N_2, (\alpha_{12},M_1),(\alpha_{22},M_2)\right)+f\left(N_1, ((1-\alpha_{12})^+,m_{12}),(1,(M_1-N_2)^+)\right).
\end{IEEEeqnarray*}
Similarly, approximating the terms in equation \eqref{eq_bound4} by Lemma~\ref{lem:sum-dof-of-2user-MAC} and putting it in equation \eqref{eq_outline_gdof_b4} we get
\begin{IEEEeqnarray*}{rl}
\label{eq_pf_D4}
d_1 +\alpha_{22}d_2\leq & f\left(N_1, (\alpha_{21},M_2),(\alpha_{11},M_1)\right)+f\left(N_2, ((\alpha_{22}-\alpha_{21})^+,m_{21}),(\alpha_{22},(M_2-N_1)^+)\right).
\end{IEEEeqnarray*}

{\it $5^{th}$ bound}: Note that neither of the terms in equation \eqref{eq_bound5} are in a form on which we can apply Lemma~\ref{lem:sum-dof-of-2user-MAC} or \ref{lem:sum-dof-of-3user-MAC}. However, as we shall see next, these terms can be expressed in an alternative format on which Lemma~\ref{lem:sum-dof-of-3user-MAC} can be used. Let the eigenvalue decomposition of the matrix $H_{12}^\dagger H_{12}$ is given as
\begin{IEEEeqnarray*}{l}
H_{12}^\dagger H_{12}=U\Lambda U^\dagger, ~\textrm{where} ~{\Lambda}=\left[\begin{array}{cc} \Lambda^+ & 0\\ 0 & 0_{(M_1-N_2)^+}\end{array}\right]
\end{IEEEeqnarray*}
and $\Lambda^+$ is a diagonal matrix containing only the positive eigenvalues. Using this decomposition we get
\begin{IEEEeqnarray*}{rl}
I_{51} \triangleq & \log\det\left(I_{N_1}+\rho^{\alpha_{21}}H_{21}H_{21}^{\dagger}+\rho H_{11}\left(I_{M_1}+\rho_{12}H_{12}^\dagger H_{12}\right)^{-1}H_{11}^{\dagger}\right)\nonumber \\
=&\log\det\left(I_{N_1}+\rho^{\alpha_{21}}H_{21}H_{21}^{\dagger}+\rho H_{11}U\left[\begin{array}{cc}\left(I_{m_{12}}+\rho^{\alpha_{12}}\Lambda^+\right)^{-1}& 0\\ 0 & I_{(M_1-N_2)^+}\end{array}\right]U^{\dagger}H_{11}^{\dagger}\right),\nonumber \\
=&\log\det\left(I_{N_1}+\rho^{\alpha_{21}}H_{21}H_{21}^{\dagger}+\rho \widetilde{H}\left[\begin{array}{cc}\left(I_{m_{12}}+\rho^{\alpha_{12}}\Lambda^+\right)^{-1}& 0\\ 0 & I_{(M_1-N_2)^+}\end{array}\right]\widetilde{H}^{\dagger}\right), ~\left[\because ~\widetilde{H}=H_{11}U\right]
\end{IEEEeqnarray*}
Note that $\widetilde{H}$ is identically distributed to $H_{11}$. Therefore, both $G_1=\widetilde{H}^{[1:m_{12}]}$ and $G_2=\widetilde{H}^{[(m_{12}+1):M_1]}$ have the same distribution as specified in section~\ref{sec_channel_model_and_preliminaries}, having all the properties of a typical channel matrix of the 2-user MIMO IC. Substituting this in the last equation we get
\begin{IEEEeqnarray*}{rl}
I_{51}=&\log\det\left(I_{N_1}+\rho^{\alpha_{21}}H_{21}H_{21}^{\dagger}+\rho G_1\left(I_{m_{12}}+\rho^{\alpha_{12}}\Lambda^+\right)^{-1}G_1^{\dagger}+\rho G_2G_2^{\dagger}\right) \nonumber \\
\label{eq_temp21}
=&\log\det\left(I_{N_1}+\rho^{\alpha_{21}}H_{21}H_{21}^{\dagger}+\rho^{(1-\alpha_{12})} G_1G_1^{\dagger}+\rho G_2G_2^{\dagger}\right)+o(1).
\end{IEEEeqnarray*}
Clearly, we can now apply Lemma~\ref{lem:sum-dof-of-3user-MAC} on
equation \eqref{eq_temp21},
\begin{IEEEeqnarray}{rl}
I_{51} =g\left(N_1,(\alpha_{21},M_2),((1-\alpha_{12})^+,m_{12}),(1,(M_1-N_2)^+)\right)\log(\rho)+\mathcal{O}(1).
\end{IEEEeqnarray}
Applying similar technique for the other term in equation \eqref{eq_bound5} we get
\begin{IEEEeqnarray}{rl}
I_{52} =g\left(N_2,(\alpha_{12},M_1),((\alpha_{22}-\alpha_{21})^+,m_{21}),(\alpha_{22},(M_2-N_1)^+)\right)\log(\rho)+\mathcal{O}(1).
\end{IEEEeqnarray}
Finally, using this expressions for $I_{51}$ and $I_{52}$ in equation \eqref{eq_outline_gdof_b5} we get the $5^{th}$ bound for the GDoF region.

{\it $6^{th}$ and $7^{th}$ bound}: Note that equations \eqref{eq_bound6} and \eqref{eq_bound7} involves terms whose approximations are already computed. Using those approximations we get the remaining 2 bounds of the GDoF region completing the proof.

\section{Proof of Corollary~\ref{claim:DoF-region}}
\label{App:proof-of-DoF-region}
We have to prove that the bounds in equations \eqref{eq:DoF-region-of-MIMO-IC-e}-\eqref{eq:DoF-region-of-MIMO-IC-g} are looser than the others. To analyze the 5-th bound, we start from equation \eqref{eq:DoF-region-of-MIMO-IC-e},
\begin{IEEEeqnarray*}{rl}
(d_1+d_2)\leq & (N_1\land M_2)+ ((N_1-M_2)^+\land (M_1-N_2)^+) + (N_2\land M_1)+ ((M_1-N_2)^+\land(M_2-N_1)^+);\\
=& \left\{\begin{array}{cc}
M_2+N_2+\{(N_1-M_2)\land (M_1-N_2)\},&\textrm{if }~N_1>M_2~\textrm{and}~M_1>N_2;\\
N_1+N_2,&\textrm{if }~N_1 \leq M_2~\textrm{and}~M_1>N_2;\\
M_1+M_2,&\textrm{if }~N_1>M_2~\textrm{and}~M_1\leq N_2;\\
M_1+N_1+\{(N_2-M_1)\land (M_2-N_1)\},&\textrm{if }~N_1\leq M_2~\textrm{and}~M_1 \leq N_2;
\end{array}\right.\\
=& \left\{\begin{array}{cc}
\min\{(M_1+M_2),(N_1+N_2)\},&\textrm{if }~N_1>M_2~\textrm{and}~M_1>N_2;\\
N_1+N_2,&\textrm{if }~N_1 \leq M_2~\textrm{and}~M_1>N_2;\\
M_1+M_2,&\textrm{if }~N_1>M_2~\textrm{and}~M_1\leq N_2;\\
\min\{(M_1+M_2),(N_1+N_2)\},&\textrm{if }~N_1\leq M_2~\textrm{and}~M_1 \leq N_2;
\end{array}\right.
\end{IEEEeqnarray*}
Clearly, this is looser than both the $ 3^{\rm rd}$ and the $4^{\rm th}$ bound. Consider next the $6^{\rm th}$ bound which, from equation \eqref{eq:DoF-region-of-MIMO-IC-f}, is given as

\begin{IEEEeqnarray*}{rl}
(2d_1+d_2)\leq & (N_1\land (M_1+M_2))+N_1\land (M_1-N_2)^++(N_2\land M_1)+ ((M_1-N_2)^+\land(M_2-N_1)^+);\\
=& \left\{\begin{array}{cc}
(N_1\land (M_1+M_2))+(N_1\land (M_1-N_2))+N_2 ,&\textrm{if }~M_1\geq N_2;\\
(N_1\land (M_1+M_2))+M_1+((N_2-M_1)\land (M_2-N_1)^+) ,&\textrm{if }~M_1 < N_2;
\end{array}\right.\\
=& \left\{\begin{array}{cc}
(N_1\land (M_1+M_2))+(N_1\land (M_1-N_2))+N_2 ,&\textrm{if }~M_1\geq N_2;\\
(N_1\land (M_1+M_2))+\min\{N_2,(M_1+M_2-N_1)\} ,&\textrm{if }~M_1 < N_2~\textrm{and}~M_2>N_1;\\
(N_1\land (M_1+M_2))+M_1 ,&\textrm{if }~M_1 < N_2~\textrm{and}~M_2\leq N_1;
\end{array}\right.\\
=& \left\{\begin{array}{cc}
(N_1\land (M_1+M_2))+\min\{(N_1+N_2),M_1\} ,&\textrm{if }~M_1\geq N_2;\\
\min\{(M_1+M_2),(N_1+N_2)\} ,&\textrm{if }~M_1 < N_2~\textrm{and}~M_2>N_1;\\
M_1+(N_1\land (M_1+M_2)) ,&\textrm{if }~M_1 < N_2~\textrm{and}~M_2\leq N_1;
\end{array}\right.\\
=& \left\{\begin{array}{cc}
(N_1\land (M_1+M_2))+\min\{(N_1+N_2),\max(N_2,M_1)\} ,&\textrm{if }~M_1\geq N_2;\\
\min\{M_1+\max(M_2,N_1),N_1+\max(N_2,M_1)\} ,&\textrm{if }~M_1 < N_2~\textrm{and}~M_2>N_1;\\
M_1+\min\{\max(N_1,M_2), (M_1+M_2)\} ,&\textrm{if }~M_1 < N_2~\textrm{and}~M_2\leq N_1;
\end{array}\right.
\end{IEEEeqnarray*}
It is clear that this bound is looser than the sum of the $1^{\rm st}$ and $3^{\rm rd}$ or the sum of the $1^{\rm st}$ and the $4^{\rm th}$ bounds. The proof of the fact that the same is true for the $7^{\rm th}$ bound is identical and is hence skipped.

\section{GDoF region of the 2-User MIMO MAC}
\label{App:GDoF:MIMO-MAC}

Both the set of lower and upper bounds to the capacity region of the 2-user MIMO IC contain terms that also appear in the capacity region of a 2-user MIMO MAC channel. Thus, as a by product we can obtain the GDoF region of the MIMO MAC channel.

Consider a MIMO MAC with two transmitters having $M_1$ and $M_2$ antennas, respectively, and with $N$ receive antennas at the common receiver. The input-output relation for this channel can be written as
\begin{eqnarray*}
Y=\sqrt{\rho}H X_{1}+ \sqrt{\rho^{\alpha}}G X_{2}+Z,
\end{eqnarray*}
where $X_i \in \mathbb{C}^{M_i\times 1}$ is the transmitted signal from user $i$, where $Y \in \mathbb{C}^{N\times 1}$ is the received signal and $H \in \mathbb{C}^{N\times M_1}$ and $G \in \mathbb{C}^{N\times M_2}$ are the channel matrices from users $1$ and $2$ to the receiver, respectively, both of which are assumed to have full rank, and $Z \sim \mathcal{CN}(\mathbf{0}, I_{N})$ is additive white Gaussian noise. Without loss of generality, we assume that the SNR of the second user is represented as $\rho^{\alpha}$.

Let $\mathcal{C}_{MAC}(H,G)$ denote the capacity region of the 2-user MIMO MAC defined above.
The GDoF region is defined as
\begin{eqnarray*}
\mathcal{D}_{MAC}=\left\{(d_1,d_2): ~d_i=\lim_{\rho \to \infty}\frac{R_i}{\log(\rho)}, i\in\{1,2\}~\textrm{and}~ (R_1,R_2)\in \mathcal{C}_{MAC}(H,G)\right\} .
\end{eqnarray*}

The result below gives the GDoF region of the 2-user MIMO MAC.
\begin{cor}
\label{cor_gdof_MIMO_MAC}
The GDoF region of the 2-user MIMO MAC defined above is given as
\begin{IEEEeqnarray*}{rl}
\Big\{(d_1,d_2): d_1 \leq & \min\{M_1,N\}; \\
d_2\leq & \min\{M_2,N\}\alpha;\\
(d_1+d_2)\leq & f\left(N, (\alpha,M_2),(1,M_1)\right) \Big\},
\end{IEEEeqnarray*}
where $f\left(.,.,.\right)$ is given by equation~\eqref{eq_def_f}.
\end{cor}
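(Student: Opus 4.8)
The plan is to mimic the proof of Theorem~\ref{thm_mainresult}: start from the \emph{exact} capacity region of the Gaussian MIMO MAC, normalize each rate constraint by $\log(\rho)$ and pass to the limit $\rho\to\infty$, using Lemma~\ref{lem:sum-dof-of-2user-MAC} to evaluate the resulting limits. Recall that $\mathcal{C}_{MAC}(H,G)$ is exactly the set of $(R_1,R_2)\in\mathbb{R}^{+2}$ with $R_1\le\log\det(I_N+\rho HH^\dagger)$, $R_2\le\log\det(I_N+\rho^{\alpha}GG^\dagger)$, and $R_1+R_2\le\log\det(I_N+\rho HH^\dagger+\rho^{\alpha}GG^\dagger)$, and that this pentagon is achieved by mutually independent Gaussian inputs (successive decoding at the corner points, time-sharing along the dominant face). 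Since the inner and outer descriptions of $\mathcal{C}_{MAC}(H,G)$ coincide, no constant-gap argument is needed here; it suffices to scale this region exactly as in the proof of Lemma~\ref{lem_alternate_def_GDOF}.

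First I would handle the two single-rate bounds. Dividing $R_1\le\log\det(I_N+\rho HH^\dagger)$ by $\log(\rho)$ and letting $\rho\to\infty$, the standard eigenvalue estimate $\log\det(I_N+\rho^{a}HH^\dagger)=\min\{N,M\}a^+\log(\rho)+\mathcal{O}(1)$ (the degenerate $u_2=0$ instance of Lemma~\ref{lem:sum-dof-of-2user-MAC}) gives $d_1\le\min\{M_1,N\}$; applying the same step to the second bound, with exponent $\alpha\ge 0$, gives $d_2\le\min\{M_2,N\}\alpha$. Next I would handle the sum bound: applying Lemma~\ref{lem:sum-dof-of-2user-MAC} with $H_1=G$ (so $u_1=M_2$, $a=\alpha$), $H_2=H$ (so $u_2=M_1$, $b=1$) and $u=N$ — whose hypotheses hold because $H,G$ are full rank and $[G\ H]$ is full rank w.p.1 (Remark~\ref{rem:channel-distribution}) — yields $\log\det(I_N+\rho HH^\dagger+\rho^{\alpha}GG^\dagger)=f(N,(\alpha,M_2),(1,M_1))\log(\rho)+\mathcal{O}(1)$, hence $d_1+d_2\le f(N,(\alpha,M_2),(1,M_1))$. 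This establishes the converse, i.e.\ that $\mathcal{D}_{MAC}$ is contained in the claimed region.

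For achievability I would observe that, for every fixed $\rho$, each $(R_1,R_2)\in\mathcal{C}_{MAC}(H,G)$ has normalized limit $(R_1/\log\rho,\,R_2/\log\rho)$, and that for each target GDoF pair in the claimed region one can pick a sequence of such rate pairs whose normalized limit is that target: the vertices of the claimed polygon correspond to the corner points of $\mathcal{C}_{MAC}(H,G)$ and intermediate points follow by time-sharing. The only point warranting care is whether $M_1+M_2\gtrless N$: I would note that Lemma~\ref{lem:sum-dof-of-2user-MAC}, in particular the case $(u_1+u_2)<u$ proved in Appendix~\ref{App:proof-lemma-2user-MAC}, already covers both regimes, so the single expression $f(N,(\alpha,M_2),(1,M_1))$ is valid without any antenna restriction, and since $\alpha\ge 0$ every $(\cdot)^+$ in the definition of $f$ collapses to its argument. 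There is no substantial obstacle here — the result is a direct specialization of the machinery developed for the IC.
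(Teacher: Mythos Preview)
Your approach is correct and essentially the same as the paper's: normalize the MAC capacity bounds by $\log(\rho)$ and invoke Lemma~\ref{lem:sum-dof-of-2user-MAC} for each of the three constraints. One small imprecision worth flagging: the pentagon you write with $\rho HH^\dagger$ and $\rho^\alpha GG^\dagger$ is an \emph{outer} bound on $\mathcal{C}_{MAC}$ (any feasible $Q_i$ under the unit-trace constraint satisfies $Q_i\preceq I_{M_i}$), not the exact capacity region; the paper makes this explicit by writing separate inner and outer bounds $\mathcal{R}_A\subseteq\mathcal{C}_{MAC}\subseteq\mathcal{R}^U$ that differ only by SNR-independent constants, but since scaled-identity inputs achieve each of your three right-hand sides to within $\mathcal{O}(1)$ bits your argument goes through unchanged.
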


\begin{proof}
Following the analysis of the MIMO IC in \cite{Sanjay_Varanasi_Cap_MIMO_IC_const_gap}, it can be easily shown that an achievable rate region of the MIMO MAC
is given as
\begin{IEEEeqnarray*}{rl}
\mathcal{R}_A=\Big\{(R_1,R_2): R_1 \leq & \; \log \det \left(I_{N}+\rho HH^{\dagger}\right)-N\log(M_1); \\
 R_2  \leq & \; \log \det \left(I_{N}+\rho^{\alpha} GG^{\dagger}\right)-N \log(M_2);\\
R_1 +R_2 \leq & \; \log \det \left(I_{N}+\rho HH^{\dagger}+\rho^{\alpha} GG^{\dagger}\right)-N\log(\max\{M_1,M_2\}) \Big\},
\end{IEEEeqnarray*}
and an upper bound is given as
\begin{IEEEeqnarray*}{rl}
\mathcal{R}^U=\Big\{(R_1,R_2): R_1 \leq & \; \log \det \left(I_{N}+\rho HH^{\dagger}\right); \\
R_2\leq & \; \log \det \left(I_{N}+\rho^{\alpha} GG^{\dagger}\right);\\
R_1+R_2 \leq & \; \log \det \left(I_{N}+\rho HH^{\dagger}+\rho^{\alpha} GG^{\dagger}\right) \Big\}.
\end{IEEEeqnarray*}
Note that the two regions differ only by constant (independent of SNR) number of bits. The desired result now follows by replacing $\mathcal{C}_{MAC}$ in the definition of the GDoF region by $\mathcal{R}^U$ or $\mathcal{R}_A$, since a constant number of bits are insignificant in the GDoF analysis.
\end{proof}

\begin{rem}
The GDoF regions for the case when $N\geq (M_1+M_2)$ is depicted in Fig. \ref{fig_gdof_MACa} and the case when $\max\{M_1,M_2\}<N< (M_1+M_2)$ is depicted in Fig. \ref{fig_gdof_MACb}, where $A=(M_1,(N-M_1)\alpha)$, $B=((N-M_2)\alpha+M_1(1-\alpha),M_2\alpha)$, $A^{'}=(M_1,(N-M_1))$, $B^{'}=((N-M_2),M_2)$, $A^{''}=(M_1,(N-M_1)+M_2(\alpha-1))$ and $B^{''}=((N-M_2),M_2\alpha)$. Although, the GDoF analysis reveals the possibility of achieving a larger sum DoFs when one of the link's strength is exponentially larger that the other ($\alpha>1$), it is not as interesting as the MIMO IC since the GDoF region of the MAC can be achieved using independent Gaussian codes with scaled identity input covariances at each transmitter and joint decoding just as in a MAC with $\alpha=1$. In other words, this DoF-optimal scheme is also GDoF-optimal.

\begin{figure}[htp]
  \begin{center}
    \subfigure[$N\geq (M_1+M_2)$.]{\label{fig_gdof_MACa}\includegraphics[scale=.3]{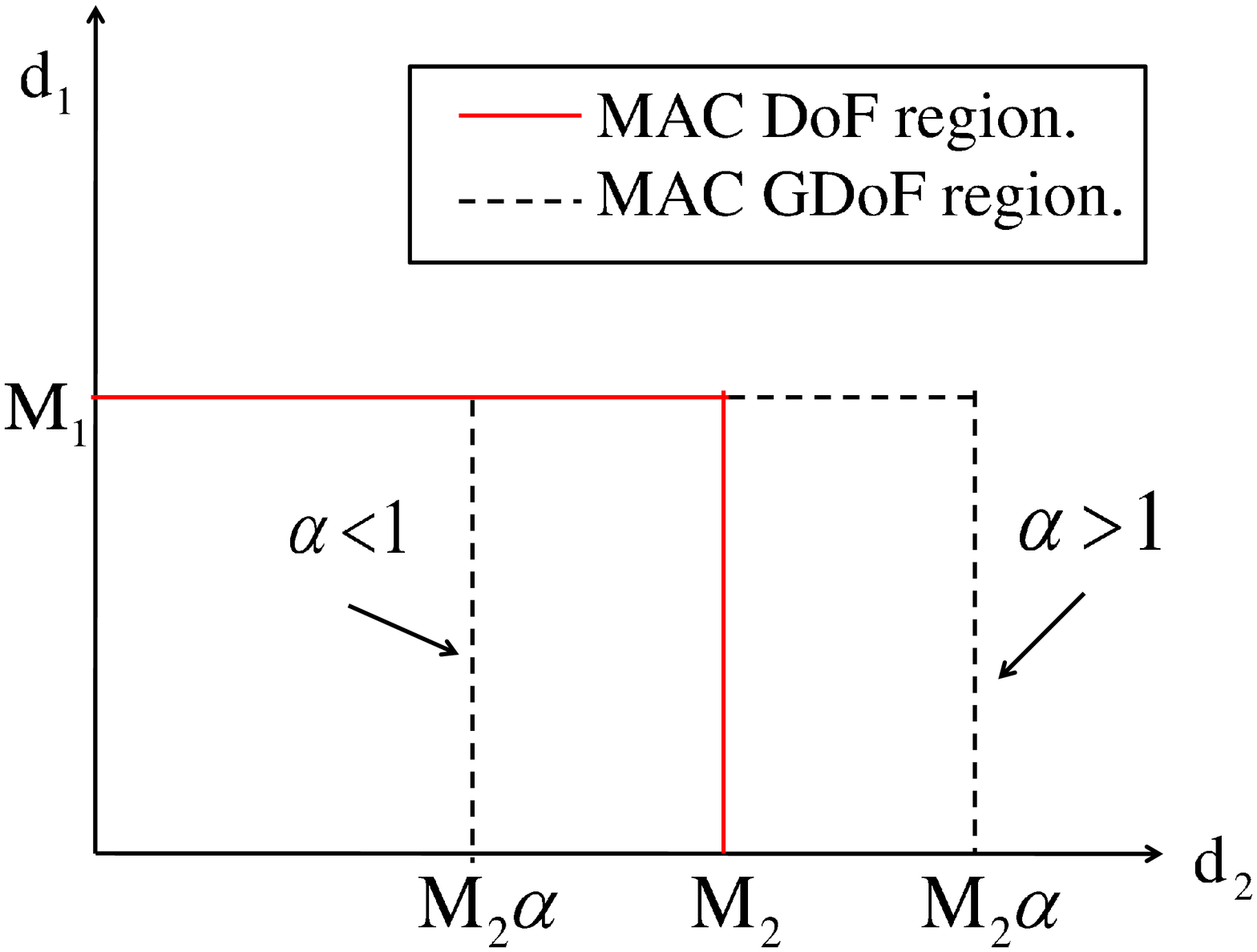}}
    \subfigure[$\max\{M_1,M_2\}<N< (M_1+M_2)$.]{\label{fig_gdof_MACb}\includegraphics[scale=0.3]{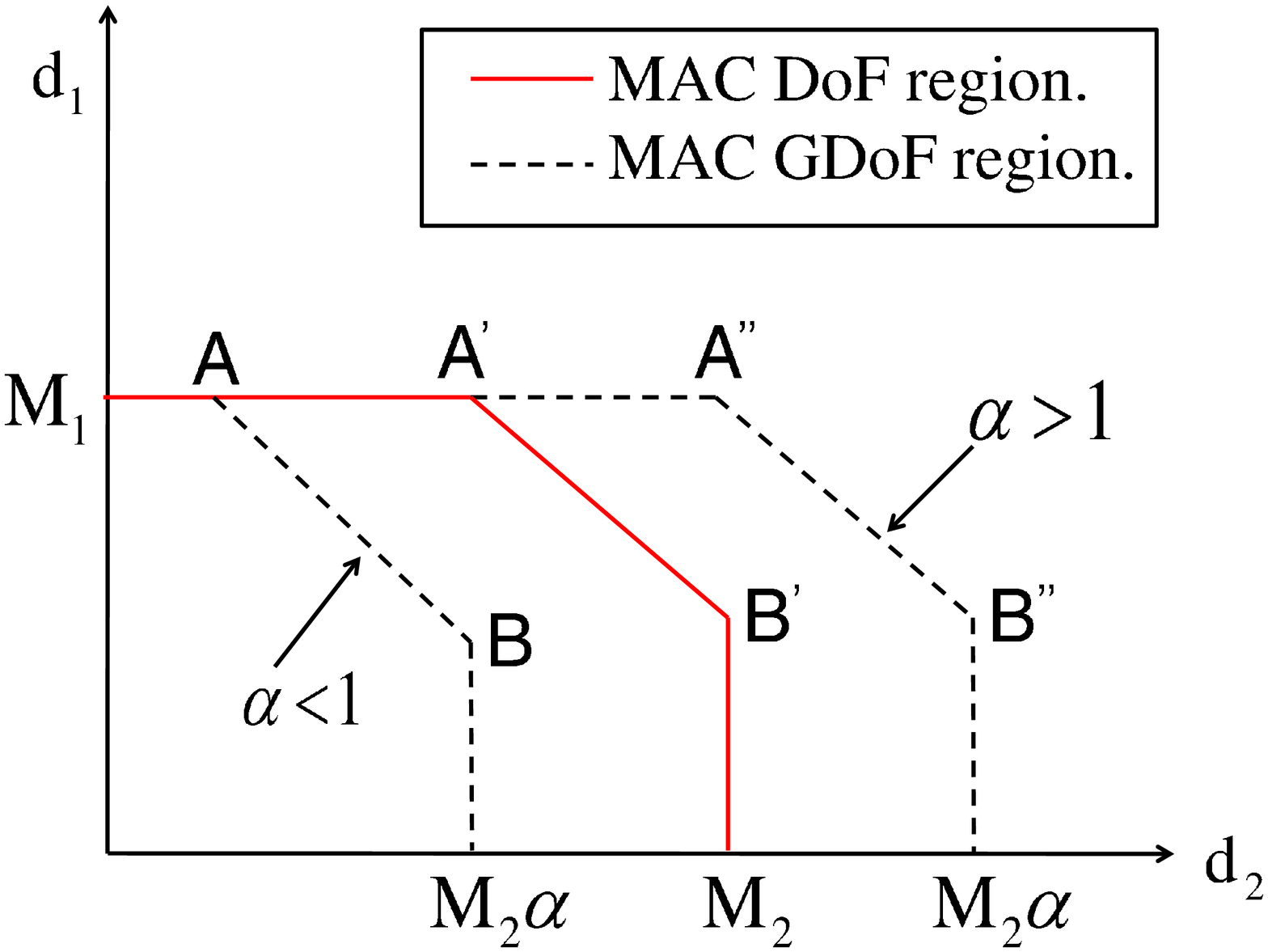}}
  \end{center}
\caption{The GDoF region of the MIMO MAC. }
\label{fig:GDoF-MIMO-MAC}
\end{figure}

\end{rem}

\section{Proof of Equivalent GDoF region}
\label{app:lem-dof-split}

In the HK coding scheme~\cite{Han_Kobayashi}, each user's message is divided into two parts called the private ($U_i$) and public ($W_i$) messages with rate $S_i$ and $T_i$, respectively. It was proved in \cite{Han_Kobayashi} that for any given probability distribution P(.) which factors as

\begin{IEEEeqnarray}{rl}
P(Q,W_1,U_1,W_2,U_2,X_1,X_2)=P(Q)P(U_1|Q)P(W_1|Q)&P(U_2|Q)P(W_2|Q)\nonumber\\
\label{eq:distribution-for-original-HK-scheme}
&P(X_1|U_1,U_2,Q)P(X_2|U_2,W_2,Q),
\end{IEEEeqnarray}
the rate region $\mathcal{R}^o_{\textrm{HK}}(P)=\mathcal{R}_{\textrm{HK}}^{(o,1)}(P)\cap \mathcal{R}_{\textrm{HK}}^{(o,2)}(P)$ is achievable where
\begin{IEEEeqnarray*}{rl}
\mathcal{R}_{\textrm{HK}}^{(o,i)}(P)=\Big\{(S_1,T_1,S_2,T_2):~S_i\leq & I(U_i;Y_i|W_i,W_j,Q);\\
T_i\leq & I(W_i;Y_i|U_i,W_j,Q);\\
T_j\leq & I(W_j;Y_i|W_i,U_i,Q);\\
(S_i+T_i)\leq & I(U_iW_i;Y_i|W_j,Q);\\
(S_i+T_j)\leq & I(U_iW_j;Y_i|W_i,Q);\\
(T_i+T_j)\leq & I(W_iW_j;Y_i|U_i,Q);\\
(S_i+T_i+T_j)\leq & I(U_iW_iW_j;Y_i|Q);\Big\}
\end{IEEEeqnarray*}
for $i\neq j\in\{1,2\}$. Let the 2-dimensional projection of the set $\mathcal{R}^o_{\textrm{HK}}(P)$ be denoted by $\Pi\left(\mathcal{R}^o_{\textrm{HK}}\right)$, which is defined as follows
\begin{equation*}
    \Pi\left(\mathcal{R}^o_{\textrm{HK}}(P)\right)=\{(0\leq R_1\leq (S_1+T_1),0\leq R_2\leq (S_2+T_2)): (S_1,T_1,S_2,T_2)\in \mathcal{R}_{\textrm{HK}}^o(P)\}.
\end{equation*}
Clearly, if $(R_1,R_2)\in \Pi\left(\mathcal{R}^o_{\textrm{HK}}(P)\right)$ then there exists a 4-tuple $(S_1,T_1,S_2,T_2)\in \mathcal{R}_{\textrm{HK}}^o(P)$ such that $(S_i+T_i)=R_i$ for $i=1,2$, and vice versa. This is true for any distribution satisfying \eqref{eq:distribution-for-original-HK-scheme}. Using the Fourier-Motzkin elimination method, a compact formula for the rate region $\Pi\left(\mathcal{R}^o_{\textrm{HK}}(P)\right)$ was recently derived in Lemma~1 of \cite{CMG}, which when evaluated for the input distributions specified in Section \ref{def_coding_scheme}, results in an achievable rate region containing the rate region given in Lemma~\ref{lem_achievable_region} (See Theorem~2 of \cite{Sanjay_Varanasi_Cap_MIMO_IC_const_gap}). Let us denote the rate region $\mathcal{R}^o_{\textrm{HK}}(P)$ by $\mathcal{R}_{\textrm{HK}}^G$, when $P$ is same as the distributions specified in Section \ref{def_coding_scheme}. Using the technique in the proof of Lemma~\ref{lem_achievable_region} (given in \cite{Sanjay_Varanasi_Cap_MIMO_IC_const_gap}) it then follows that $\mathcal{R}_{\textrm{HK}}^G=\mathcal{R}_{\textrm{HK}}^{G1}\cap \mathcal{R}_{\textrm{HK}}^{G2}$, where
\begin{IEEEeqnarray*}{rl}
\mathcal{R}_{\textrm{HK}}^{Gi}=\Big\{(S_i,T_i,T_{j}):~S_i\leq & \log \det \left(I_{N_i}+\rho_{ii} H_{ii}K_{iu}H_{ii}^{\dagger}+\rho_{ji}H_{ji}K_{ju}H_{ji}^\dagger\right)-\tau_{ji};\\
T_i\leq & \log \det \left(I_{N_i}+\rho_{ii} H_{ii}K_{iw}H_{ii}^{\dagger}+\rho_{ji}H_{ji}K_{ju}H_{ji}^\dagger\right)-\tau_{ji};\\
T_j\leq & \log \det \left(I_{N_i}+\frac{\rho_{ji}}{M_j} H_{ji}H_{ji}^{\dagger}\right)-\tau_{ji};\\
(S_i+T_i)\leq & \log \det \left(I_{N_i}+\frac{\rho_{ii}}{M_i} H_{ii}H_{ii}^{\dagger}+\rho_{ji}H_{ji}K_{ju}H_{ji}^\dagger\right)-\tau_{ji};\\
(S_i+T_j)\leq & \log \det \left(I_{N_i}+ \frac{\rho_{ji}}{M_j} H_{ji}H_{ji}^{\dagger}+  \rho_{ii} H_{ii} K_{iu} H_{ii}^{\dagger}\right)-\tau_{ji} \\
(T_i+T_j)\leq & \log \det \left(I_{N_i}+\frac{\rho_{ji}}{M_j} H_{ji}H_{ji}^{\dagger}+\rho_{ii} H_{ii}K_{iw}H_{ii}^{\dagger}\right)-\tau_{ji};\\
(S_i+T_i+T_j)\leq & \log \det \left(I_{N_i}+\frac{\rho_{ji}}{M_j} H_{ji}H_{ji}^{\dagger}+\frac{\rho_{ii}}{M_i} H_{ii}H_{ii}^{\dagger}\right)-\tau_{ji};\Big\}
\end{IEEEeqnarray*}
for $i\neq j\in\{1,2\}$ and $\tau_{ji}$'s for $1\leq i\neq j\leq 2$ are constants independent of $\rho$ or channel matrices. The GDoF region corresponding to the above achievable rate region can be defined as follows
\begin{IEEEeqnarray*}{rl}
\mathcal{G}_i(\bar{M},\bar{\alpha})=\Big\{(d_{1p},d_{1c}, d_{2p},d_{2c}):d_{ip}=\lim_{\rho_{ii} \to \infty}\frac{S_i}{\log(\rho_{ii})}, & d_{ic}=\lim_{\rho_{ii} \to \infty}\frac{T_i}{\log(\rho_{ii})}~\textrm{for}~i=1,2,\nonumber\\
&\textrm{and}~ (S_1,T_1,S_2,T_2)\in \mathcal{R}_{\textrm{HK}}^{Gi}\Big\}.
\end{IEEEeqnarray*}
Using this definition, and following a similar approach as in Theorem~\ref{thm_mainresult}, we get equation~\eqref{eq:GDoF-sub-region}. 

From the above analysis on one hand, we have the achievable rate region $\mathcal{R}_{\textrm{HK}}^G$ for the Gaussian IC, which is $\mathcal{R}^{o}_{\textrm{HK}}(P)$ evaluated for the distribution of Subsection \ref{def_coding_scheme}. on the other hand, we have $\mathcal{R}_a(\mathcal{H},\bar{\alpha})$, which is a {\it subset} of the rate region obtained when $\Pi\left(\mathcal{R}_{\textrm{HK}}^{o}(P)\right)$ is evaluated at the distribution in Subsection \ref{def_coding_scheme}. This two facts together imply that
\begin{equation}\label{}
    \mathcal{R}_a(\mathcal{H},\bar{\alpha})\subseteq \Pi\left(\mathcal{R}_{\textrm{HK}}^G\right),
\end{equation}
i.e., for any rate pair $(R_1,R_2)\in \mathcal{R}_a(\mathcal{H},\bar{\alpha})$ there exists a 4-tuple $(S_1,T_1,S_2,T_2)\in \mathcal{R}_{\textrm{\textrm{HK}}}^G$ such that $(S_i+T_i)=R_i$ for $i=1,2$. In other words, $\mathcal{R}_a(\mathcal{H},\bar{\alpha})$ is a subset of the 2-dimensional projection of the set $\mathcal{R}_{\textrm{HK}}^G$. Since $\mathcal{G}(\bar{M},\bar{\alpha})$ and $ \mathcal{D}_o(\bar{M},\bar{\alpha})$ are the high SNR scaled versions of the rate regions $\mathcal{R}_{\textrm{\textrm{HK}}}^G$ and $\mathcal{R}_a(\mathcal{H},\bar{\alpha})$, respectively the same is true for them. That is $ \mathcal{D}_o(\bar{M},\bar{\alpha})$ is a subset of the 2-dimensional projection of the set $\mathcal{G}(\bar{M},\bar{\alpha})$ or for every $(d_1,d_2)\in \mathcal{D}_o(\bar{M},\bar{\alpha})$, there exists a 4-tuple $(d_{1p},d_{1c},d_{2p},d_{2c})\in \mathcal{G}(\bar{M},\bar{\alpha})$ such that $(d_{ip}+d_{ic})=d_i$ for $i=1,2$.

\bibliographystyle{IEEEtran}
\bibliography{mybibliography}
\end{document}